\newcommand{\E}{\mathrm E}
\newcommand{\G}{\mathcal G}
\renewcommand{\P}{\mathrm P}
\newcommand{\Path}{\mathcal P}
\newcommand{\Ps}{\mathcal P_{s}}
\newcommand{\R}{\mathbb R}
\newcommand{\eps}{\varepsilon}
\newtheorem{definition}{Definition}
\newtheorem{lemma}{Lemma}
\newtheorem{theorem}{Theorem}
\newtheorem{proposition}{Proposition}
\newcommand{\var}{\mbox{var}}
\newcommand{\argmin}{\mbox{argmin}}
\newcommand{\argmax}{\mbox{argmax}}
\newcommand{\defeq}{\stackrel{{\rm def}}{=}}
\renewcommand{\hat}{\widehat}
\renewcommand{\tilde}{\widetilde}
\def\BState{\State\hskip-\ALG@thistlm}
\begin{document}
%\begin{frontmatter}
\title{\Large \bf Observer Placement for Source Localization: \\ The Effect of Budgets and
Transmission Variance} 

\author{Brunella Spinelli, L.~Elisa Celis, Patrick Thiran}

%\author[mymainaddress]{Brunella Spinelli\corref{mycorrespondingauthor}}
%\cortext[mycorrespondingauthor]{Corresponding author}
%\ead{brunella.spinelli@epfl.ch}
%
%\author[mymainaddress]{Elisa Celis}
%\ead{elisa.celis@epfl.ch}
%
%\author[mymainaddress]{Patrick Thiran}
%\ead{patrick.thiran@epfl.ch}
%
%\address[mymainaddress]{\'{E}cole Polytechnique F\'{e}d\'{e}rale de Lausanne,
%Switzerland}

\maketitle

\begin{abstract}

When an epidemic spreads in a network, a key question is where was its \emph{source}, i.e., the node that started the epidemic. 
If we know the time at which various nodes were infected, we can attempt to use this information in order to identify the source. 
However, maintaining \emph{observer} nodes that can provide their infection time may be costly, and we may have a \emph{budget} $k$ on the number of observer nodes we can maintain. 
Moreover, some nodes are more informative than others due to their location in the network. 
Hence, a pertinent question arises: \emph{Which nodes should we select as observers in order to maximize the probability that we can accurately identify the source?}

Inspired by the simple setting in which the node-to-node delays in the transmission of the epidemic are deterministic, we develop a principled approach for addressing the problem even when transmission delays are random.
We show that the optimal observer-placement differs depending on the \emph{variance} of the transmission delays and propose approaches in both low- and high-variance settings.
We validate our methods by comparing them against state-of-the-art observer-placements and show that, in both settings, our approach identifies the source with higher accuracy.
%We study a setting where the only information available are the \emph{infection times} of a small subset of nodes, called \emph{observers}.
\end{abstract}

%\keywords{Network Epidemics, Inverse Problems, Approximation Algorithms}

%\end{frontmatter}

\section{Introduction}

Regardless of whether a network comprises computers, individuals or
cities, in many applications we want to detect whenever any anomalous or
malicious activity spreads across the network and, in particular, where the activity
originated.\footnote{In effect, we wish to answer questions such as \emph{what was the origin of a worm in a computer network?}, \emph{who was the instigator of a false rumor in a social
network?} and \emph{can we identify patient zero of a virulent disease?}  }
We call the spread of such activity an \emph{epidemic} and the originator the \emph{source}.

Clearly, monitoring all nodes is not feasible due to cost and overhead constraints: The number of nodes in the network may be prohibitively large
and some of them may be unable or unwilling to provide information about their state. 
Thus, studies have focused on how to estimate the source based on information from a few nodes (called \emph{observers}). 
Given a set of observers, many models and estimators for source localization have been developed \cite{Pinto12, Louni14, Zhang2016}. 
%
%However, the \emph{selection} of observer nodes becomes a crucial step in order to answer the questions above \cite{celis15, Zhang2016}.
However, the \emph{selection} of observers has not yet received a satisfactory answer: 
Most of state-of-the-art methods are based on common centrality heuristics (e.g., degree- or betweenness-centrality) or on more advanced heuristic approaches that do not directly optimize source localization (see \cite{Zhang2016} for a survey) or are limited to simple networks such as trees (e.g., \cite{celis15}). 
%
%In this paper we take a principled approach in
Moreover, such methods consider only the structure of the network when placing observers.
However, depending on the particular epidemic, %inter-contact times and incubation periods are distributed differently; % and are subject to variances of very different magnitude. 
the expected transmission delay between two nodes, and its variance, can differ widely. %different distributions of the time between the infection of a node and the infection of the neighboring nodes that this node can potentially infect. 
%
%Led by the intuition that an optimal set of observers should also account for these characteristics of the epidemic spread, 
We show that different transmission models require different observer placements:
This is illustrated in Figure \ref{fig:scheme}: As the variance of the transmission delays changes, the optimal set of observers also changes (see also Figure \ref{fig:ex_transition} for a concrete example).

% and we show that the optimal observers' choice change with the
%amount of noise in transmission delays.
\begin{figure}[H]
\begin{center}
{\includegraphics[width=0.6\columnwidth]{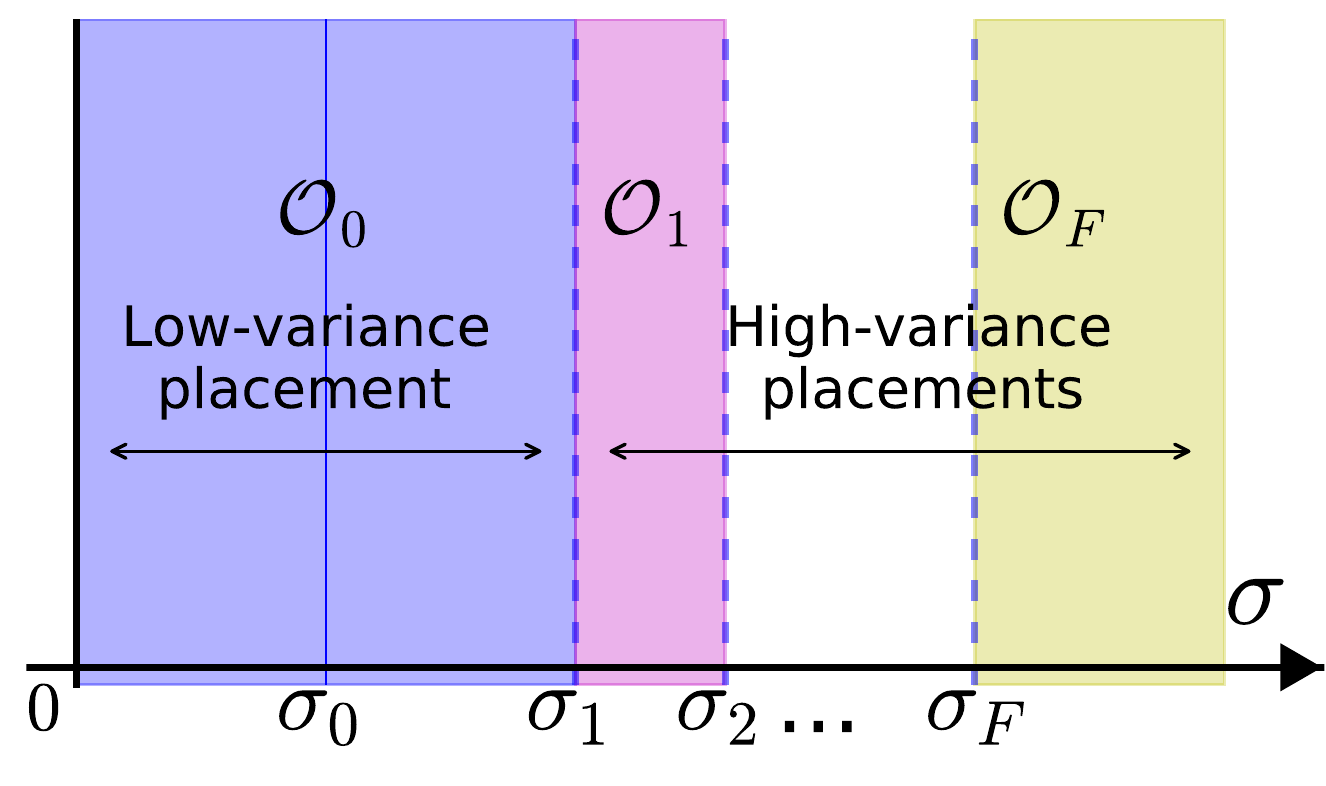}}
\caption{Transmission variance $\sigma$ and optimal observer placement. For ${\sigma \in (0, \sigma_0)}$ the transmission delays are effectively deterministic. For ${\sigma \in (\sigma_0, \sigma_1)}$ the variance $\sigma$ affects the accuracy of source localization but the optimal observer placement is still $\mathcal{O}_0$. For larger $\sigma$, the optimal observer placement may change, possibly multiple times ($\mathcal{O}_k$ denotes the optimal placement for ${\sigma \in (\sigma_k, \sigma_{k+1})}$) up to $\sigma=\sigma_F$. For $\sigma > \sigma_F$ the optimal placement remains the same ($\mathcal{O}_F$).}\label{fig:scheme}
\end{center}
\end{figure}

The difficulties faced in finding the optimal observers are two-fold. First, computing the likelihood of a node being the source conditional on the available observations can be computationally prohibitive \cite{Shah, Pinto12}; evaluating the probability of detection given a set of observers is, in general, even harder. Second, the optimal selection of a limited number of observers is NP-hard, even when the transmission times are deterministic. 
We take a principled approach that begins with considering deterministic transition delays, and build on this intuition in order to develop heuristics for both the low-variance and high-variance regimes.

% (see Section \ref{sec:low-variance}).
%We bypass the first difficulty for a \emph{low-variance} transmission model ($\sigma \leq \sigma_1$). In this setting, the probability of detection becomes
%tractable, however the problem maintains the interesting aspects of a transmission on a network.
% ELISA: The above sentence is rather vague -- what are the "interesting aspects"???
%Using the intuition developed in the low-variance model, we approach the
%\emph{high-variance} setting ($\sigma > \sigma_1)$. Here, the probability of detection
%remains nontrivial. 

%For the second challenge, finding the optimal set of $k$ observers is
%NP-hard for all variances $\sigma$. Hence, we develop heuristics 
% ELISA: nothing more can be said here? above we were saying "principled approach" can expand on that?
%and empirically show that they significantly outperform the state of the art by
%evaluating their performance on three real-world networks.

\subsection{\bf Model and Problem Statement}% $\;$ \\

\textbf{Our Transmission Model.} We assume that the contact network $\G = (V,E, w)$ is known and is \emph{weighted}.
The weight $w_{uv} \in \mathbb R_{+}$ of edge $uv \in E$ is the mean of the
\emph{transmission delay} encoded by the random variable $X_{uv}$; this is the
time it would take for $u$ to infect $v$.\footnote{For ease of presentation we
assume the graph is undirected and $w_{uv}=w_{vu}$; however our definitions and approach
extend straightforwardly to the directed case.}
This transmission model is both natural and versatile as it comprises deterministic
transmissions (i.e., if $X_{uv}=w_{uv} \in \mathbb{R}_{+}$ a.s. for all edges $uv \in E$), which we call \emph{zero-variance}, and arbitrary \emph{random} independent transmission models. It naturally captures the SI epidemic
model adopted, e.g., in~\cite{Pinto12, luo2012} and related SIR/SIS/SEIR 
models (see~\cite{Krishnasamy2014} and the discussion in~\cite{Zhang2016}). 
%We primarily consider truncated Gaussian random variables $X_{uv}$ in our experiments (although our approaches naturally extend to alternate transmission models). 
We study, in particular, how the \emph{amount} of randomness (i.e., the variance of $X_{uv}$) in
the transmission delays affects the choice of observers for source
localization. Towards this, we
are the first to separately analyze two different regimes for the amount of 
randomness in transmission delays: \emph{low-variance} and
\emph{high-variance}. A dichotomy exists between the two, and our
approach for observer placement differs. 

%For the former we show that,
%for sufficiently low variance, as the budget $k$ increases we approach an
%approximately optimal observer placement.  For the latter we take the variance
%into account in our calculation in order to optimize the observer placement.
%\Brunella{what about "we take the variance into account in our calculation?"}

\textbf{Our Source Estimation.} We assume that there is a single source that initiates the epidemic\footnote{
%The problem of minimizing the number of observers required to (perfectly) identify the source in the presence of \emph when there are multiple sources has recently
%being proposed
Our results can be extended to the case of multiple sources following the recent work by~\cite{Zhang-res} on a related problem. } 
and let $\mathcal{O} \subseteq V$ (which we will select) be the set of {observer} nodes. We assume we know the time at which each observer is infected, and refer to this vector of {infection times} as $T_{\mathcal{O}}$.
This is a standard (see, e.g.,~\cite{Netrapalli2011}) and realistic assumption (for example, clinics keep records of patients and carefully record outbreaks so can provide such information). To identify the source, we use this (and only this) information. 

We use maximum likelihood estimation (MLE) to produce an estimate 
$\hat s$ of the true unknown source $s^*$ as in ~\cite{Pinto12},\footnote{This approach is common 
(see e.g.,~\cite{Shah, dong2013}), although the exact form of the estimator
depends on the model and assumptions.}% (sometimes referred to as MAP estimation).}
% The ML source estimator has a simple explicit form for Gaussian transmission
%delays~\cite{Pinto12} that can be used, as approximation, for other families of
%symmetric random variables.}
 i.e., 
\begin{equation*}
\hat{s}\in \argmax_{\substack{s \in V}} \P(T_{\mathcal{O}}|s^*=s)\P(s^*=s).
\end{equation*}
We assume the prior on $s^*$ is uniform unless otherwise specified (i.e., $\P(s^*=s)=1/n$ for all nodes $s \in V$ where $n = |V|$). 
%We assume that, in the absence of this information (as in the case of our
%empirical results) a uniform prior, i.e., $\P(s^*=s)=1/n$, $n$ being the size of the network, for every node $s\in V$. 
%\Brunella{It just seems that the prior in the formula does not fit well with the next sentence that begins with "As we have no a-priori...", so I would remove the $\mathbb P(s^*=s)$ or maybe change the beginnig of the next sentence?}
%This assumption can be easily relaxed given a different prior by using the appropriate form for the maximum likelihood estimation.

%\Brunella{The footnote may be now too long.}

%\Brunella{for the estimation of the
%source this is OK, for the observer placement in the low noise case computing
%$Q([s])$ has the same running time than computing $|[s]|$ so OK, for the high noise case
%there does not seem to be a natural way to include the information about the
%prior in the placement}. 

\textbf{Our Observer Placement.}
We assume that we are given a \emph{budget} $k$ on the number of observers we can
use, and that we must select our observers \emph{once and for all}. 
In order to select the \emph{best set of
observers $\mathcal{O}$ of size $k$} we must first define our metric of interest. We
consider the two metrics proposed by~\cite{celis15}, although variations
(including worst-case versions) exist~\cite{celis15}: 
\begin{enumerate} \setlength\itemsep{0em}
\item the \emph{success probability} $\Ps=\P(\hat{s} = {s^*})$, and 
\item the \emph{expected distance} between estimated source and real source, i.e.,
$\E[d(s^*, \hat{s})]$ with $d$ denoting the distance %(in hops or weighted)
between two nodes in the network.
\end{enumerate} 
\noindent The two metrics might require different sets of observers~\cite{celis15},
however we show experimentally that maximizing $\Ps$ is a good proxy for
minimizing $\E[d(s^*, \hat{s})]$ (see Section~\ref{sec:low-variance}). Hence,
due to space constraints, we focus on the minimization of the former.

\subsection{\bf Main Contributions}% $\;$ \\

\textbf{Low-Variance Regime.} When the variance in the transmission delays is
\emph{low} (see Section~\ref{sec:low-variance}), 
we prove that the set of optimal observers is equal to the optimal set for the 
zero-variance regime. In the zero- and low- variance regime, both the probability of
success $\Ps$ and the expected distance $\E[d(s^*, \hat{s})]$ can be explicitly computed. 
% -- the difficulty arises exclusively from the network topology. 
%  in particular, observer placement reduces to  selecting nodes that can best distinguish between all other nodes in the network based on the set of the distances from
%these nodes to all other nodes. 
Despite this seeming simplicity, the problem remains NP-hard. We tackle the
problem by using its connection with the well-studied
related Double Resolving Set (\emph{DRS}) problem~\cite{Caceres07} that minimizes the number of
observers for perfect detection.\footnote{This minimum number is, in many
cases, still prohibitively large, and can be as much as $n-1$, hence we
cannot use this approach directly.} 
From this connection we find inspiration for our algorithm that, by selecting one observer at a time until the budget is exhausted in order to reach a \emph{DRS} set, greedily
improves $\Ps$. 

%Moreover, we show that our algorithm achieves a good control also of the
%expected distance between the real and the estimated source.

\textbf{High-Variance Regime.} When the noise in the transmission delays is \emph{high} (see Section~\ref{sec:high-variance}), it is no longer negligible and it poses an additional challenge to source localization; in effect, the
accumulation of noise from
node to node as the epidemic spreads might no longer enable us to distinguish
between two potential sources, especially when they are both \emph{far} from all
observers. Hence, we must \emph{strengthen} the requirements for observer
placement in order to ensure that the nodes can be distinguished by observers that
are \emph{near} to them; this nearness is a function of the noise, the budget $k$, 
and the network topology. We define a novel objective function that both maximizes the
success probability and imposes a \emph{uniform} spread of observers in the
network. Taking inspiration from the low-variance regime, we design an algorithm
that greedily maximizes this new objective (see Section~\ref{sec:high-variance}). %\\

\begin{figure}[H]
\begin{center}
\subfigure[]{\includegraphics[width=0.48\columnwidth]{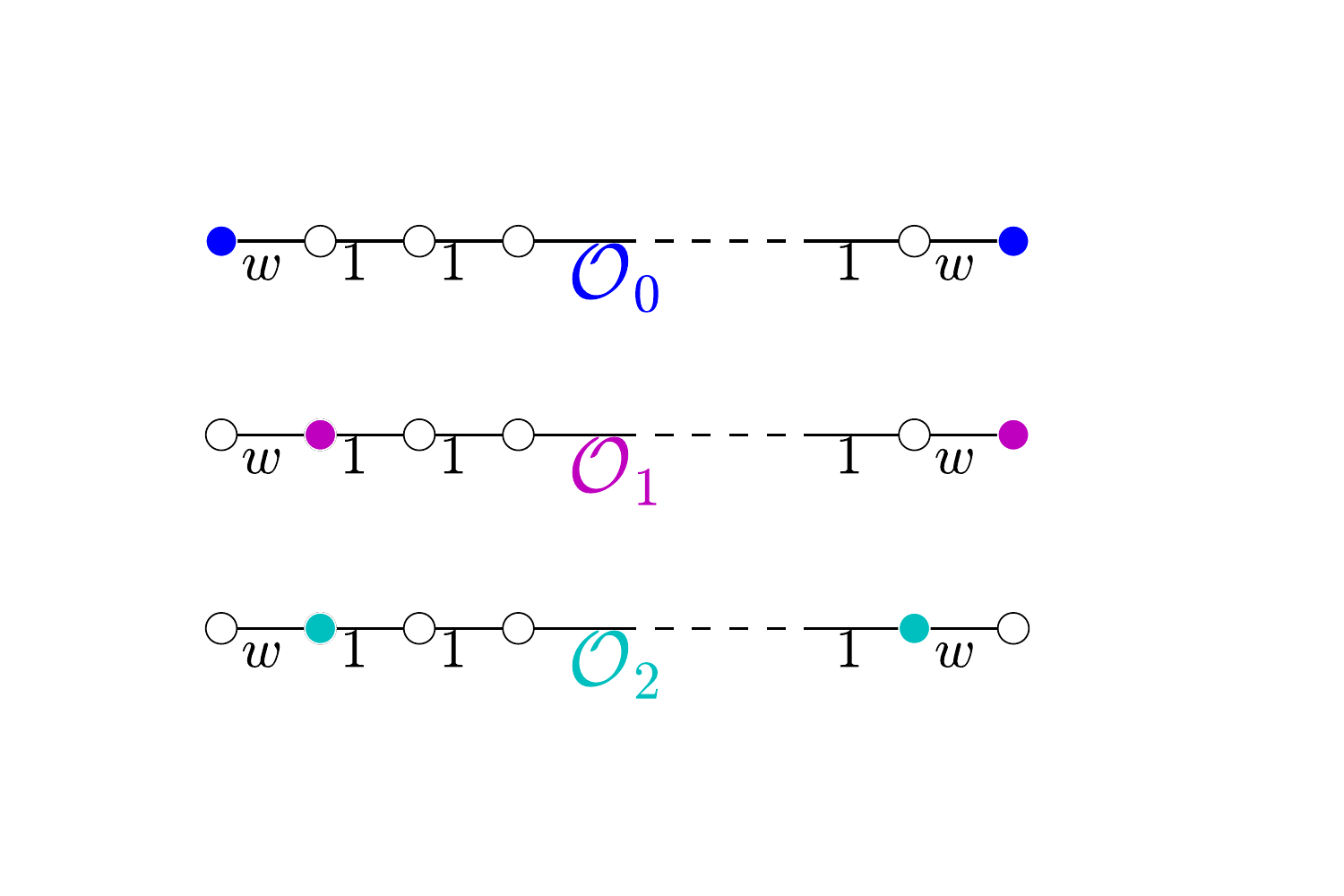}}
\subfigure[]{\includegraphics[width=0.48\columnwidth]{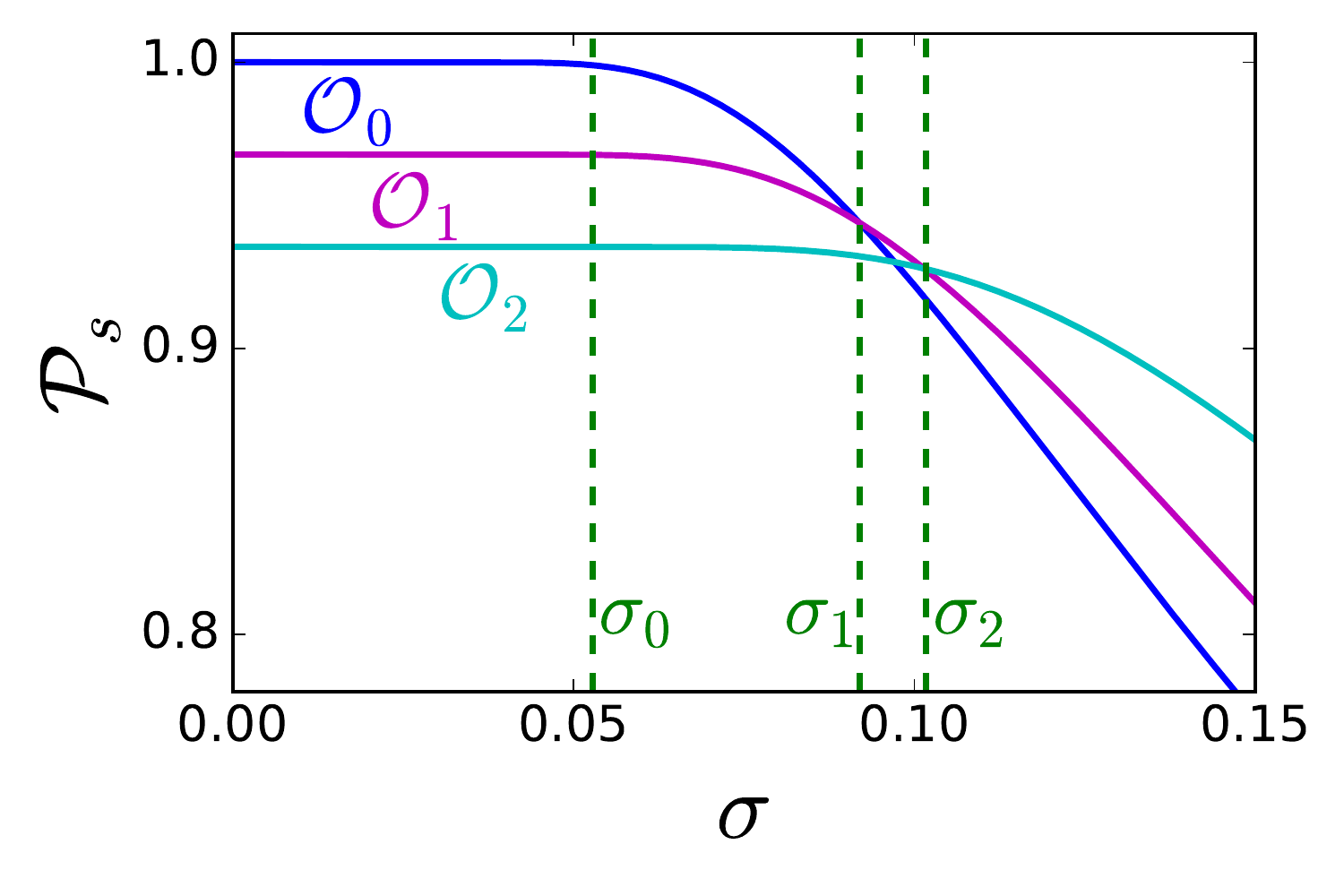}}
\caption{Optimal observers for Gaussian transmission delays with variance $\sigma^2$. (a): different observer placements; (b): their performance in terms of probability of success ($\mathcal{P}_s$) for $w=20$ and $30$ edges.}\label{fig:ex_transition}
\end{center}
\end{figure}

\textbf{Empirical Results.} 
Our methods perform favourably against state-of-the-art
approaches in both the low- and high-variance regimes (see
Section~\ref{sec:exp_bench}). In Appendix~\ref{app:alternate-obj}, for the low-variance regime, we further compare them against two other natural objective functions; we show that our approach is the best.
Moreover, in the empirical results the dichotomy between the low- and high-variance regimes becomes apparent.

%Brunella: cut for camera-ready
%In Section~\ref{sec:exp_main}, we evaluate our algorithms on three different real-world datasets that we selected to
%represent different application areas for source localization and different network topologies. 
%First, we take a
%community of people living in the proximity of a university
%campus~\cite{aharony2011}, a typical network for the transmission
%of airborne diseases. Second, we take a community of students exchanging
%messages over a Facebook-like social network~\cite{opsahl2009}, an example of an online social 
%network through which ideas and trends can propagate. Finally, we consider the
%road network of the state of California~\cite{cal_data};
%%: road networks are relevant to the study of epidemics because they largely explain the transmission for example of the seasonal flu and they are 
%this captures geographical networks that can model the transmission of a disease between connected communities or the diffusion of contaminants, e.g., through a
%hydrological network. 
 
%\Elisa{our approach is more principled which is maybe getting lost in the exposition.}

%Experimental results suggest that greedily maximizing the modified objective function leads to an improved performance with respect to the algorithm proposed for the low-variance regime.

\section{Related Work}\label{sec:related_work}

The problem of source localization has been widely studied in recent
years, we survey the works that are more relevant to ours and refer the reader to the survey by Jiang et. al.~\cite{jiang-survey} for a more complete review of the different approaches to source localization. 

\textbf{Transmission delays.} Many different transmission models for epidemics have been studied~\cite{Lelarge2009} and also considered in the context of source
localization. Although discrete-time transmission
delays are common~\cite{luo2013, prakash, Altarelli2014}, in order to better approximate realistic settings, much work (including ours) adopt continuous-time
transmission models with 
varying distributions for the transmission delays; e.g., exponential~\cite{Shah, luo2012} (primarily to capture percolation models such as
SI/SIS/SIR) or 
Gaussian~\cite{Pinto12,Louni14, Louni15, Zhang2016} (to capture delays that are
concentrated symmetrically around a value). In the same line of the latter
class of works, we use \emph{truncated} Gaussian variables, which gives us the advantage of ensuring that the model admits only strictly positive infection delays.

\textbf{Source localization.}
Many approaches~\cite{zheng2015, prakash, sundareisan2015hidden}, beginning with the seminal work by Shah and Zaman~\cite{Shah}, 
rely on knowing the state of the \emph{entire} network at a fixed point in
time $t$; this is often called a \emph{complete observation} of the epidemic.
These models use maximum likelihood estimation (MLE) to estimate the source. 
The results of~\cite{Shah} have been extended in many ways, for example in the case of
multiple sources~\cite{luo2012} or to obtain a \emph{local} source estimator~\cite{dong2013}. 
An alternate line of work considers a complete observation of the epidemic, except that the observed states are \emph{noisy}, i.e., potentially inaccurate~\cite{zhu2013, sundareisan2015hidden}.
%For example, a model in which it is not possible to distinguish between nodes that have not been infected and nodes that were infected but since recovered was studied by Zhu et al. \cite{zhu2013} using an alternate form of source estimation. The goal of identifying nodes that present inaccurate state information has also been studied \cite{sundareisan2015hidden} using an MLE approach. 
As assuming the knowledge of the state of all the nodes is often not realistic, \emph{partial observation} settings have also been studied. In such a setting, only a subset of nodes $\mathcal O$
reveal their state. In this line of work, the observers are mainly \emph{given}, either arbitrarily or via a random process, and the problem of \emph{selecting} observers is not addressed. 
%Instead, the focus remains on locating the source given the observations, i.e., if and when each vertex in $\mathcal O$ is infected. Naturally, the approach depend crucially on how the observers are selected. 
For example, when a fraction $x$ of nodes are randomly selected, Lokhov et al.~\cite{lokhov2014} propose an algorithm that relies on the knowledge of the state (S, I or R) of a fraction of the nodes in the graph at a given moment in time. This approach, however, crucially relies on the assumption that the starting time of the epidemic is known, which is often not realistic~\cite{jiang-survey, Pinto12}.
%When the randomness is unbiased\footnote{I.e., the probability of any set of nodes containing an observer grows with the size of the set. },  Zhu et al. \cite{zhu2014} extended \cite{zhu2013} in order to estimate the source. 
When the nodes are independently selected to be observers, 
%with (potentially node-specific) probability $q_u$
an approach to source estimation based on the notion of \emph{Jordan center} was proposed~\cite{luo2013} and has since been used in other work for source estimation, especially with regard to a \emph{game theoretic} version of epidemics~\cite{Fanti2015}. 
This line of work %uses a %discrete-time diffusion model and 
does not assume infection times are known, which we believe is, in many cases, an unnecessary limitation. Indeed by using infection times we can achieve exact source localization in the zero-variance setting with sufficiently many observers~\cite{ChenHW14}, whereas this is not true otherwise. 
%An extension of this approach to epidemic models such as SIR and SIRI was recently given in \cite{Luo2015}. 

\textbf{Observer placement.}
Natural heuristics for observer placement (e.g., using high-degree vertices or optimizing for distance centrality) were first evaluated under the additional assumption that infected nodes know which neighbor infected them~\cite{Pinto12}.
Later, Luoni et al.~\cite{Louni14} proposed, for a similar model, to place the
observers using a Betweenness-Centrality criterion (which we use as a
benchmark, see Section~\ref{sec:exp_bench}), and extended it to noisy observations~\cite{Louni15}.
These and other heuristic approaches for observer placement are evaluated empirically by Seo et al.~\cite{Seo12}; they reach the conclusion that, among the placements they evaluate, the Betweenness-Centrality criterion performs the best. In their work the source is estimated by ranking candidates according to their distance to the set of observers, without using the time at which the observers became infected. Once again, this approach is inherently limited by the fact that it does not make use of the time of infection. 
The problem of \emph{minimizing} the number of observers required to detect the precise source (as opposed to \emph{maximizing} the performance given a \emph{budget} of observers) has been considered in the zero-variance setting. 
For trees, given the time at which the epidemic starts, the minimization problem was solved by Zejnilovic et al.~\cite{Zejn13}. 
Without assuming a tree topology and a known starting time, approximation algorithms have been developed towards this end~\cite{ChenHW14} (still in a zero-variance setting). However, in a network of size $n$, the number of observers required, even if minimized, can be up to $n-1$, hence, a budgeted setting is practically more interesting.
For trees, the budgeted placement of observers was solved by using techniques different from ours~\cite{celis15}. However these techniques heavily rely on the tree structure of the network and do not seem to be extendible to other topologies. 
In a very recent work, Zhang et al.~\cite{Zhang2016} consider selecting a fixed number of observers using several heuristics such as Betweenness-Centrality, Degree-Centrality and Closeness-Centrality and they show that none of these methods are satisfactory. 
They introduce a new heuristic for the choice of observers, called \emph{Coverage-Rate}, which is linked to the total number of nodes neighboring observers, and show that an approximated optimization of this metric yields better performance. Connecting the budgeted placement problem to the un-budgeted minimization problem, we provably outperform their approach in low-variance settings.\footnote{For example, on cycles of odd-length $d$ with a budget $k=2$ in the low-variance setting, any two nodes at distance more than $2$ are equivalent with
respect to the coverage rate, but only optimal if the observers are at distance
$(d-1)/2$; our approach selects this optimal placement.} 
Moreover, the effect of the variance in the transmission delays is neglected by Zhang et al., leaving open the question of whether their approach works in general. However, we consider Coverage-Rate as one of our baselines.

\textbf{Other related work.}
Finally, a closely related problem is that of \emph{outbreak detection}, i.e., detecting the \emph{existence} of a epidemic in a timely manner. In this context, observer placement is a well-studied problem. The optimal solution for timely detection is to place observers at the \emph{k-Median} nodes~\cite{Berry06}; we use this as a  benchmark in Section~\ref{sec:exp_bench}. Furthermore, the optimization of several alternate metrics of interest (e.g., the percentage of infected population at the time of detection) is studied in~\cite{leskovec2007,Krause08}.

\section{The Low-Variance Regime}
\label{sec:low-variance}

In this section, we focus on the low-variance regime. We start by introducing the
setting and the definitions we adopt.

\subsection{Preliminaries}

Let $\G = (V, E)$ be an undirected weighted network. Assuming
$u$ is infected, the weight $w_{uv}$ of edge $uv \in E$ represents the expected
time it takes for $u$ to infect $v$. As the network is undirected, we assume
$w_{uv} = w_{vu}$ for all $uv \in E$.

We assume that the epidemic is initiated by a single unknown source
$s^*$ at an unknown time $t^*$. 
If a node $u$ gets infected at time $t_u$, a
non-infected neighbor $v$ of $u$ will become infected at time $t_v = t_u +
X_{uv}$ where $X_{uv}$ is a random variable with $\E[X_{uv}]=w_{uv}$.

The \emph{time} $t^*$ at which an epidemic starts is unknown. This
adds a significant difficulty to the problem because a \emph{single} observation
is not \emph{per se} informative. 
Instead, we must use the collection of \emph{differences} between observed infection times. If the variance is zero or if it is low compared to edge weights, network distances are
a good proxy for time delays (see Proposition~\ref{Prop:p_err}). We refer to this setting as a \emph{low-variance} regime, as opposed to the \emph{high-variance} regime in which time delays are highly noisy and network distances no longer work as a proxy for time delays.

\textbf{Distance vectors and equivalence between nodes.}
We start with a few definitions. Our setting is similar to~\cite{celis15}.

\begin{definition}[Equivalence]\label{def:equiv}
Let $\mathcal{G}=(V,E)$ and $\mathcal{O} \subseteq V$ with $|\mathcal{O}|=k \geq 2$ be a set of
observers on $\G$. A
node $u$ is said to be equivalent to a node $v$ (which we write $u\sim v$) if and only if, for every $o_i,
o_j \in \mathcal{O}$
\begin{equation}
\label{eq:distinguish}
d(u, o_i) -d(u, o_j) = d(v, o_i) - d(v, o_j).
\end{equation}
where $d(x, y)$ is the (weighted) distance between $x$ and $y$.
\end{definition}

\noindent The relation $\sim$ is reflexive,
symmetric, and transitive, hence it defines an \emph{equivalence relation}. Therefore, a set of observers $\mathcal{O}$ partitions $V$ in
\emph{equivalence classes} (an example is given in Figure~\ref{fig:example_graph}). We denote by $q$
the number of equivalence classes and we let $[u]_\mathcal{O}$ be the class of $u$, i.e., the set of all
nodes that are equivalent to $u$. 
%When $\mathcal{O}$ is clear from the context, we
%simply let $\{E_1, \ldots, E_q\}$ denote the set of equivalence classes.

\begin{figure}
\begin{centering}
{\includegraphics[width=.5\columnwidth]{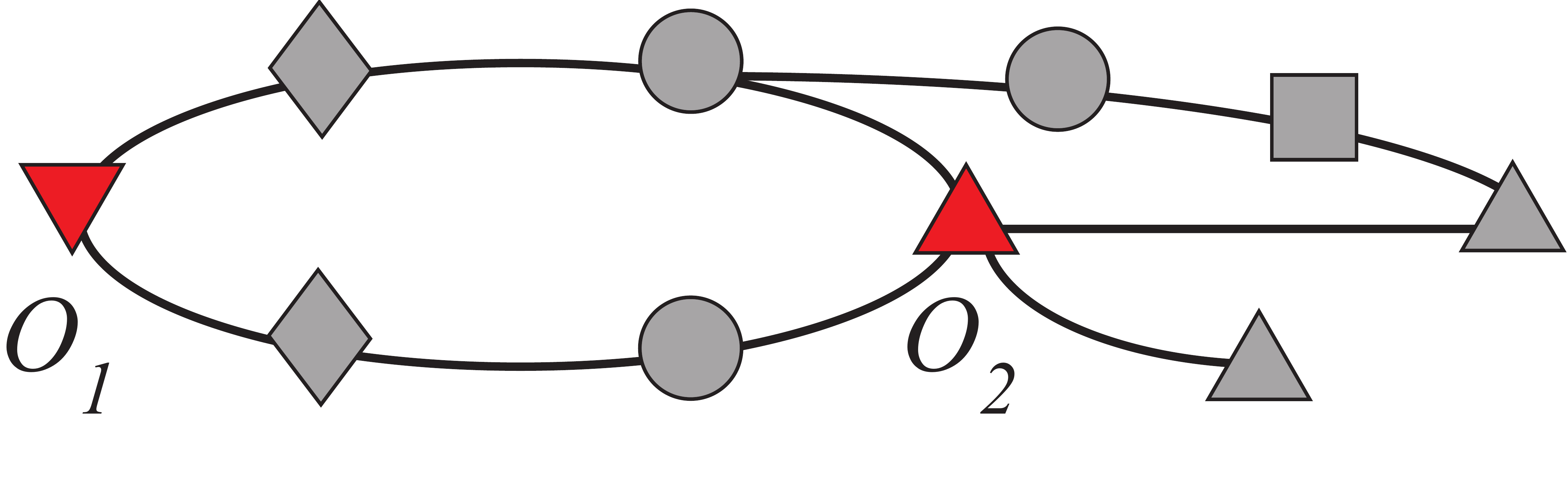}}
\caption{An unweighted network with two observer nodes $o_1$ and $o_2$. Different shapes represent different
equivalence classes, i.e., groups of nodes which are not distinguishable from
the point of view of the observers (solid red). In this example there are $q=5$
equivalence classes.} 
\label{fig:example_graph}
\end{centering}
\end{figure}

When the variance is zero, given an observer set, we can \emph{distinguish} $u$ from $v$ if
Equation~\eqref{eq:distinguish} does \textit{not} hold for $u,v$ and a pair of observers $o_i,o_j$, i.e., if $[u]_\mathcal{O} \neq [v]_\mathcal{O}$.

The problem of finding the minimum-size set of nodes $S$, such that
for every $u,v$ there exist $s_i,s_j \in S$ for which $d(u, s_i) -d(u, s_j) \neq
d(v, s_i) - d(v, s_j)$ is known as the \emph{Double Resolving Set (DRS) Problem}~\cite{Caceres07}. Our problem differs from \emph{DRS} because we focus on the more realistic context in which,
due to limited resources, we want to
allocate a \emph{finite budget} in order to maximize the detection probability (as
opposed to minimizing the number of observers for perfect detection, which is,
in many cases, still prohibitively large). However,
the connection between our problem and \emph{DRS} paves the way for a principled approach. 
%While we could attempt to use the DRS approach more directly, we show in ~\ref{app:DRS_CHW} that our approach is in fact better for the budgeted problem. 

We now define a \emph{distance vector} associated with a candidate source,
which, as we will prove in Lemma~\ref{lemma:equiv_DRS}, mathematically captures equivalence in
a manner that is easy to work with.

\begin{definition}[Distance Vector]\label{distance_vector}
Let $\G = (V,E)$ and $\mathcal{O} \subseteq V$ with $|\mathcal{O}|=k \geq 2$ a set of observers
on $\G$.
For each candidate source $s$ the distance vector is  $\mathbf{d}_{s, \mathcal{O}} \in
\mathbb{R}^{k-1}$ with entries $d(s, o_{i+1}) - d(s, o_1)$ for $1 \leq i \leq k-1$.  
\end{definition}

The following lemma, similar in spirit to Lemma 3.1 in~\cite{ChenHW14}, shows
that, the equality between {distance vectors} of 
different candidate sources does not depend on the choice of the \textit{reference
observer} $o_1$.

\begin{lemma}\label{lemma:equiv_DRS} Let $\G = (V,E)$ and $\mathcal{O} \subseteq V$ with
$|\mathcal{O}|=k \geq 2$ and let $u, v \in V$.
Then, $[u]_\mathcal{O}$=$[v]_\mathcal{O}$ if and only if $\mathbf{d}_{u, \mathcal{O}} =
\mathbf{d}_{v, \mathcal{O}}$, independently of the choice of the reference observer
$o_1$ in Definition~\ref{distance_vector}.
\end{lemma}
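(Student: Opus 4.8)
The plan is to reduce all three assertions of the lemma (the two implications and the reference-independence) to a single, manifestly symmetric reformulation: that the quantity $d(u,o_m)-d(v,o_m)$ is the same for every observer $o_m\in\mathcal{O}$. Writing $a_m\defeq d(u,o_m)$ and $b_m\defeq d(v,o_m)$, I would first argue that both $[u]_\mathcal{O}=[v]_\mathcal{O}$ and $\mathbf{d}_{u,\mathcal{O}}=\mathbf{d}_{v,\mathcal{O}}$ are equivalent to the statement that $a_m-b_m$ is constant in $m$. Isolating this ``constant-difference'' characterization up front is the crux, since it collapses what would otherwise be several separate arguments into one.

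For the forward direction I would simply specialize the equivalence condition~\eqref{eq:distinguish}, which by hypothesis holds for \emph{every} pair $o_i,o_j$, to the pairs $(o_{i+1},o_1)$; this yields $a_{i+1}-a_1=b_{i+1}-b_1$ for each $i$, which is exactly the coordinate-wise equality $\mathbf{d}_{u,\mathcal{O}}=\mathbf{d}_{v,\mathcal{O}}$. Conversely, equality of the distance vectors will give $a_m-a_1=b_m-b_1$ for all $m$, hence $a_m-b_m=a_1-b_1$ for all $m$, so the difference $a_m-b_m$ is a constant $c$ independent of $m$. Then for any pair $o_i,o_j$ one has $a_i-a_j=(b_i+c)-(b_j+c)=b_i-b_j$, which is precisely condition~\eqref{eq:distinguish}, so $[u]_\mathcal{O}=[v]_\mathcal{O}$.

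For the reference-independence claim, the point is that the reformulation ``$a_m-b_m$ is constant in $m$'' privileges no particular observer. Choosing a different reference $o_r$ in Definition~\ref{distance_vector} produces the condition $a_m-a_r=b_m-b_r$ for all $m$, i.e.\ $a_m-b_m=a_r-b_r$ for all $m$, which again states exactly that $a_m-b_m$ is constant. Since every choice of reference yields the same reformulated condition, the equality of distance vectors holds or fails regardless of which observer is taken as $o_1$, as claimed.

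I do not anticipate a genuine obstacle here; the content is elementary once the right reformulation is in hand. The only subtlety is organizational: the temptation is to prove the two implications and the reference-independence as three disjoint arguments, which is longer and obscures why the reference does not matter. The cleaner route, which I would take, is to state and prove the ``constant-difference'' equivalence once, after which both implications and the independence statement each follow in a single line.
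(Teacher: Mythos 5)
Your proof is correct. The paper itself offers no proof of this lemma (it is stated as ``similar in spirit'' to Lemma 3.1 of~\cite{ChenHW14} and left unproven), so there is nothing to compare against; your constant-difference reformulation---that both $[u]_\mathcal{O}=[v]_\mathcal{O}$ and $\mathbf{d}_{u,\mathcal{O}}=\mathbf{d}_{v,\mathcal{O}}$ are equivalent to $d(u,o_m)-d(v,o_m)$ being independent of $m$---is exactly the natural argument, and it cleanly delivers both implications and the reference-independence claim at once.
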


%\begin{proof}
%Let $o_1 \in \mathcal{O}$ the reference observer.
%$(\Rightarrow)$ Assume $[u]_\mathcal{O} = [v]_\mathcal{O}$. Then, by definition,  $d(u, o_j)-d(v, o_1) =
%d(u,o_j)-d(v,o_1)$  for all $o_j \in 
%\mathcal{O}$, and hence $\mathbf{d}_{u,\mathcal{O}}=\mathbf{d}_{v,\mathcal{O}}$. 

%$(\Leftarrow)$ Assume conversely $\mathbf{d}_{u,\mathcal{O}}=\mathbf{d}_{v,\mathcal{O}}$. Thus, for every $o_i, o_j \in
%\mathcal{O}$, both equalities $d(u,o_i)-d(v,o_i) = d(u,o_1)-d(v,o_1)$ and $d(u,o_j)-d(v,o_j) =
%d(u,o_1)-d(v,o_1)$ hold. Thus, by transitivity, $d(u, o_j)-d(v, o_j) =
%d(u,o_i)-d(v,o_i)$ and the two nodes are equivalent, i.e., $[u]_\mathcal{O} = [v]_\mathcal{O}$.
%\end{proof}

\textbf{Estimating the source in the low-variance setting.}
We are now ready to describe how we can estimate the source, and define the
probability of correct detection in the zero- and low- variance setting, i.e., when $X_{uv} =
w_{uv}$ a.s.~for every edge $(u, v)$. 

For every observer $o_i \in \mathcal{O}$, denote by $t_i$ the time at which $o_i$ gets infected.
In the zero-variance setting, the observed infection times of nodes $o_2, .., o_K$ with respect to observer $o_1$, i.e., the vector 
$\tau \defeq t_2-t_1, \ldots, t_k-t_1$, is exactly the distance vector of the \emph{unknown} source
$s^*$. Then, if for every $u,v \in V$, $[u]_\mathcal{O} \neq [v]_\mathcal{O}$, the
source can be always correctly identified. We will see in Proposition \ref{Prop:p_err} that this is true also in a
more general \textit{low-variance} framework where we are always able to identify the
equivalence class to which the real source belongs.

We assume a prior probability distribution on the location of the source to be
given, i.e., $Q(u) \defeq \P(s^* = u)$. As we cannot distinguish between
vertices inside $[s^*]_\mathcal{O}$ (otherwise they 
would not be in the same equivalence class), we let our estimated source $\hat
s$ be chosen at random from the conditional probability $Q|_{E^*}(u)\defeq\P(s^*=u|u
\in E^*)$. Hence the
success probability is 
\vspace{-1mm}
\begin{equation}\label{eq:probab_success}
\begin{split}
\Ps(\mathcal{O}) &\defeq  \sum_{s \in V} \P(\hat{s} = s | s^*=s)\P(s^* = s) \\
&= \sum_{s \in V} Q|_{[s]_\mathcal{O}}(s)Q(s) = \sum_{s \in V} \frac{Q(s)}{Q([s]_\mathcal{O})}Q(s),
\end{split}
\end{equation}
\vspace{-1mm}
\noindent and is $1$ if all equivalence classes are singletons.

In the experimental results in
Section~\ref{sec:exp_main} we also look at another relevant metric for the source localization problem, the \textit{expected distance} (weighted or in hops) between the true and estimated source: 
\begin{equation}\label{eq:exp_dist}
\begin{split}
\E[d(s^*, \hat{s})]  &\defeq \sum_{s \in V} \P(s^* = s) \sum_{u \in [s]_\mathcal{O}} \P(\hat s = u | s^* = s)  d(s,u)  \\
%&=& \sum_{s \in V} Q(s) \sum_{u \in [s]_\mathcal{O}} Q|_{[s]_\mathcal{O}}(u) d(s,u)  \\
&= \sum_{s \in V} \sum_{u \in [s]_\mathcal{O}} \frac{Q(s) Q(u)}{Q([s]_\mathcal{O})} \cdot  d(s,u) .
\end{split}
\end{equation}

\noindent Alternative metrics, including worst-case metrics,
also exist \cite{celis15} (see Appendix~\ref{app:other_metrics} for some examples).

\subsection{Setting}

For ease of exposition, we focus on the case in which the prior distribution on the position of the source is uniform, hence $Q(u)=1/n$ for all $u\in V$.\footnote{Our algorithms and observations can be generalized using Equation~\eqref{eq:probab_success} instead of the simpler formula that we now derive for the uniform case.}

\begin{proposition}
\label{Prop:p_err} 
Let $\G=(V,E)$ be a network of size $n$ and $\mathcal{O} \subseteq V$. Call
$\delta=\min_{u, v: \mathbf{d}_{u, \mathcal{O}} \neq \mathbf{d}_{v, \mathcal{O}}}\|\mathbf{d}_{u,
\mathcal{O}} - \mathbf{d}_{v, \mathcal{O}}\|_\infty$. Assume a uniform prior $Q(u) = 1/n$ for all $u\in V$ and call $D$ the  maximum distance in hops in any shortest path between any node and any observer. 
\begin{enumerate}
\item In the zero-variance case, then ${\Ps(\mathcal{O})= q/n}$, where $q$ is the number of equivalence classes for $\mathcal{O}$;
\item If the transmissions are such that for each $uv \in E$,  ${X_{uv} \in [w_{uv} - \eps, w_{uv}+\eps]}$, we denote as $\Ps^\eps(\mathcal{O})$ the probability of success and we define ${\eps_0=\sup\{\eps>0: \Ps^\eps(\mathcal{O}) = \Ps^0(\mathcal{O})\}}$, we have $\eps_0 > \sfrac{\delta}{2D}$.
\end{enumerate}
 
\end{proposition}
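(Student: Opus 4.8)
The plan is a direct substitution into the success-probability formula~\eqref{eq:probab_success}. Under the uniform prior $Q(s)=1/n$, a class $[s]_{\mathcal{O}}$ has mass $Q([s]_{\mathcal{O}})=|[s]_{\mathcal{O}}|/n$, so the summand $Q(s)^2/Q([s]_{\mathcal{O}})$ collapses to $1/(n\,|[s]_{\mathcal{O}}|)$. Grouping $\sum_{s\in V}$ by the $q$ equivalence classes, each class $C$ contributes $|C|\cdot\frac{1}{n|C|}=1/n$, and summing over the $q$ classes gives $\Ps(\mathcal{O})=q/n$. In particular this is $1$ exactly when every class is a singleton, recovering perfect detection.

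\textbf{Part 2, strategy.} The plan is to show that for $\eps$ below the stated threshold the noisy observation is \emph{always} decoded to the same equivalence class as in the zero-variance case; since the estimator then behaves identically (identify the class, draw $\hat s$ uniformly inside it), we get $\Ps^{\eps}(\mathcal{O})=\Ps^{0}(\mathcal{O})$, so $\eps_0$ exceeds that threshold. By Lemma~\ref{lemma:equiv_DRS} the classes correspond bijectively to the distinct distance vectors, and in the zero-variance case the observed difference vector $\tau$ equals $\mathbf{d}_{s^*,\mathcal{O}}$ exactly. The natural (maximum-likelihood) rule assigns $\tau$ to the class whose distance vector is closest in $\|\cdot\|_{\infty}$; since any two distinct distance vectors are at least $\delta$ apart, a triangle-inequality argument shows the correct class is recovered as soon as $\|\tau-\mathbf{d}_{s^*,\mathcal{O}}\|_{\infty}$ is small relative to $\delta$.

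\textbf{Part 2, perturbation bound.} What remains is to bound how far the noise can push $\tau$ from $\mathbf{d}_{s^*,\mathcal{O}}$. The infection time of observer $o_j$ is $t^*+\tilde{d}(s^*,o_j)$, where $\tilde{d}$ denotes the shortest-path length under the realized delays $X_{uv}\in[w_{uv}-\eps,\,w_{uv}+\eps]$. Evaluating a mean-shortest path from $s^*$ to $o_j$---which by definition of $D$ uses at most $D$ hops---under the realized weights gives $\tilde{d}(s^*,o_j)\le d(s^*,o_j)+D\eps$. Because $t^*$ is common to all observers it cancels in the differences $\tau_i=t_{i+1}-t_1$, so each coordinate of $\tau-\mathbf{d}_{s^*,\mathcal{O}}$ is controlled by these per-observer deviations; imposing $\|\tau-\mathbf{d}_{s^*,\mathcal{O}}\|_{\infty}<\delta/2$ and tracking the constants through the hop count $D$ and the per-edge slack $\eps$ yields the claimed bound $\eps_0>\delta/(2D)$.

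\textbf{Main obstacle.} I expect the delicate step to be the matching \emph{lower} bound $\tilde{d}(s^*,o_j)\ge d(s^*,o_j)-D\eps$. The upper bound is immediate (just perturb a fixed $\le D$-hop mean-shortest path), but a priori the fastest path under the realized delays could traverse many more than $D$ edges, and then its deviation is not obviously chargeable to $D\eps$. I would rule this out by restricting to $\eps<\min_{uv}w_{uv}$, so that every edge delay stays positive and any path with $h$ hops has realized length at least $h\,(w_{\min}-\eps)$; this forces the realized-fastest path to have at most $D$ hops (equivalently, to remain a mean-shortest path), after which the reverse-triangle estimate $|\tilde{d}-d|\le D\eps$ goes through and closes the argument.
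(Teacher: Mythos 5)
Your Part~1 is correct and is essentially the paper's own computation: grouping the sum defining $\Ps(\mathcal{O})$ by equivalence classes, each class $[u]_\mathcal{O}$ has prior mass $|[u]_\mathcal{O}|/n$ and conditional success probability $1/|[u]_\mathcal{O}|$, so each of the $q$ classes contributes $1/n$. Your Part~2 also follows the route the paper intends (separation $\delta$ between distinct distance vectors, a noise-perturbation bound on the observed delay vector $\tau$ controlled by $D$ and $\eps$, then correct recovery of the true class), and you deserve credit for explicitly flagging the step that the paper's three-line proof skips entirely: the \emph{lower} bound on realized distances, $\tilde d(s^*,o_j)\ge d(s^*,o_j)-D\eps$.

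However, your resolution of that step is wrong, so the proof has a genuine gap. Assuming $\eps<\min_{uv}w_{uv}$ does \emph{not} force the realized-fastest path to have at most $D$ hops, nor to be a mean-shortest path: from ``an $h$-hop path has realized length at least $h(w_{\min}-\eps)$'' you can only conclude $h\le (d(s^*,o_j)+D\eps)/(w_{\min}-\eps)$, which can far exceed $D$ when weights are heterogeneous. Concretely, join $u$ and $o$ by a direct edge of weight $10$ (the mean-shortest path, one hop) and by a disjoint path of $9$ edges of weight $1.2$ each (mean length $10.8$); with $\eps=0.5<w_{\min}=1.2$ all realized delays are positive, yet the $9$-edge path can realize length $10.8-9\eps=6.3$, beating every realization ($\ge 9.5$) of the direct edge, so the realized-fastest path has $9$ hops, exceeds the hop count of every mean-shortest path in this graph, and is not a mean-shortest path. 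Worse, if one adds light shortcut edges so that every intermediate node is one hop from the observers (making $D$ small while keeping the detour available), the deviation $|\tilde d(u,o)-d(u,o)|$ can reach $\approx 3.7\gg D\eps$, i.e.\ the bound you need is itself false in general; it is genuinely controlled by $d/w_{\min}$, not by $D$, and holds as you state it only under extra hypotheses such as uniform edge weights. There is also a second, independent shortfall in your constant tracking: even granting $|\tilde d - d|\le D\eps$ per observer, each coordinate of $\tau-\mathbf{d}_{s^*,\mathcal{O}}$ is a \emph{difference} of two such deviations, hence bounded only by $2D\eps$, and your decoding condition $\|\tau-\mathbf{d}_{s^*,\mathcal{O}}\|_\infty<\delta/2$ then yields $\eps<\delta/(4D)$, a factor of $2$ short of the claimed $\delta/(2D)$; ``tracking the constants'' does not recover this factor unless the hop counts from the source to the two observers sum to at most $D$, which is false in general. (To be fair, the paper's own proof asserts the $\delta/(2D)$ threshold with no more justification and is exposed to both objections; but your write-up, by committing to an explicit argument, makes the defect concrete: the argument as given does not go through.)
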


\begin{proof}
\begin{enumerate}
\item By definition,  
$$\Ps(\mathcal{O})=\sum_{[u]_\mathcal{O}}\P(\hat{s} = s^*|s^* \in [u])\P(s^* \in [u]).$$
Hence, 
$$ \Ps(\mathcal{O}) 
=  \sum_{[u]_\mathcal{O}} \frac{1}{|[u]_\mathcal{O}|} \cdot \frac{|[u]_\mathcal{O}|}{n}
= \frac 1 n \sum_{[u]_\mathcal{O} } 1 
= \frac{q}{n}.$$

\item Recall that, for $u, v \in V$, $[u]_\mathcal{O} \neq [v]_\mathcal{O}$ if and only if
$\mathbf{d}_{u, \mathcal{O}} \neq \mathbf{d}_{v, \mathcal{O}}$. Since $\mathbf{d}_{u, \mathcal{O}} \neq
\mathbf{d}_{v, \mathcal{O}}$ implies $\|\mathbf{d}_{u, \mathcal{O}} - \mathbf{d}_{v, \mathcal{O}}\|_\infty
\geq \delta$, if $\eps <\sfrac{\delta}{2 D}$, no estimation error is possible between  $u, v \in V$ such that
$\mathbf{d}_{u, \mathcal{O}} \neq \mathbf{d}_{v, \mathcal{O}}$. Hence $\eps_0 > \sfrac{\delta}{2D}$.
\end{enumerate}
\end{proof}

Note that here $\eps_0$ plays the role of $\sigma_0$ in Figure \ref{fig:scheme}. Indeed, for $\eps<\eps_0$ the variance of the transmission delays does not affect the accuracy of source localization. 

If additional conditions on the weights or on the network topology are
made, more refined bounds for $\eps_0$ can be derived. For example, in a \textit{tree}
with integer weights,
due the uniqueness of the path between two any vertices,
the minimum distance (in the infinity norm) between two distance vectors
 is $2$. Hence, in this case, an accumulated variance of less than $1$ can be tolerated
and we have $\eps_0 > 1/D$.

For the remainder of this section, we will assume ${\eps < \delta/2D}$, which we call the low-variance case. 
Independently of the topology of the network $\G$, the success probability $\Ps$, as well as other
possible metrics of interest, can be computed exactly in polynomial time (see e.g., Equation~\eqref{eq:probab_success} and~\eqref{eq:exp_dist}). In fact, due to Lemma \ref{lemma:equiv_DRS}, it
is enough to compute the distance vector of Definition~\ref{def:equiv} for all
the nodes. Nonetheless, if we have a budget $k \geq 2$ of nodes that we can
choose as observers, finding the configuration that maximizes $\Ps$ is an NP-hard problem. This is a direct consequence of the hardness result of Chen et al.~\cite{ChenHW14}. 

\begin{theorem}
Let $k \geq 2$ be the budget on the number of nodes we can select as observers.
Finding $\mathcal{O} \subseteq V$ such that $\mathcal{O} \in \argmax_{|\mathcal{O}|=k}\Ps(\mathcal{O})$ is NP-hard. 
\end{theorem}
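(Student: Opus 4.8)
The plan is to reduce from the decision version of the minimum-observer problem for perfect localization, whose NP-hardness in the zero-variance setting is established by Chen et al.~\cite{ChenHW14}. By Proposition~\ref{Prop:p_err}(1), in the zero-variance regime with a uniform prior we have $\Ps(\mathcal{O}) = q/n$, where $q$ is the number of equivalence classes induced by $\mathcal{O}$. Since $q \leq n$ always, maximizing $\Ps$ over sets of size $k$ is exactly the same combinatorial problem as maximizing the number of equivalence classes $q$; in particular $\Ps(\mathcal{O}) = 1$ if and only if every equivalence class is a singleton, i.e., if and only if $\mathbf{d}_{u, \mathcal{O}} \neq \mathbf{d}_{v, \mathcal{O}}$ for all distinct $u,v \in V$, which by Lemma~\ref{lemma:equiv_DRS} is precisely the condition that $\mathcal{O}$ is a double resolving set.

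The second step is to argue that deciding whether $\max_{|\mathcal{O}|=k}\Ps(\mathcal{O}) = 1$ is equivalent to deciding whether a \emph{DRS} of size at most $k$ exists. The nontrivial direction uses monotonicity: adding an observer can only refine the induced partition (splitting an equivalence class can never merge two), so $q$ is non-decreasing under set inclusion. Hence if some \emph{DRS} $S$ with $|S| \leq k$ exists, any superset $\mathcal{O} \supseteq S$ with $|\mathcal{O}| = k$ still satisfies $q = n$ and thus $\Ps(\mathcal{O}) = 1$; conversely, any $\mathcal{O}$ of size $k$ with $\Ps(\mathcal{O}) = 1$ is itself a \emph{DRS} of size $k$. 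Therefore the maximum of $\Ps$ over size-$k$ sets equals $1$ exactly when the minimum \emph{DRS} size is at most $k$.

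Finally, I would close the reduction: a polynomial-time algorithm returning some $\mathcal{O} \in \argmax_{|\mathcal{O}|=k}\Ps(\mathcal{O})$ would let us compute the value $\max_{|\mathcal{O}|=k}\Ps(\mathcal{O})$ (evaluating $\Ps$ is polynomial, since the distance vectors of Definition~\ref{def:equiv} and the resulting equivalence classes follow from all-pairs shortest paths), test whether it equals $1$, and thereby decide in polynomial time whether a \emph{DRS} of size at most $k$ exists. Since the latter decision problem is NP-hard~\cite{ChenHW14}, so is maximizing $\Ps$. The constraint $k \geq 2$ is needed only because for $k=1$ the distance vector lives in $\mathbb{R}^{0}$, all nodes collapse into a single class, and no optimization remains.

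The main obstacle I anticipate is pinning down the precise statement of the hardness result in~\cite{ChenHW14} and verifying that it applies with $k$ as part of the input over the same class of (weighted, undirected) graphs we consider; if their reduction is stated for the unweighted or tree-free minimization version, I would need to check that the instances they construct are admissible here and that the correspondence between ``minimum number of observers for exact localization'' and ``minimum \emph{DRS}'' matches the equivalence relation of Definition~\ref{def:equiv} exactly. Beyond that, the argument is a direct threshold reduction and requires no further technical machinery.
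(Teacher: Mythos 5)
Your proof is correct and takes essentially the same route as the paper: both arguments reduce from the NP-hard minimum \emph{DRS} problem, using the fact that in the zero-variance setting $\Ps(\mathcal{O}) = 1$ exactly when $\mathcal{O}$ is a double resolving set, and that $\Ps$ is polynomial-time computable. The only cosmetic difference is that the paper invokes the hypothetical budgeted algorithm inside a loop over $k = 1, \ldots, |V|$ to recover the minimum \emph{DRS} size, whereas you decide the ``size at most $k$'' decision version with a single call, justified by the (correct and needed for your variant, though not for the paper's) monotonicity observation that adding observers can only refine the equivalence-class partition.
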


%For completeness, we include the proof of this result through reduction to the
%Double Resolving Set problem in \ref{app:hardness}.
\noindent The proof follows straightforwardly with a reduction
from the \emph{DRS} problem (see Appendix~\ref{app:hardness}).

\subsection{Observer Placement}
Our first main contribution in this paper is a solution to the budgeted observer-placement problem. Our approach, presented in Algorithm~\ref{algo:budget}, is specifically designed for the
source localization problem and has a simple greedy structure: for every node $v \in V$,
initialize $\mathcal{O}\leftarrow \{v\}$ and iteratively add to $\mathcal{O}$ the node $u$
that maximizes the gain with respect to the success probability
until we either run out of budget
or $\Ps=1$. Proposition~\ref{Prop:p_err} ensures that greedily
maximizing the success probability is equivalent to greedily maximizing the
number $q$ of equivalence classes. When adding an element to the observer set,
the partition in equivalence classes can be updated in linear time, hence the
total running time of our algorithm is $O(kn^3)$. Despite bypassing the NP-hardness of the
problem, this might not be sufficiently fast for very large graphs. However, the
procedure is extremely parallelizable and well suited, e.g., for Map-Reduce
(see, for example, the main for loop and the $\mathbf{argmax}$ in the
$\mathbf{while}$ loop).

\begin{algorithm}
\begin{algorithmic}
\caption{(\textsc{lv-Obs}): Observer placement for the low-variance setting.}\label{algo:budget}
\Require{Network $G$, budget $k$}
\For {$v \in V$}
\State $\mathcal{O}_v \leftarrow v$ 
\While{$\Ps(\mathcal{O}_v) \neq 1$ \textbf{and} $\mathcal{O}_v < k$}
    \State $u \leftarrow\argmax_{z \in V \backslash \mathcal{O}_v} [\Ps(\mathcal{O}_v \cup
    \{z\}) - \Ps(\mathcal{O}_v)]$ 
    \State $\mathcal{O}_v \leftarrow \mathcal{O}_v \cup \{u\}$.
\EndWhile
\EndFor
\State \Return $\argmax_{v \in V} \Ps(\mathcal{O}_v)$
\end{algorithmic}
\end{algorithm}

The observer placement obtained through Algorithm~\ref{algo:budget} will be denoted
\textsc{lv-Obs} to emphasize the fact that it has been designed for the case in
which the variance is absent or very small (\textsc{lv} stands for \emph{low-variance} regime).

\subsection{Performance}
As budgeted observer placement (even in the
zero-variance setting) is NP-hard, there is no optimal algorithm to compare
against. Instead, we evaluate the performance of our algorithm against a set of
natural benchmarks that have shown to have good performance in other works
\cite{Seo12, Berry06, Zhang2016} (see Section \ref{sec:exp_bench} for a discussion of
these benchmarks,  Figure~\ref{fig:CR} for the results). 
 
%\Brunella{the thing is that there are no algorithms with budget that consider a
%low-variance or deterministic setting, comparing with BC or other methods of
%\cite{Seo12} here would not make sense.}

We further compare against two other natural heuristics that also optimize an objective function greedily. 
The first is an adapted version of the  approximation algorithm for the \emph{DRS} problem proposed by Chen at al.~\cite{ChenHW14} and described in Appendix~\ref{app:DRS_CHW}.  
%Recently \cite{ChenHW14} proposed a greedy approximation algorithm for \emph{DRS} which minimizes a function of the equivalence classes as defined in our setting. %\footnote{This approximation algorithm greedily minimizes an \emph{entropy} function
%that we describe in the supplementary material.} % Moreover, it is proven to reach a nearly optimal approximation factor for the \emph{DRS} problem.}.
By stopping the greedy process after it selects 
$k$ nodes, we can adapt in a natural way this approximation algorithm and create a heuristic for the budgeted
version. %This approach has the advantage that, for large enough $k$ (which depends on the topology), an approximation guarantee for the zero-variance setting exists. 
The second is a direct minimization of the expected error distance obtained by Equation~\eqref{eq:exp_dist} with $Q(u)=1/n$ for all $u\in V$. Comparing all three approaches, our algorithm outperforms the other two 
(see Appendix~\ref{app:alternate-obj} for details).

%As noticed in \cite{celis15}, $\E[d(s^*, \hat{s})]$, contrary to $\Ps$, does not
%depend only on the size of equivalence classes but also on the
%distances among the nodes within each equivalence class. Hence it is clear
%that generalizing Algorithm~\ref{algo:budget} would increase the runtime. Even if the
%observer placements that minimize the two metrics are \emph{a priori} different
%we experimentally observe that maximizing $\Ps$ yields to values of $\E[d(s^*,
%\hat{s})]$ that are very close to those obtained by directly minimizing this
%second metric (see the supplementary material for more details). 
%Therefore, we focus on the observer placement obtained through Algorithm~\ref{algo:budget}. 

%Thus, we proceed with the algorithm defined above. %, i.e., with the probability of
%success in source localization.

\section{The High-Variance Regime}
\label{sec:high-variance}

When the variance is not guaranteed to be low, as defined in Section~\ref{sec:low-variance}, computing analytically the success probability - or other metrics of interest - is unfortunately not possible (except for very simple graphs, like the path in the example of Figure \ref{fig:ex_transition}). Moreover, 
 the estimation of the source is more challenging because the
observed infection delay $t_i - t_j$ can be misleading, especially if the
corresponding observers $o_i$ and $o_j$ are \emph{far} from the source.
Take, for example, a path of length $L$ where the two leaves are the 
only two observers and all edges have weight $1$. Figure~\ref{fig:variance}
shows how the success probability $\Ps$ 
decays faster for increasing values of $L$. Building on this
observation, we propose a strategy for observer placement 
that enforces a controlled distance from a general source node to the 
observer set.

\begin{figure}
\begin{center}
\subfigure[]{\includegraphics[width=0.45\columnwidth]{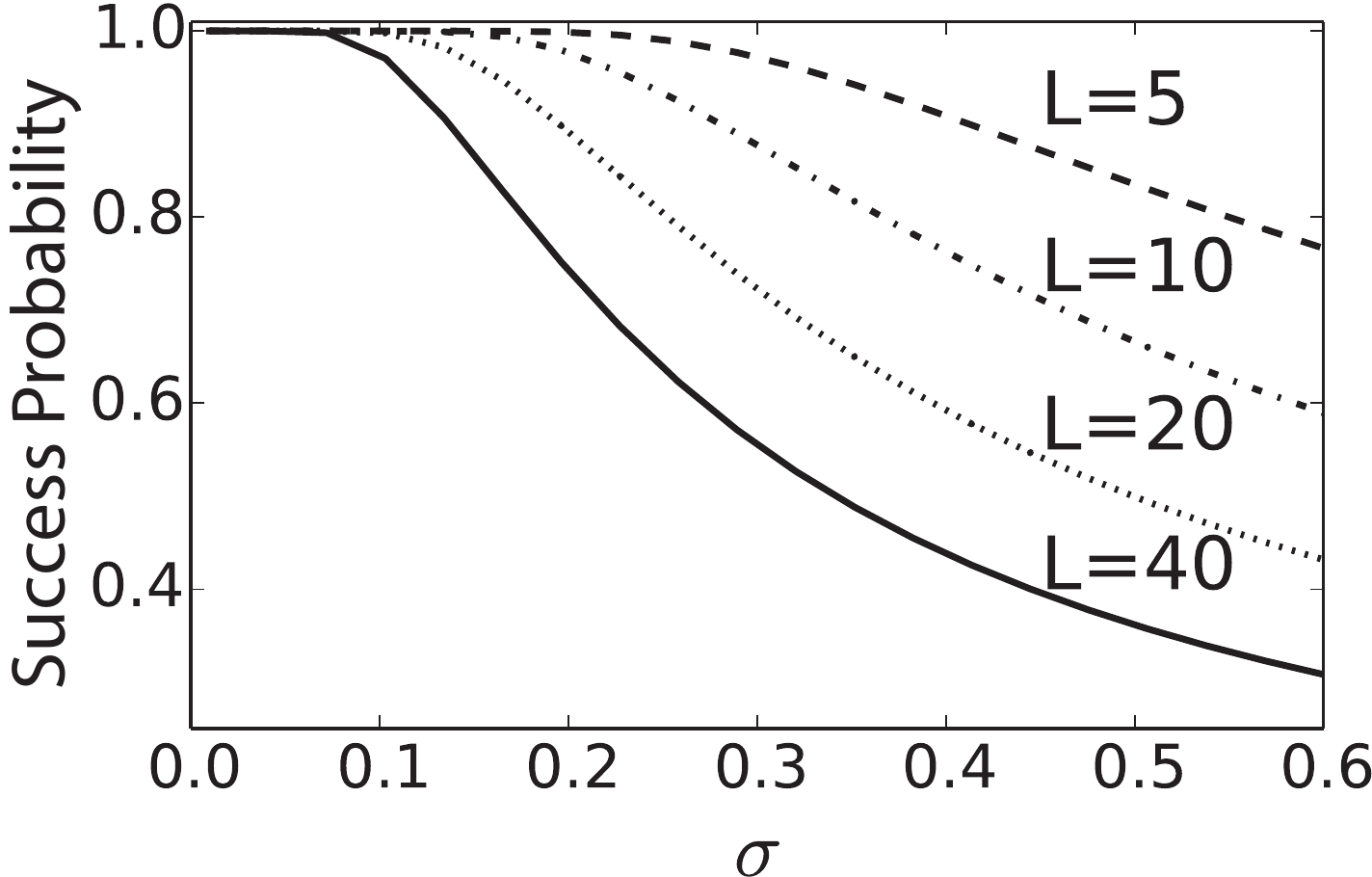}}\label{fig:variance}
\hspace{.5cm}
\subfigure[]{\includegraphics[width=0.3\columnwidth]{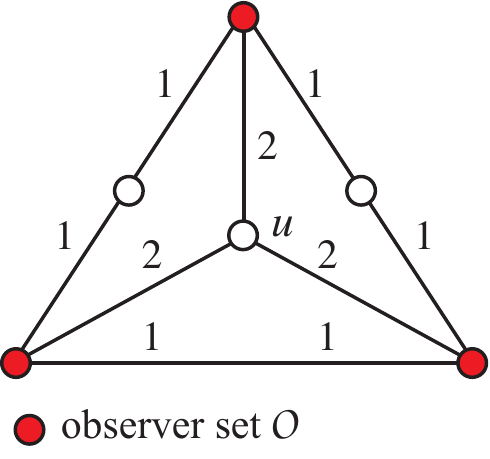}}\label{fig:no-unique-path}
\caption{(a): Success probability $\Ps$ on a path of length $L$ for increasing variance
$\sigma$. (b): Counterexample for the converse of
Lemma~\ref{lem:shortest-path-resolv}; for each pair of observers in $\mathcal{O}$, $u$ is not
contained in the shortest path between them, yet $\mathcal{O}$ is a \emph{DRS}.}
\end{center}
\end{figure}

\subsection{Diffusion Model and Source Estimation}
\label{subsec:noise_model}

For every edge $(u, v)$ the infection delay $X_{uv}$
is distributed as a truncated Gaussian random variable with 
parameters ${(w_{uv}, \sigma w_{uv}, [\sfrac{w_{uv}}{2}, \sfrac{3 w_{uv}}{2}])}$.
More precisely, if ${Y_{uv} \sim \mathcal{N}(w_{uv}, \sigma
w_{uv})}$ is a Gaussian random variable, $X_{uv}$ is obtained by conditioning
$Y_{uv}$ with ${Y_{uv} \in [\sfrac{w_{uv}}{2}, \sfrac{3 w_{uv}}{2}]}$.
This delay distribution has two advantages with respect to the one of
\cite{Pinto12}, i.e., that ${X_{(u,v)} \sim 
\mathcal{N}(w_{uv}, \sigma w_{uv})}$. First, the model admits only strictly
positive infection delays. Second, different values of the standard
deviation $\sigma$
result in different regimes for the propagation, making our model very
versatile. When $\sigma = 0$, $X_{uv}$ boils down to a
deterministic value equal to
the edge weight $w_{uv}$; when $\sigma$ is large, the distribution of $X_{uv}$
becomes closer to uniform ${U([\sfrac{w_{uv}}{2}, \sfrac{3 w_{uv}}{2}])}$.
Finally, when $\sigma$ is strictly positive but small, 
${X_{uv} \approx \mathcal{N}(w_{uv}, (\sigma w_{uv})^2)}$.
In Appendix~\ref{app:estimator}, we explain how an approximated
maximum likelihood estimator for the source can be derived in this setting.

\subsection{Observer Placement}

First, we formalize why distances between observers are important: 
If $o_i, o_j$ are two observers and the source is ${s^*
\in \mathcal{P}(o_i, o_j)}$, then 
\begin{equation}\label{eq:path_variance}
\var(t_i - t_j) \approx \sigma^2 \left[ \sum_{(uv) \in \Path(o_i, o_j)}w_{uv}^2
\right]
\end{equation}
where $\Path(x,y)$ denotes the shortest path from $x$ to $y$, written as a
sequence of edges. 
Although we cannot control $\sigma$, we can  control the \emph{path length}
between observers.\footnote{A relevant but orthogonal line of
work would study how to control the parameter $\sigma$ by, e.g.,
immunizations, quarantines, or other preventative measures and is outside the
scope of our work.}
We make use of the following sufficient condition for a set to be a $\emph{DRS}$, 
i.e., for an observer set to guarantee optimal source detection. 

\begin{lemma}\label{lem:shortest-path-resolv}
Let $\G=(V, E)$ be a network, $\mathcal{O} \subseteq V$. If for every $u\in V$ there
exist $o_1, o_2 \in \mathcal{O}$ such that there is a unique shortest path
$\mathcal{P}(o_1, o_2)$
between $o_1$ and $o_2$ and $u \in \mathcal{P}(o_1, o_2)$, then $\mathcal{O}$
is a \emph{DRS} for $G$.
\end{lemma}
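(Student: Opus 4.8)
The plan is to prove the contrapositive in the language of equivalence classes. By Lemma~\ref{lemma:equiv_DRS}, $\mathcal{O}$ is a \emph{DRS} exactly when no two distinct nodes are equivalent, so it suffices to show that the stated covering condition forces every equivalence class to be a singleton. First I would record the convenient reformulation of equivalence: $u\sim v$ holds if and only if there is a constant $c$ with $d(u,o)-d(v,o)=c$ for every $o\in\mathcal{O}$, since $d(u,o_i)-d(u,o_j)=d(v,o_i)-d(v,o_j)$ for all pairs is the same as saying that $d(u,\cdot)-d(v,\cdot)$ takes a single value on $\mathcal{O}$.

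Now suppose, for contradiction, that $u\neq v$ yet $u\sim v$, and let $c$ be the associated constant. Let $(o_1,o_2)$ be the pair guaranteed by the hypothesis for $u$, so that $u$ lies on the unique shortest path $\mathcal{P}(o_1,o_2)$ and hence $d(u,o_1)+d(u,o_2)=d(o_1,o_2)$. From $d(v,o_i)=d(u,o_i)-c$ I get $d(v,o_1)+d(v,o_2)=d(o_1,o_2)-2c$; the triangle inequality $d(v,o_1)+d(v,o_2)\ge d(o_1,o_2)$ then forces $c\le 0$. Applying the same argument to the pair guaranteed for $v$ (equivalently, exchanging the roles of $u$ and $v$, which replaces $c$ by $-c$) yields $c\ge 0$, so $c=0$.

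With $c=0$ we have $d(v,o_1)=d(u,o_1)$ and $d(v,o_2)=d(u,o_2)$, so $d(v,o_1)+d(v,o_2)=d(o_1,o_2)$ and $v$ also lies on a shortest $o_1$--$o_2$ path; by the uniqueness assumption this path is $\mathcal{P}(o_1,o_2)$, so $v\in\mathcal{P}(o_1,o_2)$. Since all edge weights are positive, the distance from $o_1$ strictly increases along this path, so the node on it at distance $d(u,o_1)$ from $o_1$ is unique; as $d(v,o_1)=d(u,o_1)$, this gives $v=u$, contradicting $u\neq v$. Hence every equivalence class is a singleton and $\mathcal{O}$ is a \emph{DRS}.

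I expect the main subtlety to be the two-sided bounding of $c$: a single observer pair only controls the \emph{sum} of distances and therefore pins down $c$ from one side only, so the argument must invoke the covering pair for both $u$ and $v$ (using the symmetry of $\sim$) to force $c=0$ before uniqueness of the shortest path and the strict monotonicity of distance along it can finish the job.
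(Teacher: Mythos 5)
Your proof is correct, and at its core it uses the same mechanism as the paper's: play the additivity of distances along the covering path of one node against the triangle inequality at the other node, symmetrically, to get contradictory sign constraints on the common difference $c$. The architecture, however, is genuinely different. The paper argues directly: taking covering pairs $(o_1,o_2)$ for $u$ and $(o_3,o_4)$ for $v$, it splits into cases (whether $v\in\mathcal{P}(o_1,o_2)$ or $u\in\mathcal{P}(o_3,o_4)$), makes a WLOG choice $o_1\notin\{o_3,o_4\}$, derives the \emph{strict} bound $c>0$ from $v$ lying off the unique path of $u$, and exhibits an explicit resolving pair among $(o_1,o_2)$, $(o_3,o_4)$, $(o_1,o_3)$, $(o_1,o_4)$. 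You instead run the contrapositive: assuming $u\sim v$, two symmetric applications of the triangle inequality give $c\le 0$ and $c\ge 0$, hence $c=0$; then $d(v,o_1)+d(v,o_2)=d(o_1,o_2)$ places $v$ on a shortest $o_1$--$o_2$ path, uniqueness identifies that path with $\mathcal{P}(o_1,o_2)$, and strict monotonicity of the distance from $o_1$ along the path forces $u=v$. Your route buys three things: it needs no case analysis or WLOG step; it isolates the use of the uniqueness hypothesis to exactly one place, making clear where it is indispensable; and it covers all pairs of nodes, including those containing an observer, whereas the paper's proof formally restricts to $u,v\in V\setminus\mathcal{O}$ and leaves its first case ($v\in\mathcal{P}(o_1,o_2)$ implies resolved) unjustified---that case rests on precisely the monotonicity-along-the-path fact that you spell out. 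What the paper's version buys in exchange is a slightly more constructive statement: for each pair of nodes it names which pair of observers resolves it, rather than only establishing that some pair must.
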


\begin{proof}
Let $u, v \in V \backslash \mathcal{O}$. We will prove that there exist $o_1, o_2 \in \mathcal{O}$ such
that the pair $(u, v)$ is resolved by $(o_1, o_2)$, i.e., $ d(v, o_1)-d(u,o_1)
\neq d(v, o_2)- d(u, o_2)$. Let $o_1, o_2 \in \mathcal{O}$ such that
$u$ appears in the unique shortest path $\mathcal{P}(o_1, o_2)$ and $o_3, o_4 \in S$ such that $v$ appears in the
unique shortest path $\mathcal{P}(o_3, o_4)$. If $v \in \mathcal{P}(o_1, o_2)$
or $u \in \mathcal{P}(o_3, o_4)$ than $u$ and $v$ are resolved by, respectively,
$(o_1, o_2)$ or $(o_3,
o_4)$. Take $v \notin \mathcal{P}(o_1, o_2)$
and $u \notin \mathcal{P}(o_3, o_4)$. In this case, $\{o_1,
o_2\} \neq \{o_3, o_4\}$. Let
us suppose without loss of generality that $o_1 \notin \{o_3, o_4\}$. 
We look only at the case where $(o_1, o_2)$ does not resolve $(u, v)$ and
prove that the pair is indeed resolved by two vertices in $\mathcal{O}$. Since $(o_1,
o_2)$ does not resolve $(u, v)$,
there exists $c\in \R$ such that $d(v, o_1)-d(u,o_1) = c = d(v, o_2)- d(u,
o_2)$. Since the unique shortest path between $o_1$ and $o_2$ goes through $u$
we have that $c>0$. We prove that either $(o_1, o_3)$ or $(o_1, o_4)$ resolves
$(u, v)$. If this was not the case, we would have the following equalities:
\begin{align*}
c &= d(v, o_1)-d(u,o_1) = d(v, o_3)- d(u, o_3)\\
c &= d(v, o_1)-d(u,o_1) = d(v, o_4)- d(u, o_4).
\end{align*}
Since $c>0$, $d(v, o_3) > d(u, o_3)$ and $d(v, o_4) > d(u, o_4)$ giving a
contradiction with $v$ (and not $u$) being on the shortest path $\mathcal{P}(o_3, o_4)$. We conclude
that $(u, v)$ are resolved by either $(o_1, o_3)$ or $(o_1, o_4)$.
\end{proof}
The converse of this lemma is not true: If $\mathcal{O}$ double resolves $\G$, it
is not even true that for every node $u$ there must exist $o_1, o_2
\in \mathcal{O}$ such that $u$ is contained in \emph{some} shortest path between $o_1$ and
$o_2$ of (see the Example in Figure~\ref{fig:no-unique-path}).

\textbf{Path covering strategy.} We take Lemma~\ref{lem:shortest-path-resolv} as a basis for deriving a \emph{path
covering} strategy for observer placement. In practice, the condition about the \emph{uniqueness} of the shortest path is too strong and excludes many potentially useful observer nodes\footnote{Experimentally we see that in many practical situations two shortest paths differ only by a few nodes and the majority of nodes on the path are resolved by the two extreme nodes.}. 
This is why we relax the condition of Lemma~\ref{lem:shortest-path-resolv} and we prefer, when the shortest path is not unique, to select one arbitrarily. 
Let $S\subseteq V$ be a set of observers and $L$ a positive integer: We call $P_{L}(S)$ the set of nodes that lie on a shortest path of length at most $L$ between any two observers in the set $S$. 
Given a budget $k$, and a positive integer $L$, we denote by $S^*_{k, L}$ the set of $k$ vertices that maximize the cardinality of $P_{L}(S)$. 
We call $L$ the \emph{length constraint} for the observer placement because we consider an observer to be \emph{useful} for source localization only if it is within distance $L$ from another observer. 
$S^*_{k, L}$ can be approximated greedily as in Algorithm~\ref{algo:path_covering}.\footnote{The running time of Algorithm \ref{algo:path_covering} is $O(n^2k^2)$, however, as with the low-variance case, this is highly parallelizable and hence tractable even for large networks.}
%In the following we will take the sets obtained through the latter algorithm as proposed observer
%placements for source localization in the high-variance regime.

\begin{algorithm}
\begin{algorithmic}
\caption{(\textsc{hv-Obs}): Observer placement for the high-variance setting.}\label{algo:path_covering}
\Require{Network $G(V, E)$, budget $k$, length constraint $L$}
\State $n \leftarrow |G|$
\For {$v \in V$}
\State $\mathcal{O}_v \leftarrow v$ 
\While{$|P_{L}(\mathcal{O}_v)| \neq n$ \textbf{and} $\mathcal{O}_v < k$}
    \State $u \leftarrow\argmax_{z \in V \backslash \mathcal{O}_v} [|P_{L}(\mathcal{O}_v \cup
    \{z\})| - |P_{L}(\mathcal{O}_v)|]$ 
    \State $\mathcal{O}_v \leftarrow \mathcal{O}_v \cup \{u\}$.
\EndWhile
\EndFor
\State \Return $\argmax_{v \in V} |P_{L}(\mathcal{O}_v)|$
\end{algorithmic}
\end{algorithm}

%With a slight modification of the Dijkstra algorithm it would be easy
%to check whether the shortest path between two nodes is unique or not. 

We will refer to the observer placement produced by
Algorithm~\ref{algo:path_covering} as \textsc{hv-Obs}$(L)$ to emphasize that it is designed for the high-variance
case.

\textbf{Comparison with Algorithm~\ref{algo:budget}.} 
Note that taking $L$ equal to the maximum weighted distance $\Delta$ does not
make Algorithm~\ref{algo:path_covering} equivalent to Algorithm~\ref{algo:budget},
i.e., we do not obtain \textsc{lv-Obs}. 
To see how the two algorithms could give different results, take a cycle of odd length $d$ with a leaf node $\ell$ added as a neighbor to an
arbitrary node $v$ and assume to start the algorithm with initial set $\{v\}$.
At the first step, the two algorithms will make the same choice, choosing one of
the two nodes that is at distance $(d-1)/2$ from $v$. At the second step
however, $\textsc{lv-Obs}$ will add $\ell$ (a DRS contains all
leaves \cite{ChenHW14}), whereas Algorithm~\ref{algo:path_covering} will add a
node on the cycle. This observation is key to our results because it explains
why Algorithm \ref{algo:path_covering} results in a more uniform (and hence
\textit{variance-resistant}) observer placement with respect to \textsc{lv-Obs}. \textsc{hv-Obs} operates a
trade-off between the average distance to the observers and the maximization of $\Ps$.

\textbf{Choice of the L parameter.} 
How could one optimally set $L$? Needless to say, the optimal 
$L$ depends on the network topology and on the available budget: Clearly, for
a larger budget a smaller $L$ is preferred.  
%We observe that, the best $L$ is the same for all variance levels above the threshold $\sigma_1$.

The cardinality of $P_{L}(\mathcal{O})$ is a good
proxy for the performance of $\mathcal{O}$. The value
$|P_{L}|$ is increasing in $L$ and reaches its maximum for
$L$ equal to the maximum weighted distance ($L=\Delta$).
For small $L$, $|P_{L}(\textsc{hv-Obs})| <
|P_{\Delta}(\textsc{lv-Obs})|$ but for $L$ large enough this is no more the
case. See Figure~\ref{fig:covered} for an example.
Our empirical results 
suggest that $L$ should be chosen as the maximum for which $|P_{L}(\textsc{hv-Obs})| \leq
|P_{\Delta}(\textsc{lv-Obs})|$. The key property of 
\textsc{hv-Obs} with respect to \textsc{lv-Obs} is that observers are spread
more \emph{uniformly} without \emph{losing} too much in terms of
success probability $\Ps$: Figure~\ref{fig:covered-prob-err} shows
$|P_{L}(\textsc{hv-Obs})|$ and $\Ps$ as a function of $L$.

\textsc{lv-Obs} and \textsc{hv-Obs} can give drastically different observers (see Appendix~\ref{app:figures} for an example).

\begin{figure}
\begin{center}
\includegraphics[width=0.5\columnwidth]{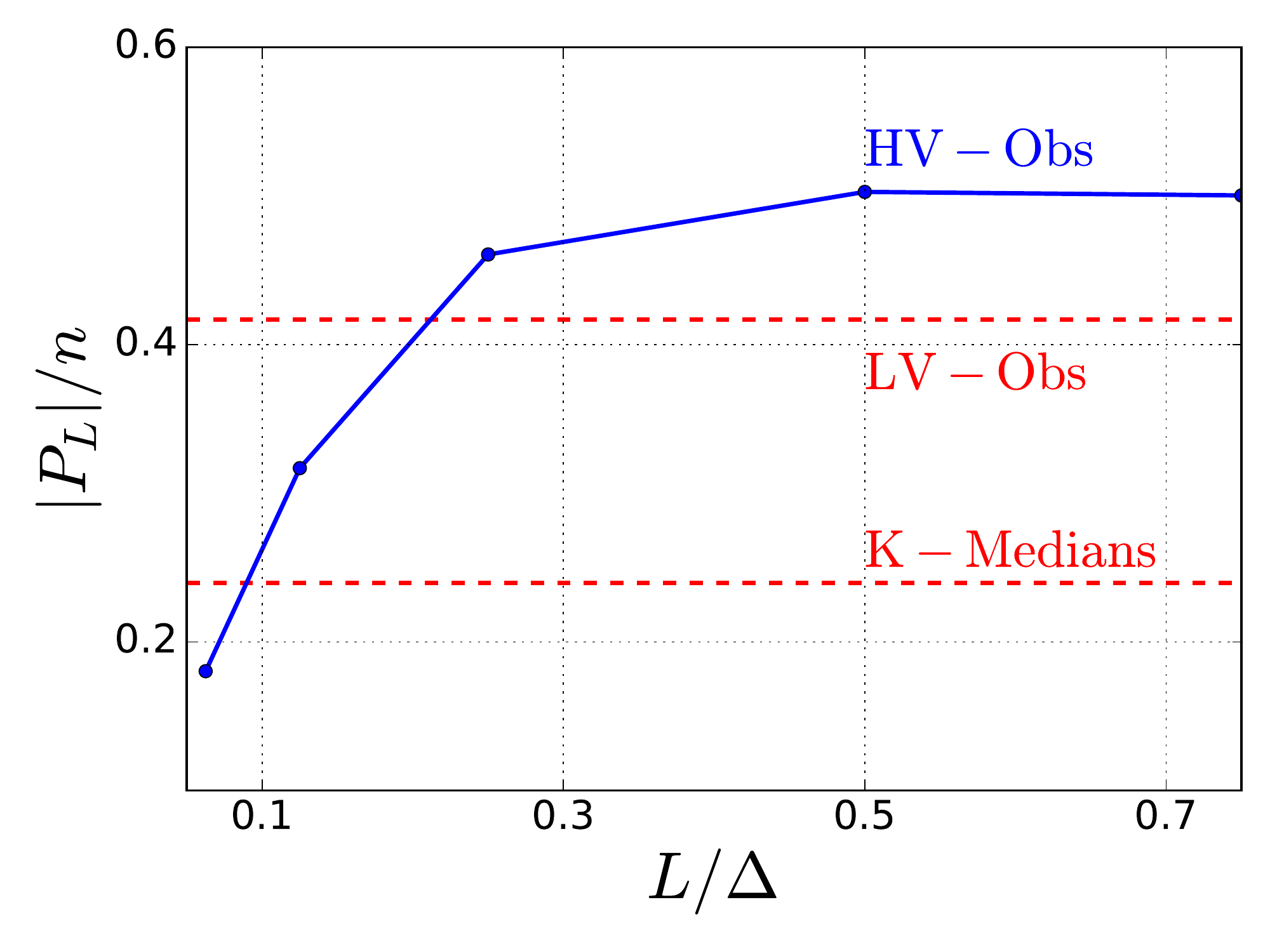}
\caption{Fraction of nodes in $P_{L}(\cdot)$ for the California dataset with
$2\%$ of observers.}\label{fig:covered}
\end{center}
\end{figure}

\begin{figure}
\centering
\subfigure[CR]{\includegraphics[width=0.47\columnwidth]{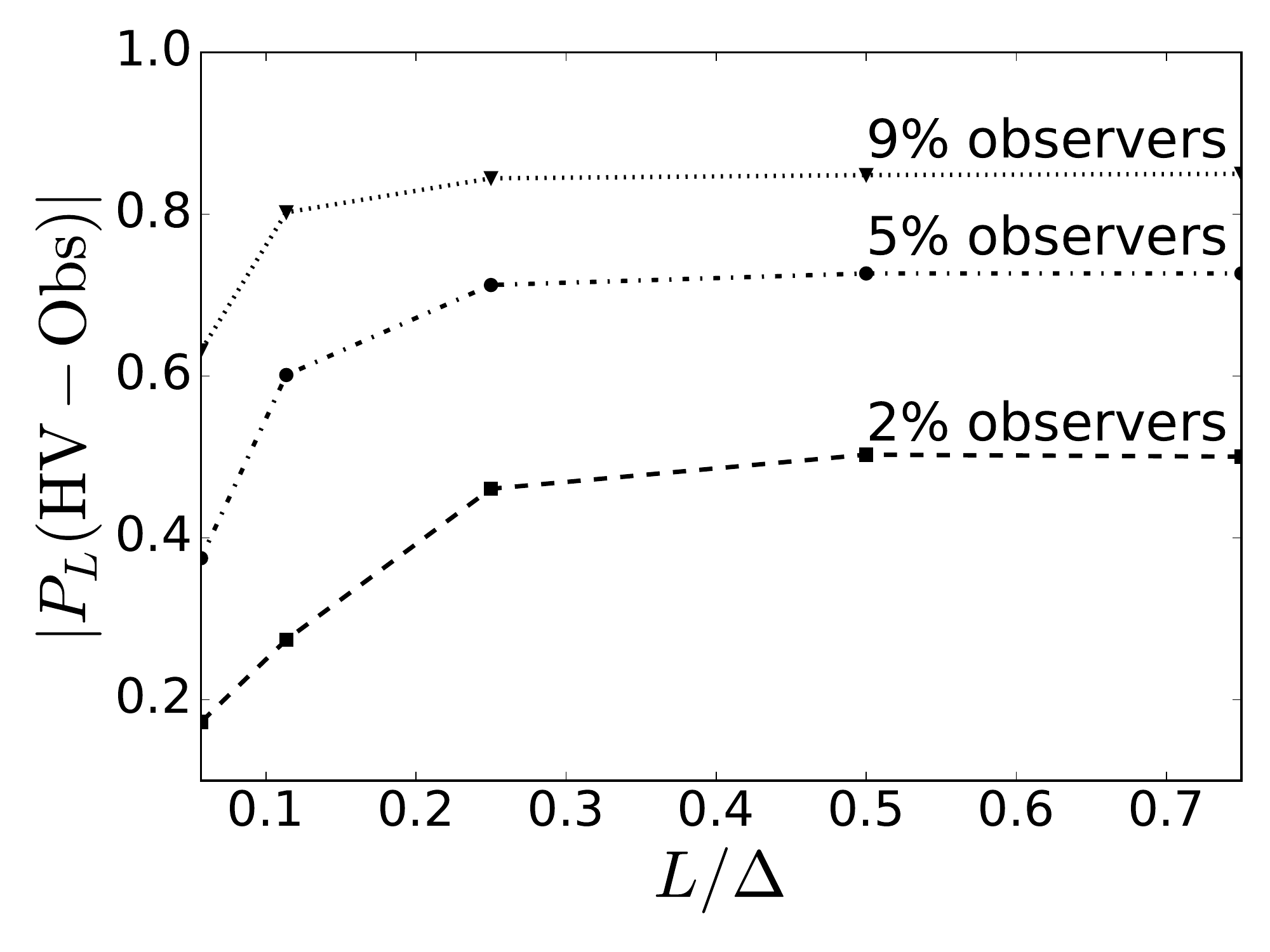}}
\subfigure[CR]{\includegraphics[width=0.47\columnwidth]{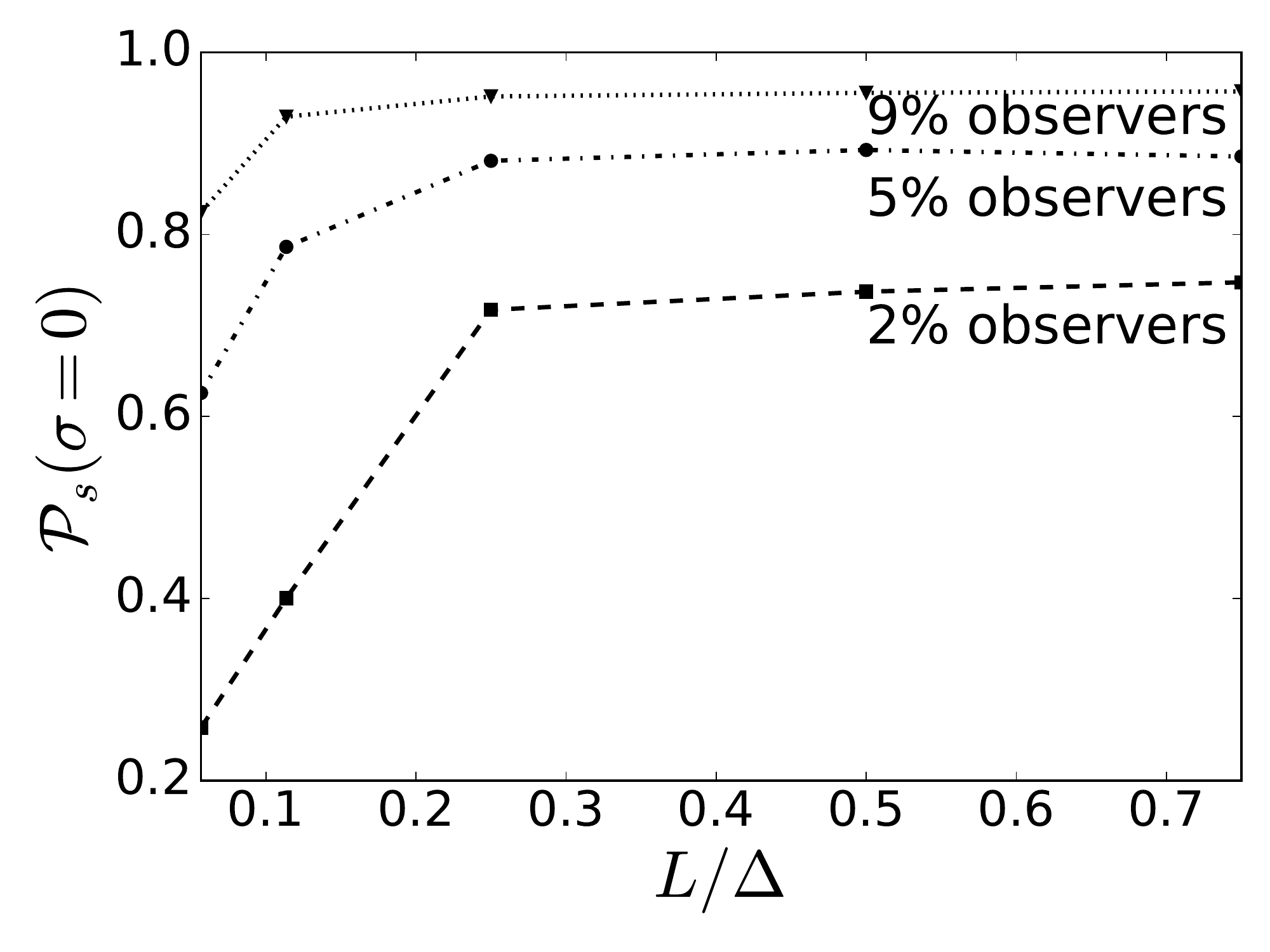}}
\subfigure[F \& F]{\includegraphics[width=0.47\columnwidth]{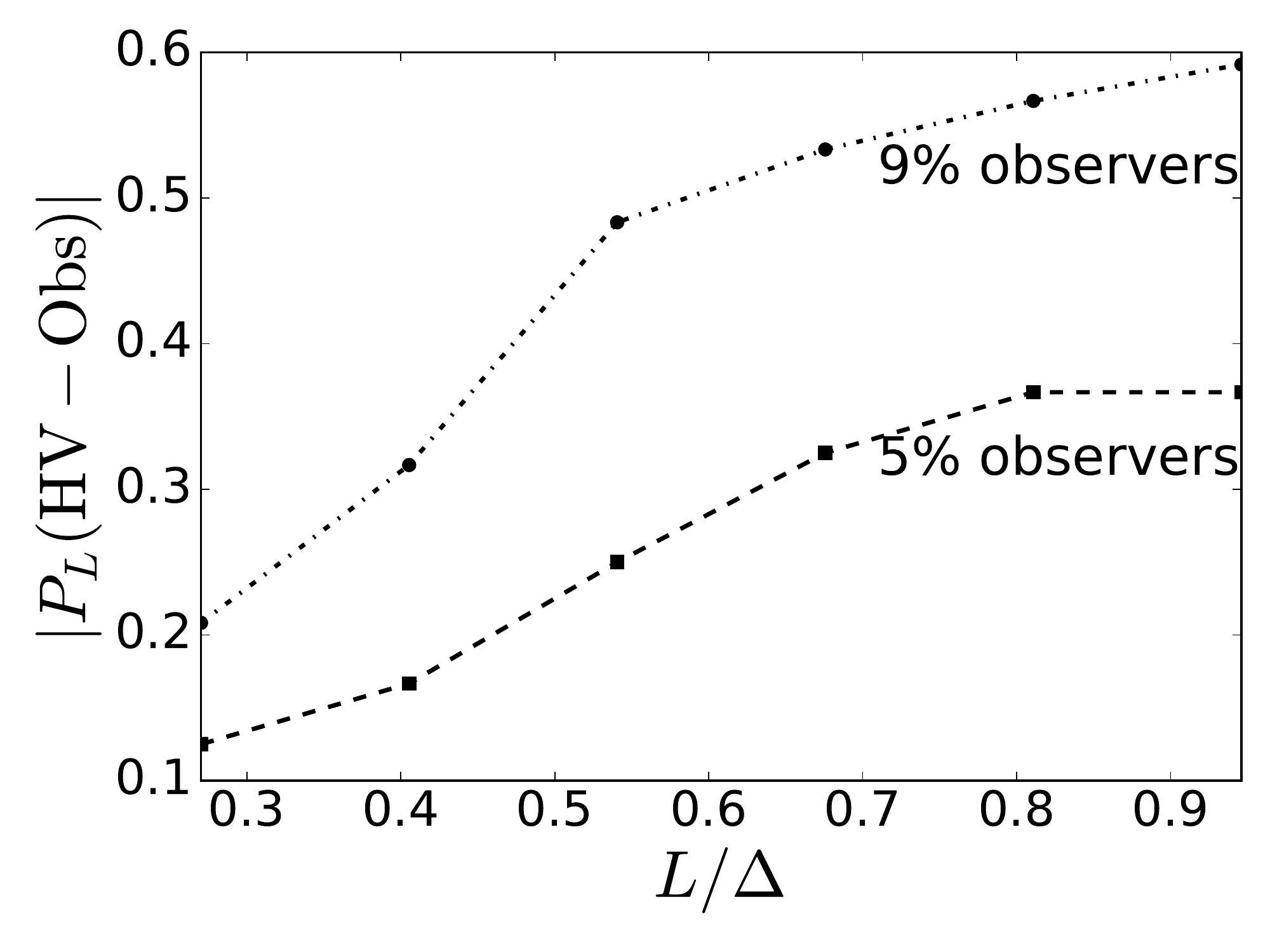}}
\subfigure[F \& F]{\includegraphics[width=0.47\columnwidth]{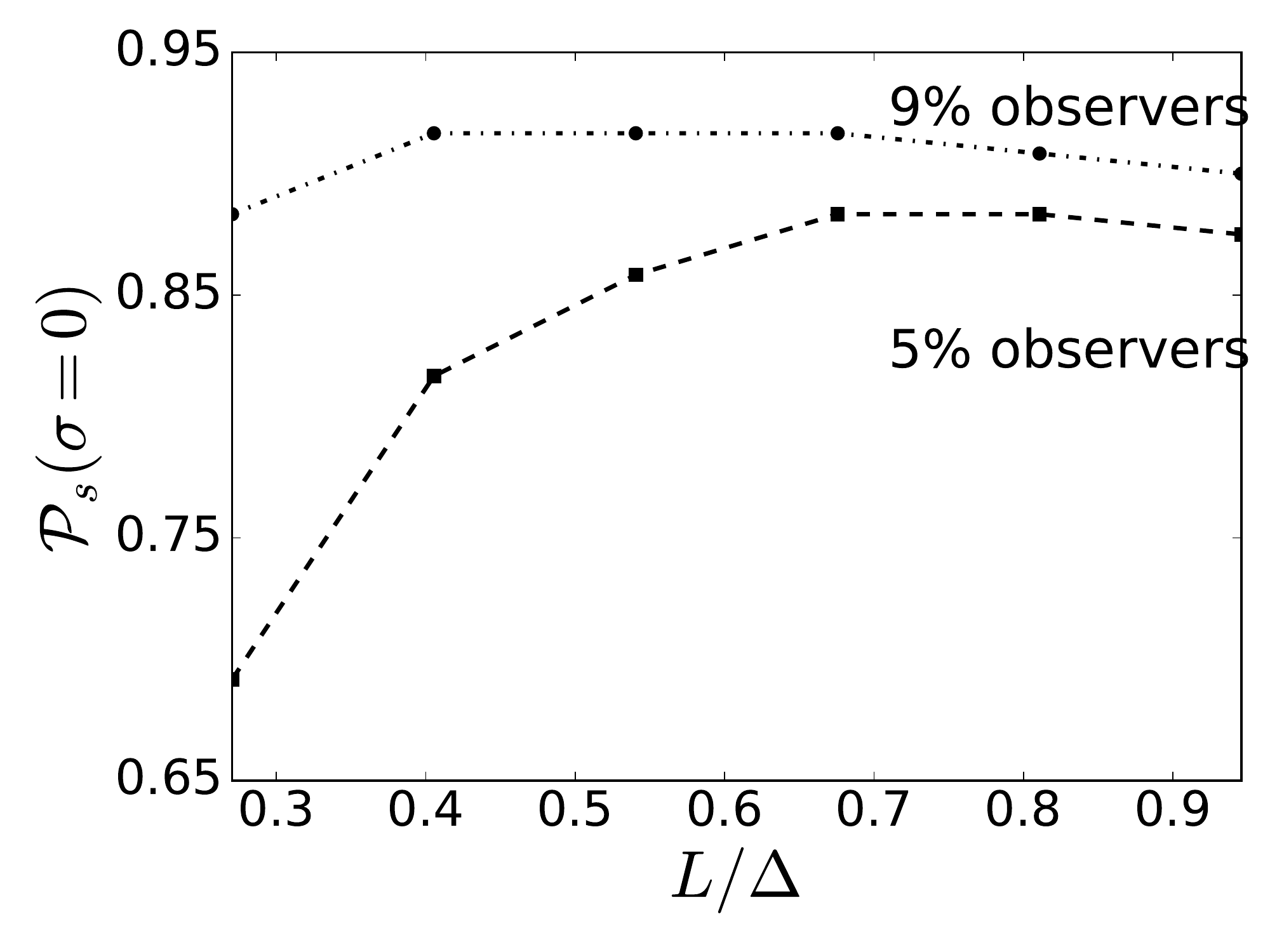}}
\caption{Fraction of nodes in $P_{L}(\textsc{hv-Obs})$ and success probability as a function of
$L/\Delta$ for the CR and the F \& F datasets comparing with the zero-variance setting.}
\label{fig:covered-prob-err}
\end{figure}

\section{Empirical results} \label{sec:exp_main}

We purposely run our experiments on three very different real-world networks that, in
addition to being relevant examples of networks for epidemic spread, display
different characteristics in terms of size, diameter, clustering coefficient
and average degree (see Table~\ref{tab:graphs}), enabling us to test the performance of our methods on
various topologies.

%: 1)  Friends \& Families, an in-person network of individuals through
%which disease can spread, 2) Facebook-like Message Exchange, an online social network through which ideas can spread, and 3) the California Road Network, through which pollution or contaminants can spread.

\subsection{Datasets}
%We start presenting the networks we use for our experiments.
%\subsubsection*{Datasets}
\noindent The three networks we consider are:
\begin{itemize}
\item[-] Friend \& Families (F \& F). This is a dataset containing 
phone calls, SMS exchanges and bluetooth proximity, among a
community living in the proximity of a university campus~\cite{aharony2011}. We select the largest connected component of individuals who took part in the experiment during its whole duration. The edges are weighted, according to the number of phone calls, SMSs, and bluetooth contacts.

\item[-] Facebook-like Message Exchange (FB)~\cite{opsahl2009}. As the individuals included in this dataset were living on the same university-campus, the number of messages exchanged is likely to be a good measure 
of in-person interaction. We selected links on which at least one message
was sent in both directions and individuals that had more than $1$ contact. 

\item[-] California Road Network (CR)~\cite{cal_data}. In order to obtain a single connected component and remove points that effectively represent the same location, we collapsed the points falling within a distance of $2$ km. Moreover we iteratively deleted all leaves.\footnote{The roads that cross the state border are not completely tracked in this dataset and terminate with a leaf. Some other leaves might represent remote locations, not necessarily close to the borders, but their influence on the epidemic should anyway be very low.} 
The diameter of this network is very large compared with that of the other two networks. The edges are weighted according to a rescaled version of the real distance (measured in km). 
\end{itemize}

\noindent In all three networks, edges are given (non-unit) integer \textit{weights}, which is
 realistic in many applications as the expected transmission delays are known only up to some level of precision.
Integer weights do \emph{not} simplify the estimation of the source; in fact, this makes it \emph{more} difficult to
distinguish between vertices.
For example, if the edges of the CR network were weighted according to
the Euclidean distance between the two endpoints, \textsc{lv-Obs} would use only a very small portion of the budget and the comparison would not be meaningful.

\begin{table*}[t]
\centering

\begin{tabular}{l| ccccccccc } 
\hline
& $|V|$ & $|E|$  & $\min(w_{uv})$ & $\mathrm{avg}(w_{uv})$ & $\max(w_{uv})$ & Avg Degree & Diameter & Avg Dist & Avg Clust.\\
\hline 
Friends \& Families &120  & 563  & 4 & 5.58 & 7 &  9.38 &  6 & 17.5 & 0.67\\
Facebook Messages   &1020 & 6205 & 1 & 2.97 & 5 & 12.16 &  5 & 6.69 & 0.09\\
California Roads    &1259 & 1801 & 1 & 1.71 & 9 &  2.86 &  66 & 55.3 & 0.2\\

\hline
\end{tabular}

\caption{Displays statistics for networks examined}\label{tab:graphs}
\end{table*}

%%%%%%%%%%%%%%

\begin{figure*}
\centering
\subfigure[CR, 2\%
observers]{\includegraphics[width=0.6\columnwidth]{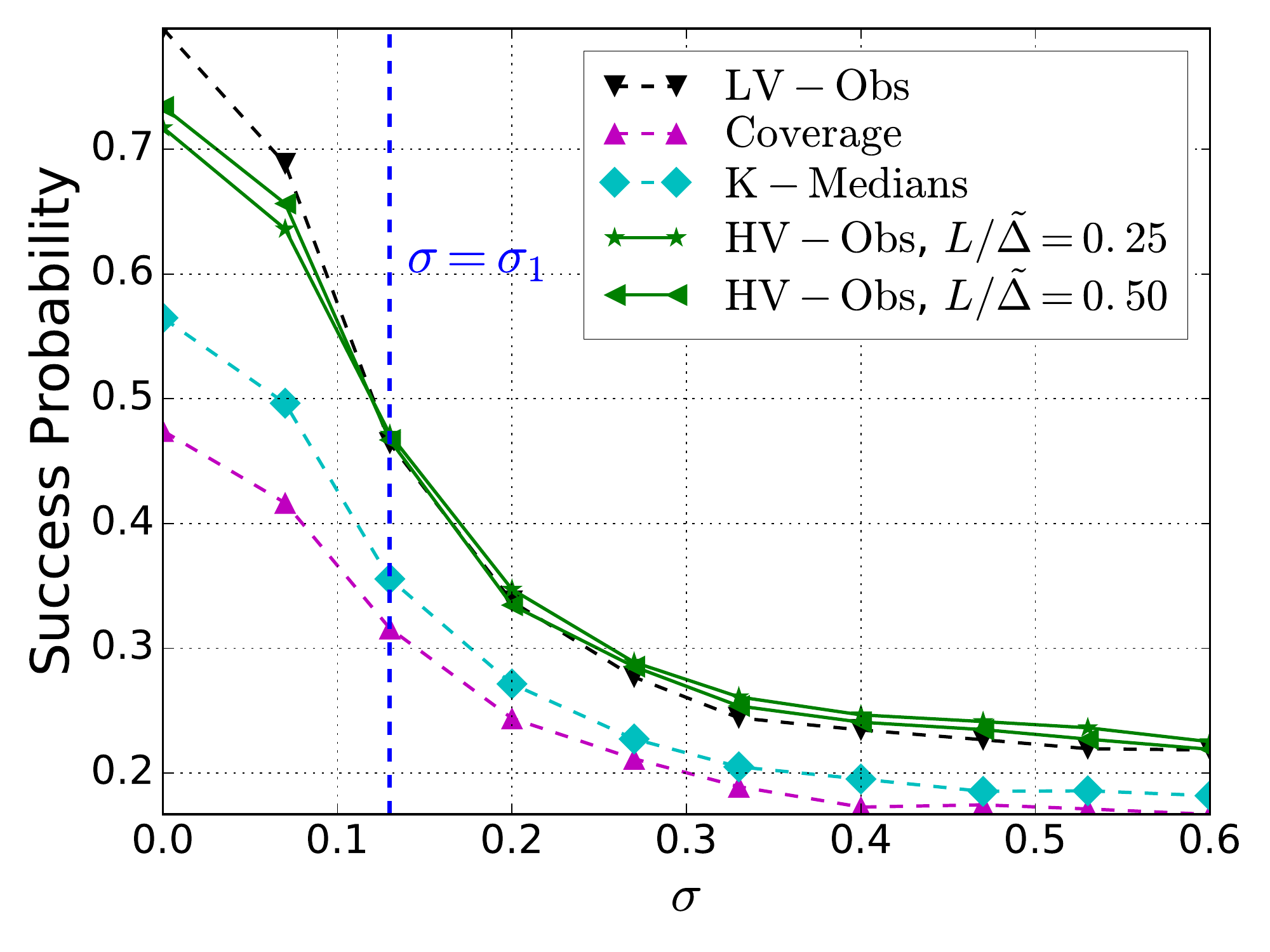}}
\subfigure[CR, 5\%
observers]{\includegraphics[width=0.6\columnwidth]{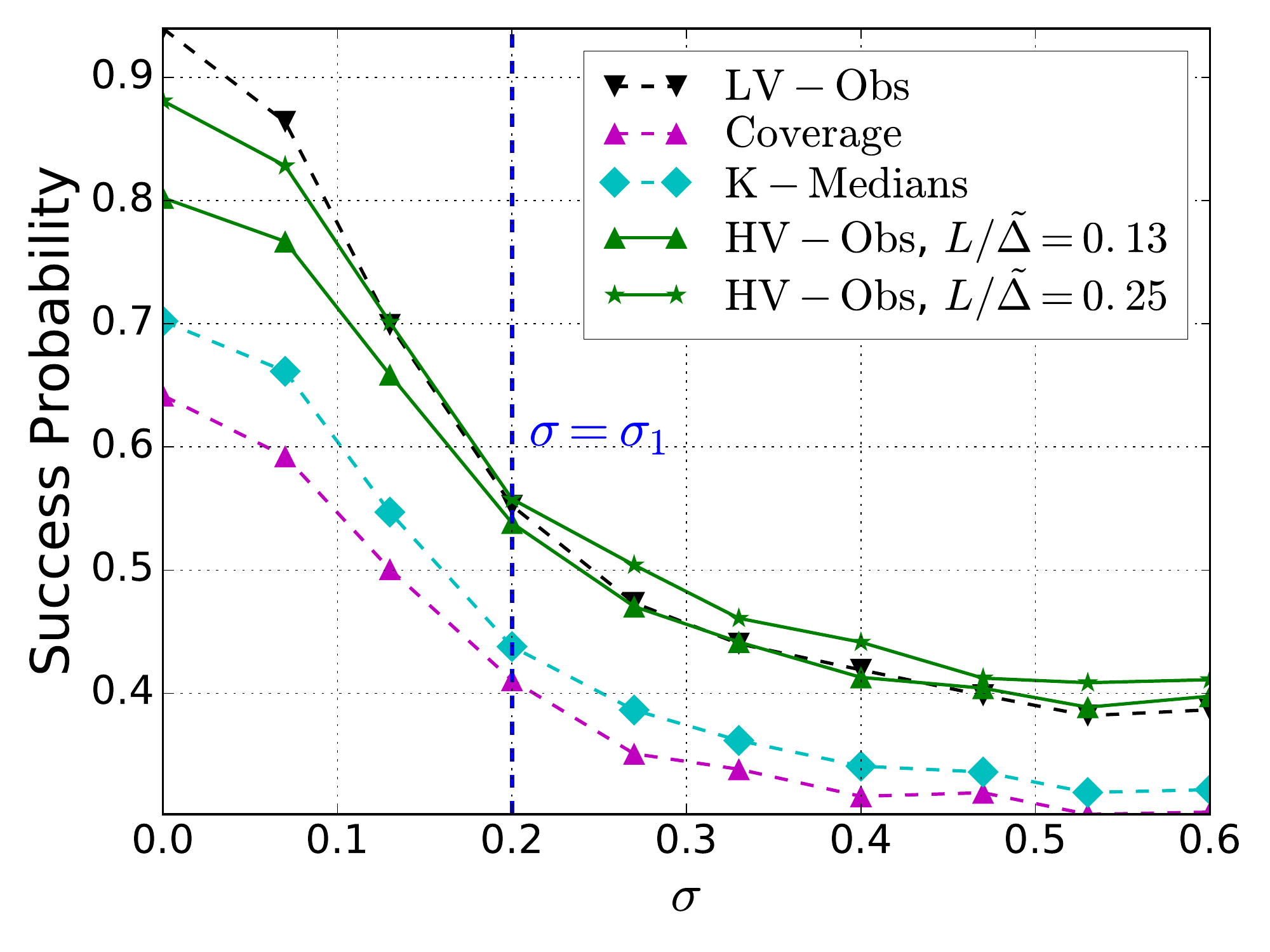}}
\subfigure[CR, 9\%
observers]{\includegraphics[width=0.6\columnwidth]{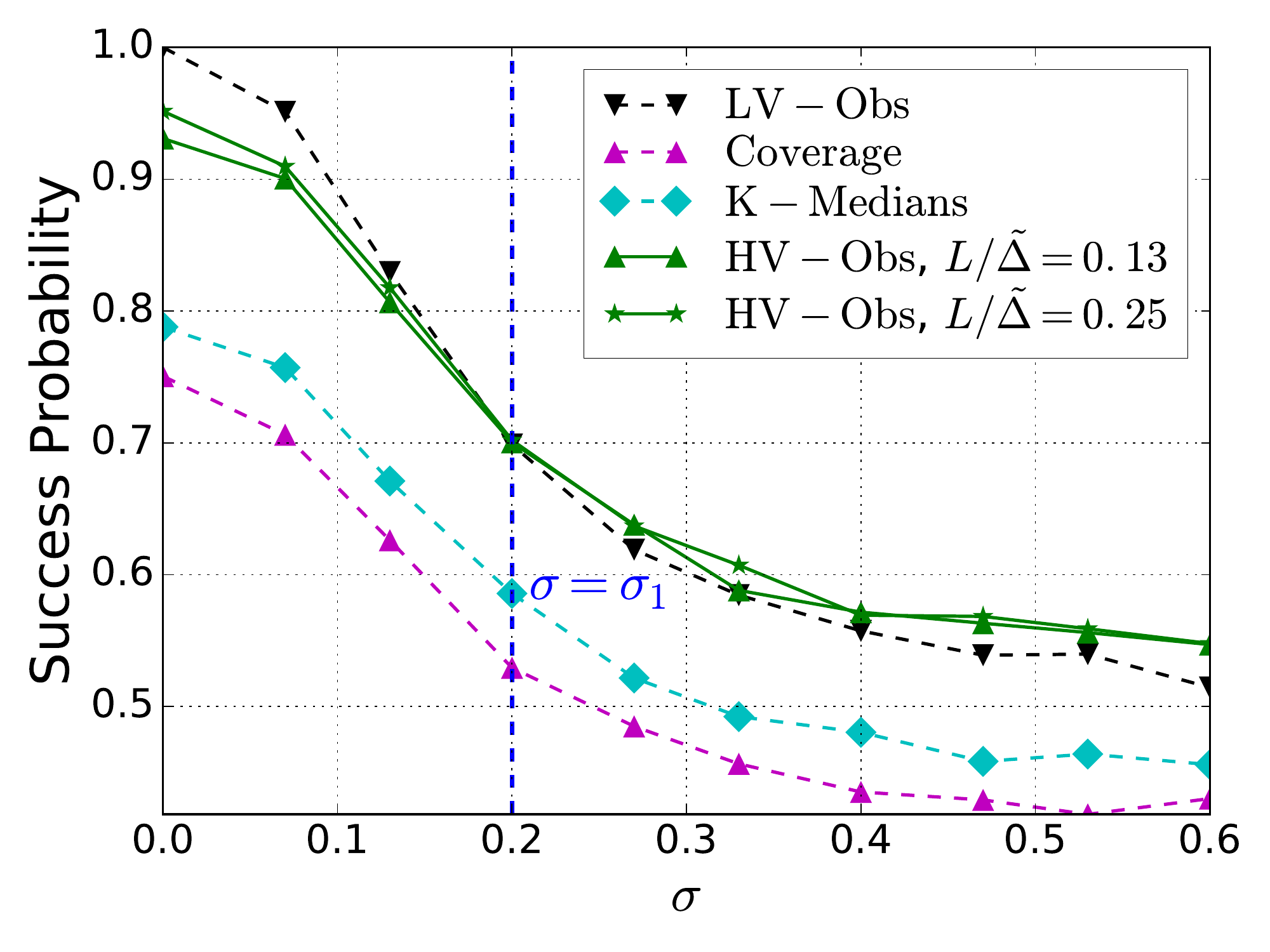}}\\
\subfigure[FB, 5\%
observers]{\includegraphics[width=0.6\columnwidth]{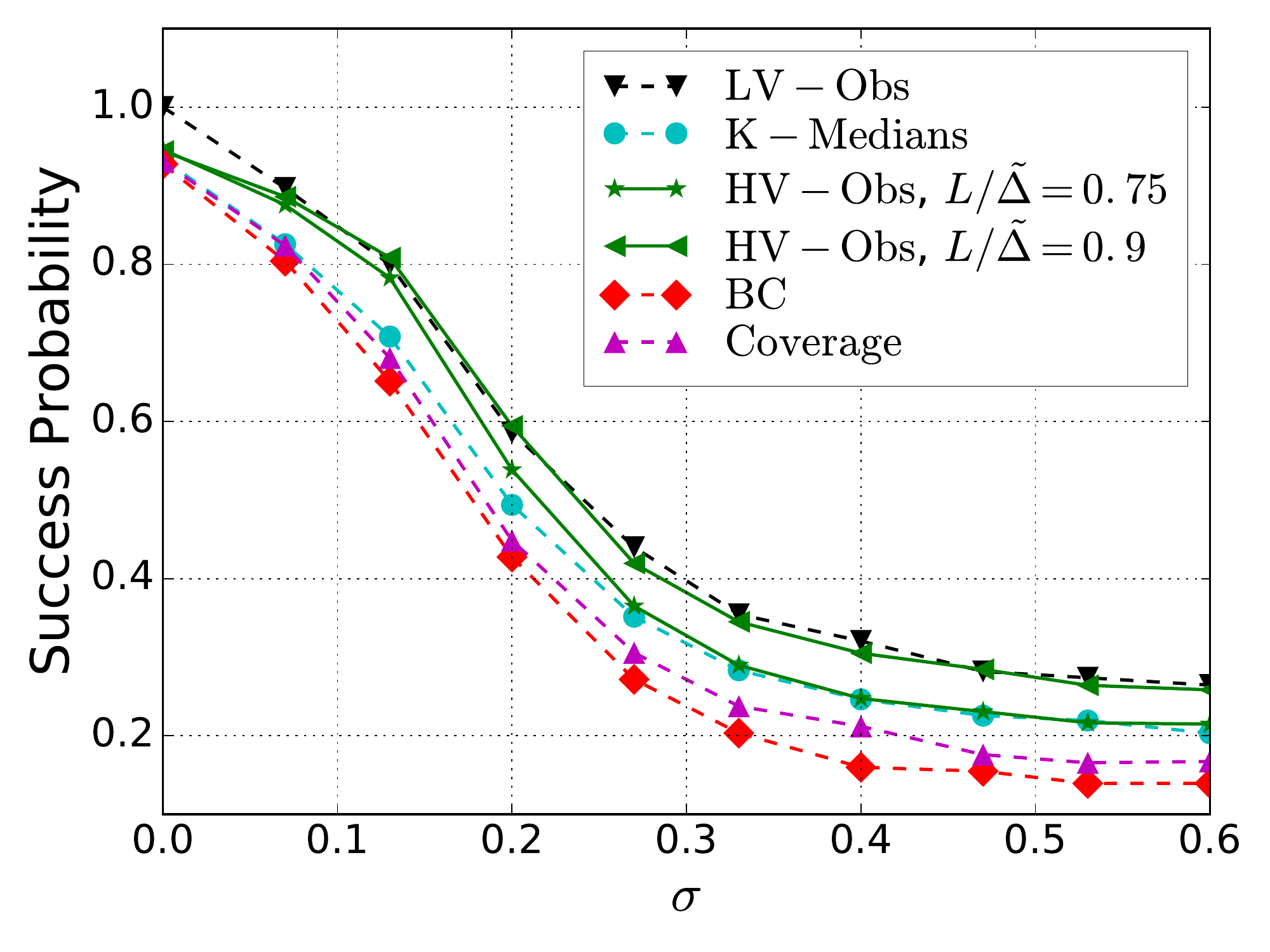}}
\subfigure[F \& F, 5\%
observers]{\includegraphics[width=0.6\columnwidth]{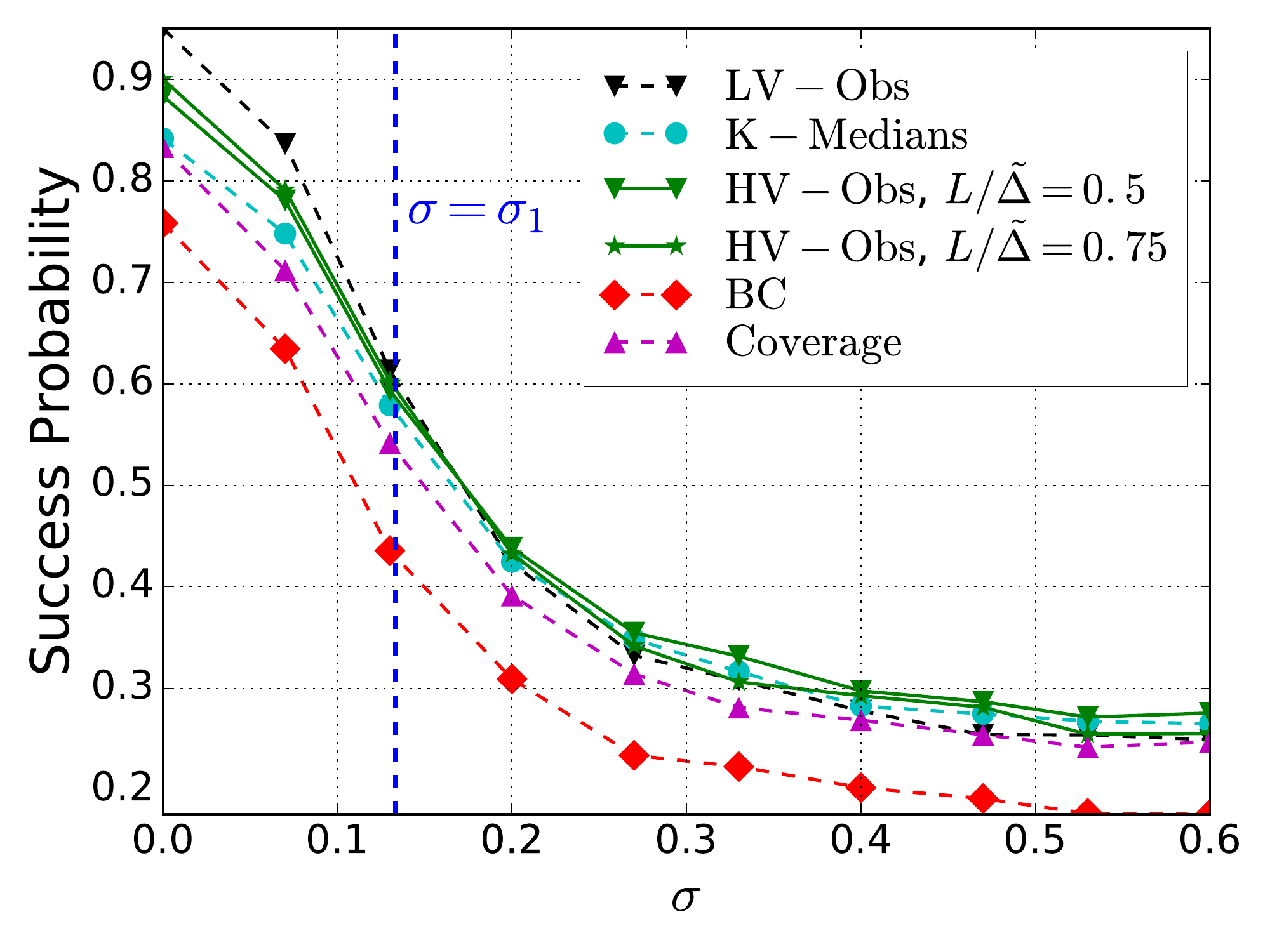}}
\subfigure[F \& F, 10\%
observers]{\includegraphics[width=0.6\columnwidth]{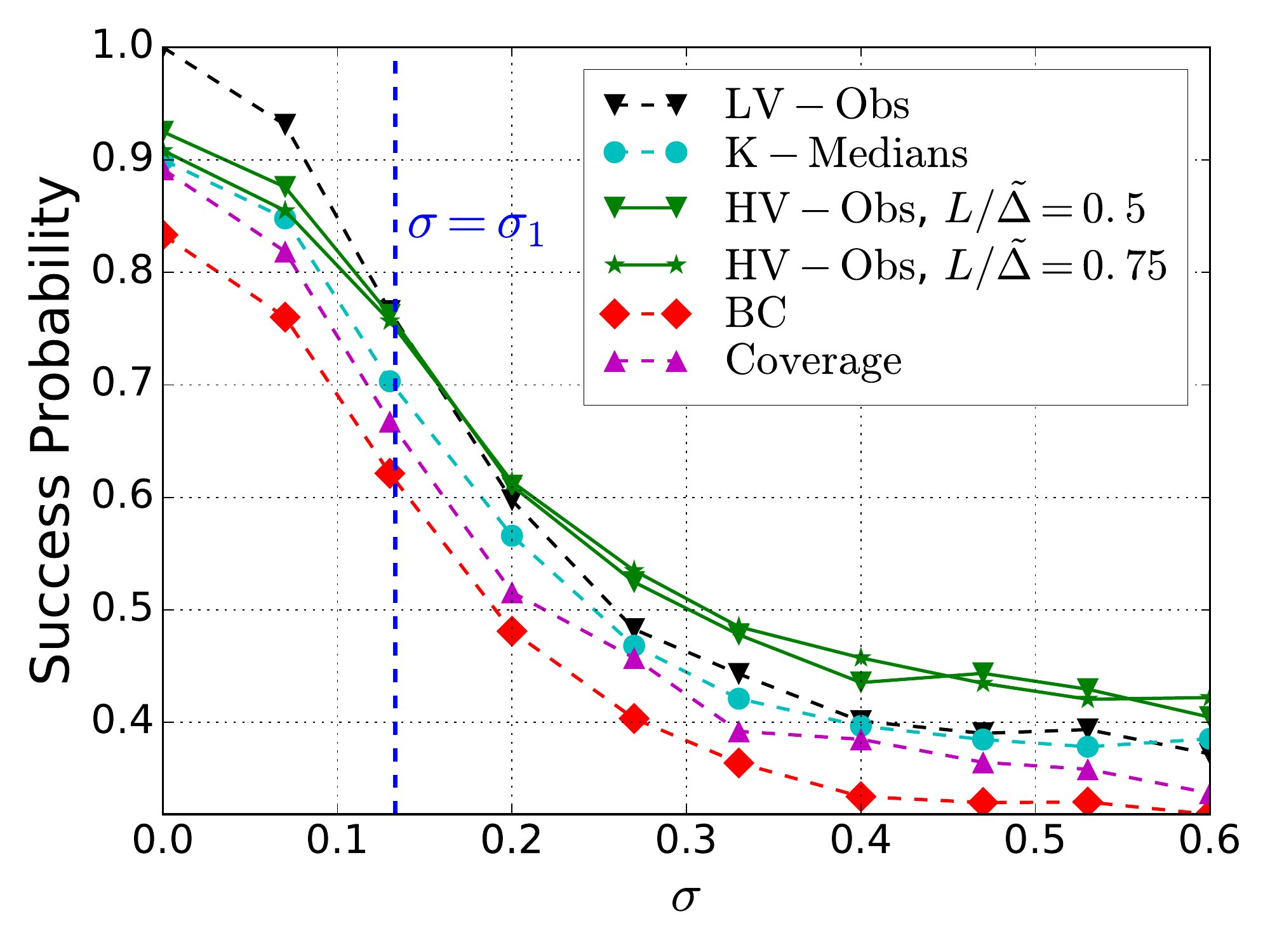}}
\caption{Success probability $\Ps$ as variance
$\sigma$ is increased.}
\label{fig:CR}
\label{fig:ff}
\end{figure*}
\subsection{Comparison against Benchmarks}
\label{sec:exp_bench}

\noindent We compare \textsc{hv-Obs} against the following benchmarks:
\begin{enumerate}
\item \textsc{lv-Obs}: this is our solution for the
low-variance case (see Section~\ref{sec:low-variance}).

\item \textsc{BC} (Betweenness Centrality): This is a popular
method for placing observers for source-localization  
(see, e.g., \cite{Louni14} and \cite{Seo12}, where it emerges as the best
heuristic for observer placement among those tested). 
It consists of the $k$ nodes having the largest BC, which is defined, for all $u \in V$ as 
$$BC(u) = \sum_{x, y \in V, x \neq y} \frac{\sigma_{x,y}(u)}{\sigma_{x,y}}$$
where $\sigma_{x,y}$ is the number of shortest paths between $x$ and $y$ and
$\sigma_{x,y}(u)$ is the number of those paths that passes through $u$.

\item Coverage-rate (\textsc{Coverage})~\cite{Zhang2016}: This approach maximizes 
the number of nodes that have an observer as a neighbor, i.e.,
$$\mathcal{C}(\mathcal{O}) = |\cup_{o \in \mathcal{O}} N_{o}|/n,$$
where $N_{o}$ denotes the set of neighbors of $o$ and $n=|V|$.
It has been shown to outperform several heuristics with a
diffusion model and an estimation setting that are very similar to ours. 

\item \textsc{k-Median}: this is the optimal placement for
the closely-related problem of maximizing the detectability of a flow \cite{Berry06}. The
\textsc{k-Median}
placement is the set of $k$ nodes $\mathcal{O}$ such that
\[\mathcal{O} = \argmin_{|\mathcal{O}|=k} \sum_{s \in V} (\min_{o \in \mathcal{O}} d(s, o)).\]
Determining the \textsc{k-Medians} of a network is 
NP-hard \cite{Kariv79}; we use a greedy heuristic for \textsc{k-Medians}.
\end{enumerate}

%%%%%%%%%%%%%%
\subsection{Experimental Results}\label{sec:results}

We estimate $\Ps$ and $\E[d(s^*, \hat{s})]$ for
different values of the variance $\sigma$. We generate epidemics by using each node in turn as the source. For the FB and CR
datasets, we run $5$ simulations per node and variance level; and for the F \& F dataset, as
the network is smaller, we run $20$ simulations per node and variance
level. For the FB and CR datasets, we estimate the
source based on the first $20$ observations only: Given the large size of the
network, it would be unrealistic to wait for all the network to get
infected before running the algorithm. The results for
$\Ps$ are displayed in Figure~\ref{fig:CR}. An approximation of the value $\sigma_1$, above which
\textsc{hv-Obs} outperforms \textsc{lv-Obs}, is marked with a vertical line. 
For the expected distance (weighted and in hops), see Appendix~\ref{app:figures}. 

We first take as budget for the observers the minimum
budget for which $\Ps($\textsc{lv-Obs}$) = 1$. This 
corresponds to $k \sim 9 \%$ for the F \& F dataset, $k \sim 9 \%$ for the
CR network and $k \sim 5\%$ for the FB dataset. This is the setting in which we expect 
the improvement of \textsc{hv-Obs} over \textsc{lv-Obs} to be especially strong: For smaller values of $k$ we expect
\textsc{lv-Obs} to be nearly optimal even in the high-variance regime because we do
not have enough budget to contrast both the topological
\emph{undistinguishability} among nodes (what \textsc{lv-Obs} is designed for)
and the accumulation of variance (what \textsc{hv-Obs} is designed for). For the
F \& F and the CR networks, we also experiment with smaller
percentages of observers and consistently find an improvement of
\textsc{hv-Obs} over \textsc{lv-Obs} in the high-variance regime: Below a certain amount of variance $\sigma_1$ \textsc{lv-Obs} performs better than \textsc{hv-Obs} for any choice of the parameter $L$, whereas above
$\sigma_1$ a calibrated choice of $L$ leads to a significant improvement. Such $L$ stays constant for all $\sigma>\sigma_1$, i.e., with the notation of Figure~\ref{fig:scheme} we have $\sigma_1=\sigma_F$.  
For the FB dataset instead, probably due to the low
diameter with respect to the number of nodes, we observe that \textsc{hv-Obs} does not improve
on \textsc{lv-Obs} for any value of $L$.  Both \textsc{lv-Obs}
and \textsc{hv-Obs} systematically outperform the baseline heuristics for
observer placement that we described in Section~\ref{sec:exp_bench}. For the
CR dataset the performance of Betweenness Centrality is particularly
poor and the results are not shown. The Coverage Rate heuristic outperforms
Betweenness Centrality on all three networks (confirming what found by
by Zhang et al.~\cite{Zhang2016}) but is consistently less effective than
K-Medians and our methods.

%%%%%%%%%%%%%%
\subsection{Robustness}
\label{sec:exp_variance_models}

To measure the robustness of our approach, we consider an alternate
transmission model, and we measure whether, without making any changes,
our observer placement still performs well. For every edge $uv
\in E$ with weight $w_{uv}$, we take $X_{uv} \sim \mbox{Unif}([(1-\varepsilon)w_{uv},
(1+\varepsilon)w_{uv}])$. We find comparable results (see Appendix~\ref{app:figures}); they suggest that our observer 
placement is not dependant on the exact transmission model and that the
variance of the transmission delays is really a key factor for a good observer placement.

%Other distributions for the variables $\{X_{uv}\}_{uv \in E}$, such as an exponential distribution would not make sense for our model. In fact our
%estimator heavily relies on time differences and exponential delays would
%result in an overwhelming variance. In fact, the
%variance of an exponential of mean $w_{uv}$ is $w_{uv}^2$. One could
%potentially consider weights such that $w_{uv}<1$ for every $uv \in E$. As
%discussed above, this does not make sense in our model, where the weights have
%integer values.
%
%\Brunella{Probably the symmetry of the distribution is also important?}

\section{Conclusion \& Future Work}
\label{sec:conclusion}

In this work, we have taken a principled approach towards budgeted observer placement for source localization. We are the first to have observed a dichotomy between the low and high-variance regimes, and we developed complementary approaches for both. 
We have evaluated our approaches against state-of-the-art and alternative heuristics and find that the performance of our algorithms is favourable.
%Furthermore 
%\Elisa{what is the complexity of k-median and betweeness centrality? is it any 
%better than what we have? if not we should mention somewhere -- perhaps here if
%our runtime is not worse.}
%Hence, this suggests a desirable and tractable approach for the important problem of budgeted source localization.

%\Elisa{I will expand above paragraph a bit once intro finalized. Shoudl emphasize: principled approach. for low noise approaches (approx) optimal as budget increses. for high noise adapt approrach using insight from here. experiments show good performance and robustness.}

%A direction for future work would be to measure the performance with \emph{worst case} rather than \emph{average case} metrics. This would be a first step towards extending our results to non-uniform prior distributions on the position of $s^*$; if we can handle (adversarially chosen) source distributions where the epidemic starts at the least-observed location, then this gives a bound on the performance of an \emph{arbitrary prior distribution}.

One natural extension would account for two
stages of observation; in the first stage, as in this work, we select a set of
observers to monitor the network. In the next stage, once an epidemic begins,
we deploy additional observers in the relevant region of the
network. This would pave the way for other types of \emph{adaptive} models,
including ones where we not only \emph{observe} a node but can act to
\emph{immunize} it or in which we can \emph{move} the observers as required.\\

%A different, yet related, line of work would be that of a \emph{repeated
%game} where we place observers and an adversary places the source. 
%Although largely not true for virulent epidemics, this could be a factor in the
%case of biowarfare or virtual attacks. %\footnote{One study
%which is similar in spirit to this line of work looks at how to place airport
%security checkpoints in order to catch attackers~\cite{pita}.} 
%In this case, it would be best to create an initial \emph{randomized} strategy for the observer placement and again permit observers to \emph{move} in the network.\\

\section*{Acknowledgements}
B.Spinelli was partially supported by the Bill \& Melinda Gates Foundation, under Grant No. OPP1070273

\bibliographystyle{abbrv}
\small{
\bibliography{lnhnobs}

\begin{thebibliography}{10}

\bibitem{cal_data}
California road network.
\newblock \url{http://www.census.gov/geography.html}.

\bibitem{aharony2011}
N.~Aharony, W.~Pan, C.~Ip, I.~Khayal, and A.~Pentland.
\newblock Social {fMRI}: Investigating and shaping social mechanisms in the
  real world.
\newblock {\em Pervasive and Mobile Computing}, 7(6):643--659, 2011.

\bibitem{Altarelli2014}
F.~Altarelli, A.~Braunstein, L.~Dall'Asta, A.~Lage-Castellanos, and
  R.~Zecchina.
\newblock Bayesian inference of epidemics on networks via belief propagation.
\newblock {\em Physical review letters}, 112(11):118701, 2014.

\bibitem{Berry06}
J.~Berry, W.~Hart, C.~Phillips, J.~Uber, and J.~Watson.
\newblock Sensor placement in municipal water networks with temporal integer
  programming models.
\newblock {\em Journal of Water Resources Planning and Management}, 132(4),
  2006.

\bibitem{Billi}
P.~Billingsley.
\newblock {\em Probability and Measure}.
\newblock John Wiley \& Sons, 1995.

\bibitem{Caceres07}
J.~C\'aceres, M.~Hernando, M.~Mora, I.~Pelayo, M.~Puertas, C.~Seara, and
  D.~Wood.
\newblock On the metric dimension of cartesian products of graphs.
\newblock {\em SIAM J. Discrete Mathematics}, 21(2):423--441, 2007.

\bibitem{ChenHW14}
X.~Chen, X.~Hu, and C.~Wang.
\newblock Approximability of the minimum weighted doubly resolving set problem.
\newblock In {\em 20th Annual Int. Computing and Combinatorics Conf. (COCOON)},
  pages 357--368. LNCS, 2014.

\bibitem{dong2013}
W.~Dong, W.~Zhang, and C.~W. Tan.
\newblock Rooting out the rumor culprit from suspects.
\newblock In {\em IEEE Int. Symposium on Information Theory (ISIT)}, pages
  2671--2675. IEEE, 2013.

\bibitem{Fanti2015}
G.~C. Fanti, P.~Kairouz, S.~Oh, and P.~Viswanath.
\newblock Spy vs. spy: Rumor source obfuscation.
\newblock In {\em SIGMETRICS}, pages 271--284. ACM, 2015.

\bibitem{jiang-survey}
J.~Jiang, S.~Wen, S.~Yu, Y.~Xiang, and W.~Zhou.
\newblock Identifying propagation sources in networks: State-of-the-art and
  comparative studies.
\newblock {\em IEEE Communication Survey Tutorials}, 2014.

\bibitem{kratica}
J.Kratica, M.~{\v{C}}angalovi{\'c}, and
  V.~Kova{\v{c}}evi{\'c}-Vuj{\v{c}}i{\'c}.
\newblock Computing minimal doubly resolving sets of graphs.
\newblock {\em Computers \& Operations Research}, 36(7):2149--2159, 2009.

\bibitem{Kariv79}
O.~Kariv and S.~Hakimi.
\newblock An algorithmic approach to network location problems. ii: The
  p-medians.
\newblock {\em SIAM journal of Applied Mathematics}, 37, 1979.

\bibitem{Krause08}
A.~Krause, J.~Leskovec, C.~Guestrin, J.~Vanbriesen, and C.~Faloutsos.
\newblock Efficient sensor placement optimization for securing large water
  distribution networks.
\newblock {\em Journal of Water Resources Planning and Management}, 134(6),
  2008.

\bibitem{Krishnasamy2014}
S.~Krishnasamy, S.~Banerjee, and S.~Shakkottai.
\newblock The behavior of epidemics under bounded susceptibility.
\newblock {\em SIGMETRICS Performormance Evaluation Review}, 42(1):263--275,
  June 2014.

\bibitem{celis15}
L.E.Celis, F.Paveti{\'c}, B.Spinelli, and P.Thiran.
\newblock Budgeted sensor placement for source localization on trees.
\newblock In {\em Latin-American Algorithms, Graphs and Optimization Symposium
  (LAGOS)}, 2015.

\bibitem{Lelarge2009}
M.~Lelarge.
\newblock Efficient control of epidemics over random networks.
\newblock In {\em 11th Int. Joint Conf. on Measurement and Modeling of Computer
  Systems, SIGMETRICS/Performance}, pages 1--12, 2009.

\bibitem{leskovec2007}
J.~Leskovec, A.~Krause, C.~Guestrin, C.~Faloutsos, J.~VanBriesen, and
  N.~Glance.
\newblock Cost-effective outbreak detection in networks.
\newblock In {\em 13th ACM SIGKDD Int. Conf. on Knowledge discovery and data
  mining}, pages 420--429. ACM, 2007.

\bibitem{lokhov2014}
A.~Y. Lokhov, M.~M{\'e}zard, H.~Ohta, and L.~Zdeborov{\'a}.
\newblock Inferring the origin of an epidemic with a dynamic message-passing
  algorithm.
\newblock {\em Physical Review E}, 90(1):012801, 2014.

\bibitem{Louni15}
A.~Louni, A.~Santhanakrishnan, and K.~Subbalakshmi.
\newblock Identification of source of rumors in social networks with incomplete
  information.
\newblock {\em ASE Eighth Int. Conf. on Social Computing (SocialCom)}, 2015.

\bibitem{Louni14}
A.~Louni and K.~Subbalakshmi.
\newblock A two-stage algorithm to estimate the source of information diffusion
  in social media networks.
\newblock {\em IEEE INFOCOM Workshop on Dynamic Social Networks}, 2014.

\bibitem{luo2013}
W.~Luo, W.~Tay, and M.~Leng.
\newblock How to identify an infection source with limited observations.
\newblock {\em IEEE Journal of Selelcted Topics in Signal Processing},
  8(4):586--597, 2014.

\bibitem{luo2012}
W.~Luo and W.~P. Tay.
\newblock Identifying infection sources in large tree networks.
\newblock In {\em 9th Annual IEEE Communications Society Conf. on Sensor, Mesh
  and Ad Hoc Communications and Networks (SECON)}, pages 281--289. IEEE, 2012.

\bibitem{Netrapalli2011}
P.~Netrapalli and S.~Sanghavi.
\newblock Learning the graph of epidemic cascades.
\newblock {\em SIGMETRICS Performance Evalulation Review}, 40(1):211--222, June
  2012.

\bibitem{opsahl2009}
T.~Opsahl and P.~Panzarasa.
\newblock Clustering in weighted networks.
\newblock {\em Social networks}, 31(2):155--163, 2009.

\bibitem{Pinto12}
P.~Pinto, P.~Thiran, and M.~Vetterli.
\newblock Locating the source of diffusion in large-scale networks.
\newblock {\em Physical Review Letters}, 109, 2012.

\bibitem{prakash}
B.~A. Prakash, J.~Vreeken, and C.~Faloutsos.
\newblock Spotting culprits in epidemics: How many and which ones?
\newblock {\em IEEE 12th Int. Conf. on Data Mining (ICDM)}, 12:11--20, 2012.

\bibitem{Seo12}
E.~Seo, P.~Mohapatra, and T.~Abdelzaher.
\newblock Identifying rumors and their sources in social networks.
\newblock In {\em SPIE Defense, Security, and Sensing}, pages 83891I--83891I.
  Int. Society for Optics and Photonics, 2012.

\bibitem{Shah}
D.~Shah and T.~Zaman.
\newblock Rumors in a network: Who's the culprit?
\newblock {\em IEEE Transactions on information theory}, 57, 2011.

\bibitem{sundareisan2015hidden}
S.~Sundareisan, J.~Vreeken, and B.~A. Prakash.
\newblock Hidden hazards: Finding missing nodes in large graph epidemics.
\newblock In {\em SIAM Int. Conf. on Data Mining (SDM)}. SIAM, 2015.

\bibitem{Zejn13}
S.~Zejnilovic, J.~Gomes, and B.~Sinopoli.
\newblock Network observability and localization of the source of diffusion
  based on a subset of vertices.
\newblock In {\em $51^{st}$ Annual Allerton Conf. on Communication, Control \&
  Computing}, pages 847--852, 2013.

\bibitem{Zhang2016}
X.~Zhang, Y.~Zhang, T.~Lv, and Y.~Yin.
\newblock Identification of efficient observers for locating spreading source
  in complex networks.
\newblock {\em Physica A: Statistical Mechanics and its Applications},
  442:100--109, 2016.

\bibitem{Zhang-res}
Z.~Zhang, W.~Xu, W.~Wu, and D.-Z. Du.
\newblock A novel approach for detecting multiple rumor sources in networks
  with partial observations.
\newblock {\em Journal of Combinatorial Optimization}, pages 1--15, 2015.

\bibitem{zheng2015}
L.~Zheng and C.~W. Tan.
\newblock A probabilistic characterization of the rumor graph boundary in rumor
  source detection.
\newblock In {\em IEEE Int. Conf. on Digital Signal Processing (DSP)}, pages
  765--769. IEEE, 2015.

\bibitem{zhu2013}
K.~Zhu and L.~Ying.
\newblock Information source detection in the {SIR} model: A sample path based
  approach.
\newblock In {\em Information Theory and Applications Workshop (ITA), 2013},
  pages 1--9. IEEE, 2013.

\end{thebibliography}
}

\clearpage

\newpage

\appendices
\section{Other Metrics}
\label{app:other_metrics}
In this section we define other metrics of interest. In particular, we consider
an adversarial setting (e.g., in the case of bio-warfare) where if our observers are known, the adversary would select the worst location for the source.

First, we consider minimum success probability, which is
\begin{eqnarray*}
 \hat \Ps(\mathcal{O}) &\defeq&  1- \max_{E_i} \left( \P(\hat{s} \neq s^* | s^*\in E_i) \right)  %\\
% &=& \max_{s \in V} \left((1 - Q|_{[s]_\mathcal{O}}(s))Q(s)\right), \\
%&=& \max_{s \in V} \left(\left(1 - \frac{Q(s)}{Q([s]_\mathcal{O})}\right)Q(s)\right). \\
\end{eqnarray*}
\noindent where $\{E_i\}$ are the equivalence classes with respect to $\mathcal{O}$. Note that in an adversarial setting, we would not consider any prior, rather would select $\hat s \in \argmax_{E_i} \P(\hat{s} \neq s^* | s^*\in E_i) $ uniformly at random;  given any non-uniform distribution, the adversary could place the source at the location with lowest probability. 
%Hence, the corresponding pseudo-entropy is defined below.
%\begin{definition}[Max Prob. Error Pseudo-Entropy]
%Given a set of observers $\mathcal{O} \subseteq V$ with corresponding equivalence
%classes $\{E_i\}$. The maximum error probability pseudo-entropy of $\mathcal{O}$ is  
%\[ \hat H_e(\mathcal{O}) \defeq \max_i (|E_i| -1) .\] 
%\end{definition}

%%%%%%

For the same reasons, we may also wish to consider the \emph{maximum distance} between the true and the estimated source as a metric.
\begin{eqnarray*}
\max (d(s^*, \hat{s})) &\defeq& \max_{s \in V} \Delta_s = \max_i \Delta_i, 
\end{eqnarray*}
\noindent where $\Delta_s$ (similarly $\Delta_i)$ denotes the diameter of equivalence class $[s]_\mathcal{O}$ (similarly $E_i)$. 
Note that, in particular, this is independent of any prior. 

Another natural consideration which interpolates between expected and worst-case metrics is the \emph{expected maximum distance} between the true and
the estimated source. This captures the case where there is a prior $Q$ on the
source, and we are able to identify the equivalence class of $s^*$, but make
the \emph{worst-case} estimation $\hat s$ within that class.
\begin{eqnarray*}
\E[\max(d(s^*, \hat{s}))] &\defeq& \sum_{s \in V} \P(s^* = s)  (\max_{u \in [s]_\mathcal{O}} d(s,u)) \\
&=& \sum_{s \in V} Q(s) \Delta_s =  \sum_{i} Q(E_i) \Delta_i .
\end{eqnarray*}

\section{Double Resolving Sets} \label{app:DRS_CHW}

The problem of \emph{minimizing} the required number of observers in order to perfectly identify the source in the zero-variance setting has been studied  \cite{ChenHW14}; 
an observer set $\mathcal{O}$ such that $\Ps(\mathcal{O}) = 1$ is called a Doubly
Resolving Set (DRS). While the original formulation of the DRS problem is slightly
different, this version follows straightforwardly from our observations in
Section~\ref{sec:low-variance}.

\begin{definition}[Double Resolving Set] %\quad \\
Given a network $\G$, $S \subseteq
V$ is said to be a Double Resolving Set of $\G$ if for any $x, y \in V$ there exist
$u, v \in S$ s.t. $d(x, u) - d(x, v) \neq d(y, u)-d(y, v)$.
\end{definition}

\noindent Finding a Doubly Resolving Set of minimum size is known to be  
NP-hard~\cite{kratica}. An approximation algorithm, based on a greedy
minimization of an \emph{entropy} function, has been studied. 
Note that this has no connection to true information-theoretic entropy.
%Translated to our setting, their definition is as follows.

\begin{definition}[Entropy \cite{ChenHW14}]
Let $\mathcal{G}$ a network, $\mathcal{O} \subseteq V$, $|\mathcal{O}|=k$ a set of observers.
The entropy of $\mathcal{O}$ is 
\[H_\mathcal{O} = \log_2(\prod_{[u]_\mathcal{O} \subseteq V}|[u]_\mathcal{O}|! ).\]
\end{definition}

\noindent Note that $H_\mathcal{O}$ is minimized if and only if each equivalence class
consists of only one node and hence if and only if $\Ps=1$. %They also consider a greedy process.
%
%The greedy algorithm of \cite{ChenHW14} works as follows: at step $j$
%we add to the previously selected nodes $O_{j-1}$ the node $o_j$ such that
%$H_{\mathcal{O}_{j-1}} - H_{\mathcal{O}_{j-1}\cup o_j}$ is maximized.
%
However, despite the fact that $H_\mathcal{O}$ is minimized when $\Ps$ is maximized and that both act on the same set of equivalence classes for a given $\mathcal{O}$, the greedy processes that minimize $H_\mathcal{O}$ and maximize $\Ps$ are not 
the same. This can be seen by rewriting both objective functions in the following way. Let $[c_1, \ldots, c_q]$ be the sequence of equivalence class sizes.
Then $H_\mathcal{O}$ can be written as $H_\mathcal{O}([c_1, .., c_q]) = \sum_{i=1}^l \sum_{j=2}^{c_i} \log(j) =
\sum_{i=2}^{\max{c_j}} \log(i) \#\{c_j \geq i\}.$ 
Analogously we have the following equality for the success probability $\Ps([c_1, \ldots , c_q])$: $n (1-\Ps([c_1, \ldots, c_q])) = n - q = \sum_{i=2}^{\max{c_j}} \#\{c_j \geq i\}$
%Looking at the last two equations we see that, even if we were to minimize the
 %error probability (as opposed to maximize the success probability) nodes
%that give the same improvement with respect to the error probability
%minimization can produce different gains than for the minimization of the
%entropy. 
Hence, though similar in spirit, a greedy minimization of $H_\mathcal{O}$ is
not related to a greedy optimization of $\Ps$ (or $\E[d(s^*, \hat s)]$).

\section{Alternate objective functions}\label{app:alternate-obj}% for\textsc{lv-Obs}}
\begin{table}[H]
\begin{center}
\begin{tabular}{l c c c}
\hline
\multicolumn{4}{c}{Random Geometric Graph, $N=100$, $r=0.2$}\\
\hline 
\\
& $\frac{\Ps(\Phi_{dist})-\Ps(\Phi)}{\Ps(\Phi)}$ &
$\frac{\E_d(\Phi_{dist})-\E_d(\Phi))}{\E_d(\Phi_{dist}) + 1}$ &
$\frac{\Ps(\Phi_{ent})-\Ps(\Phi)}{\Ps(\Phi)}$ \\
\\
\hline
$k = 2$ & -0.205 & -0.101 & -0.033\\

$k = 4$ & -0.014 & 0.003 & -0.007\\

$k = 8$ & -0.003 & 0.002 & -0.003\\

%p_succ {8: -0.0029209230783115098, 2: -0.20508951299818215, 4: -0.014057332261031879} 
%{8: 0.00070974401174064871, 2: 0.011588443441682573, 4: 0.0019920322982742403}
%exp_dist {8: 0.0022245634143846353, 2: -0.10124463861864313, 4: 0.0030978436836818979} 
%{8: 0.00057527401441799205, 2: 0.0053866700811776675, 4: 0.001287855515651229}
%entropy {8: -0.0026394713688559159, 2: -0.033271714826281606, 4: -0.0070808558040089346} 
%{8: 0.00078449185668930078, 2: 0.0046802169889939424, 4: 0.0015609662940034004}

\hline
\\
\hline
\multicolumn{4}{c}{Barab\`asi Albert Graph, $N=100$, $m=3$}\\
\hline 
\\
& $\frac{\Ps(\Phi_{dist})-\Ps(\Phi)}{\Ps(\Phi)}$ &
$\frac{\E_d(\Phi_{dist})-\E_d(\Phi))}{\E_d(\Phi_{dist}) + 1}$ &
$\frac{\Ps(\Phi_{ent})-\Ps(\Phi)}{\Ps(\Phi)}$ \\
\\
\hline
$k = 2$ & -0.168 & -0.023 & -0.037\\

$k = 4$ & -0.039 & -0.025 & -0.028\\

$k = 8$ & -0.004 & 0.003 & 0.005\\

%p_succ {8: -0.0040288304636049007, 2: -0.16766666666666738, 4: -0.038821302768456377} 
%{8: 0.00079897832366351688, 2: 0.010056697977529654, 4: 0.0041361325112192456}
%exp_dist {8: 0.0032529048540672306, 2: -0.022757764542173143, 4: -0.025518557172934678} 
%{8: 0.0010488384705952081, 2: 0.00088571189960797033, 4: 0.0020727001805828761}
%entropy {8: 0.004911302926123691, 2: -0.03755555555555587, 4: -0.028699906829131535} 
%{8: 0.00071993346016222069, 2: 0.0067184297953522803, 4: 0.0043422286043970585}

\end{tabular}\caption{Comparison of \textsc{lv-Obs} ($\Phi$) with the greedy algorithms that
minimize the entropy function of \cite{ChenHW14}
($\Phi_{ent}$) and the expected distance ($\Phi_{dist}$)}\label{table}
\end{center}
\end{table}

Here we compare Algorithm \ref{algo:budget}, denoted in this section as $\Phi$, with two
other greedy algorithms that allocate the budget for observers according to
different objective functions:
\begin{enumerate}
\item $\Phi_{ent}$ minimizes the entropy function $H_{\mathcal{O}}$ \cite{ChenHW14}
(see Section \ref{app:DRS_CHW});
\item $\Phi_{dist}$ minimizes the expected distance (see Equation
\eqref{eq:exp_dist}). 
\end{enumerate}
We considered different topologies and different budgets $k$ for the observers. The
results are given in the form of (averaged) relative differences in Table
\ref{table}. The standard error of measurement is not reported for the sake of
readability but it was checked to be small: approximately $10^{-2}$ for $k=2$
and $(\Ps(\Phi_{dist})-\Ps(\Phi))/\Ps(\Phi)$; on the order of
$10^{-3}$ or smaller in all the other cases.
Note that, since the expected distance can be $0$ we add $1$ in the
denominator when comparing $\E_d(\Phi_{dist})$ and $\E_d(\Phi)$.
The results achieved by these algorithms are, on average, worse than those of
Algorithm \ref{algo:budget} ($\Phi$) independently of the graph topology. The
only exception is the minimization of the expected distance when $k$ is very
small.
%, but also in this case the proposed algorithm $\Phi$ gives competitive results. 

\section{Hardness of Budgeted Observer Placement}\label{app:hardness}

\newcommand{\A}{\mathcal A}

\begin{theorem}
Given a network $\G=(V,E)$ and a budget $k$, finding an observer set $\mathcal{O}$ which
maximizes $\Ps$ is NP-hard. 
\end{theorem}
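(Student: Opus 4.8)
The plan is to reduce from the decision version of the Double Resolving Set problem, which asks whether a given network $\G$ admits a DRS of size at most $k$ and is known to be NP-hard~\cite{kratica}. The bridge between the two problems is the identity in part~1 of Proposition~\ref{Prop:p_err}: in the zero-variance setting $\Ps(\mathcal{O}) = q/n$, where $q$ is the number of equivalence classes induced by $\mathcal{O}$. Consequently $\Ps(\mathcal{O}) = 1$ if and only if every equivalence class is a singleton, which by Lemma~\ref{lemma:equiv_DRS} is exactly the condition that $\mathcal{O}$ be a DRS. Thus driving $\Ps$ up to its maximum value $1$ is the same as producing a doubly resolving set.

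The second ingredient I would establish is a monotonicity property: adding a node to $\mathcal{O}$ can only \emph{refine} the partition into equivalence classes and never coarsen it, since Definition~\ref{def:equiv} imposes strictly more constraints on a pair being equivalent as more observers are added. Formally $[u]_{\mathcal{O} \cup \{w\}} \subseteq [u]_{\mathcal{O}}$ for every $u$ and $w$, so $q$ — and hence $\Ps$ — is non-decreasing in $\mathcal{O}$. It follows that, for any $2 \leq k \leq n$, a DRS of size at most $k$ exists if and only if a DRS of size exactly $k$ exists: one direction is trivial, and for the other I would pad a smaller DRS with arbitrary extra nodes, which preserves the DRS property. Observe that $V$ itself is always a DRS (taking $u = x$ and $v = y$ resolves any pair $x \neq y$), so the relevant range of $k$ is exactly $[m, n]$, where $m$ is the minimum DRS size.

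With these two facts in hand the reduction is immediate. Given a DRS decision instance $(\G, k)$, I would invoke the hypothetical polynomial-time algorithm for budgeted observer placement with budget $k$, obtain a maximizer $\mathcal{O}$, and compute $\Ps(\mathcal{O})$. By Lemma~\ref{lemma:equiv_DRS} this only requires forming the distance vectors of all nodes and counting the distinct ones, which is polynomial. I would then answer \textbf{yes} precisely when $\Ps(\mathcal{O}) = 1$: by the first observation this occurs iff the optimal size-$k$ set is a DRS, and by the monotonicity-and-padding argument this in turn occurs iff $\G$ has a DRS of size at most $k$. Hence a polynomial-time solver for the budgeted maximization would solve DRS decision in polynomial time, establishing NP-hardness.

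The calculations here are entirely routine; the only point demanding care is the equivalence of the ``at most $k$'' and ``exactly $k$'' formulations, which is precisely what the monotonicity-and-padding step secures. I therefore expect that step, rather than the reduction skeleton, to be where a careless argument could slip — for instance by overlooking that the minimum DRS size $m$ may exceed $k$, in which case the oracle correctly returns $\Ps < 1$ and the answer is \textbf{no}.
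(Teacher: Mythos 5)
Your proof is correct, and it rests on the same bridge as the paper's: in the zero-variance setting $\Ps(\mathcal{O})=q/n$, so $\Ps(\mathcal{O})=1$ exactly when $\mathcal{O}$ is a DRS. The execution, however, differs in a genuine way. The paper reduces from the \emph{minimization} version of DRS: given a hypothetical polynomial-time algorithm $\mathcal{A}$ for budgeted placement, it calls $\mathcal{A}(\G,k)$ for every $k=1,\ldots,|V|$ and returns the first $k$ at which the maximizer attains $\Ps=1$; since no set smaller than a minimum DRS can attain $\Ps=1$, that first success is exactly the minimum DRS size, and no monotonicity or padding argument is ever needed. You instead reduce from the \emph{decision} version (``does $\G$ admit a DRS of size at most $k$?'') with a single oracle call at budget $k$, which forces you to supply the extra ingredient the paper avoids: that a DRS of size at most $k$ can be padded to one of size exactly $k$, justified by the fact that adding observers only refines the equivalence classes. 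Your route buys a leaner reduction --- one oracle invocation and runtime $O(p_{\mathcal{A}}(n)+n)$, versus the paper's $O(n(p_{\mathcal{A}}(n)+n))$ Turing reduction --- at the cost of that padding lemma, which you state and prove correctly (including the observation that $V$ itself always doubly resolves the network, a fact the paper also needs, there to guarantee its loop terminates). Both arguments are sound; they are two standard ways of transferring hardness from an un-budgeted optimization problem to its budgeted counterpart, and your version is arguably the tighter of the two.
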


\begin{proof}
We will prove that the budgeted observer placement is NP-hard with a reduction
from the DRS problem (see Section~\ref{app:DRS_CHW}), i.e., given a
polynomial-time algorithm for the budgeted
observer placement problem, we will prove that we can solve the DRS problem in
polynomial time. 

Assume that we have a polynomial-time algorithm $\A$ that takes as input a
network $\G = (V,E)$ and a budget $k$, and outputs a set $\mathcal{O} \subseteq V$ of size $k$
such that $\Ps$ is maximized. Recall from Section~\ref{sec:low-variance} that given a
network $\G$ and a set $\mathcal{O}$, the probability $\Ps$ can be calculated in time
$O(n)$ where $n = |V|$ (it is enough to compute the $n$ distances vector with
respect to $\mathcal{O}$ and any reference observer $o_1\in \mathcal{O}$). Hence,
we will construct an algorithm for the DRS problem. 

\begin{algorithm}[H]
\begin{algorithmic}
\caption{Finds the minimum cardinality $DRS$ given an algorithm
to compute the $k$-nodes set that maximizes $\Ps$.}
\Require{Network $\G=(V, E)$}
\For{$k = 1, \ldots, |V|$}
\State $\mathcal{O} := \A(G,k)$ 
\State $P := \Ps(\mathcal{O})$
\If{$P = 1$} 
\State \Return $k$
\EndIf
\EndFor
\end{algorithmic}
\end{algorithm}

Since the full set $V$ always resolves the network, the program is well defined
(i.e., it always returns \emph{some} $k$). Moreover, it returns precisely the
minimum budget $k$ required in order to attain $\Ps = 1$. Lastly, it is clear
that the runtime is at most $O(n(p_\A(n) + n))$ where $p_\A(n)$ is the running
time of algorithm $\A$. Hence, we have a polynomial-time algorithm for the DRS problem.
\end{proof}

%%%%%%%%%%%%%
\section{High-Variance Source Estimation}
\label{app:estimator}
Denote by $T_\mathcal{O}$ the observed infection process. If the infection
delays are Gaussian, $\G$ is a tree and no 
prior information about the source position is available, the
maximum likelihood (ML) estimator is defined as $\hat{s} \in \displaystyle\arg \max_{\substack{s\in V}} \P(s|T_\mathcal{O})$, which 
has a tractable closed form \cite{Pinto12}.\footnote{Note that the model of
\cite{Pinto12} additionally assumed infected observers knew the neighbor that infected them;  this assumption is not required for our work.} 
In particular, given a
set of observers $\mathcal{O}=\{o_1, o_2, \ldots, o_k\} \subseteq V$, the vector of observed infection delays
$\tau = [t_2-t_1, \ldots, t_k-t_1] \in \R^{k-1}$ is distributed as
$\mathcal{N}(d_{s, \mathcal{O}}, \mathbf{\Lambda}_\mathcal{O})$ 
where $d_{s, \mathcal{O}}$ is the distance vector of Definition~\ref{distance_vector} and the
covariance matrix $\mathbf{\Lambda}_\mathcal{O}$ is  
\begin{equation}\label{Lambda}
\boldsymbol\Lambda_{\mathcal{O}, (k,i)}=\sigma^2 \left\{\begin{matrix}
\sum_{(u,v) \in \mathcal{P}(o_1,o_{k+1})} w_{uv}^2 & k=i\\ 
\sum_{(u,v) \in \mathcal{P}(o_1,o_{k+1})\cap \mathcal{P}(o_1,o_{i+1})}  w_{uv}^2& k \neq i,
\end{matrix}\right.
\end{equation}
with $\mathcal{P}(x,y)$ denoting the set of edges in the unique path between node 
$x$ and node $y$. Hence the ML estimator is
\begin{equation}\label{ML_tree}
\begin{split}
\hat{s}&\in\displaystyle\arg \max_{\substack{s\in V}}\frac{\exp
\Big(-\frac{1}{2}(\tau-\mathbf{d}_{s,\mathcal{O}})^\top{\mathbf{\Lambda}_\mathcal{O}}^{-1}
(\tau-\mathbf{d}_{s,\mathcal{O}})\Big)}{|\mathbf{\Lambda}_\mathcal{O}|^{1/2}}\\
&=\displaystyle\arg \max_{\substack{s\in V}} 
\Big[\mathbf{d}_s^{\top}{\mathbf{\Lambda}_\mathcal{O}}^{-1}
(\tau-\frac{1}{2}\mathbf{d}_{s,\mathcal{O}})\Big].
\end{split}
\end{equation}

On non-tree networks, the multiplicity of paths linking any two nodes makes
source estimation more challenging. As claimed in \cite{Pinto12}, the same estimator can be used as an approximation of the ML estimator for a
non-tree network by assuming that the diffusion happens only through a BFS (\textit{Breadth-First-Search}) tree
rooted at the (unknown) source. In this case the paths which appear in the
definition of the covariance matrix $\mathbf{\Lambda}_{\mathcal{O}}$ are computed on the
BFS tree
rooted at the candidate source considered. Hence $\mathbf{\Lambda}_{\mathcal{O}}$ depends on the
candidate source  and the ML estimator is 

\begin{equation}\label{eq:estimator}
\hat{s}_{\mathrm{bfs}} \in\displaystyle\arg \max_{\substack{s\in V}}\frac{\exp
\Big(-\frac{1}{2}(\tau-\mathbf{d}_{s,\mathcal{O}})^\top{\mathbf{\Lambda}^s_\mathcal{O}}^{-1}
(\tau-\mathbf{d}_{s,\mathcal{O}})\Big)}{|\mathbf{\Lambda}^s_\mathcal{O}|^{1/2}}.
\end{equation}

In this work, we adopt \eqref{eq:estimator} as the source estimator in the noisy
case. In fact, even
if our edge delays are truncated Gaussians, under the
hypothesis of sparse observations, we can apply the Central Limit Theorem (CLT)
to approximate the sum of the edge delays with Gaussian random variables: if all
edges have the same weight we can apply the CLT for i.i.d. random variables; if
this is not the case, we can apply Lyapunov's version of CLT.\footnote{Lyapunov
condition with $\delta=1$ is easily verified for a sequence of independent and
uniformly bounded random variables (see Example $27.4$ in \cite{Billi} for more
details). 
}

%Actually in this way the likelihood of nodes in the same equivalence classes
%maybe different: we compute an average likelihood and estimate that the source
%is in the class with the higher average likelihood. Then, once an equivalence 
%class for the source is estimated, we produce an estimated source by sampling 
% the prior probability $Q(\cdot)$ of the
%position of the source. In the remainder of this section, $Q$ is always assumed
%to be uniform.

%\section{Simulation Results: Expected Distance}
%
%In Figure~\ref{fig:distance} we show the results for the
%experiment of Section~\ref{sec:high-variance} in 
%terms of expected distance (both for number of edges and weighted distance).

\section{Additional Figures}\label{app:figures}

\begin{figure}[H]
\begin{center}
\subfigure[\textsc{lv-Obs}]{\includegraphics[width=.45\columnwidth]{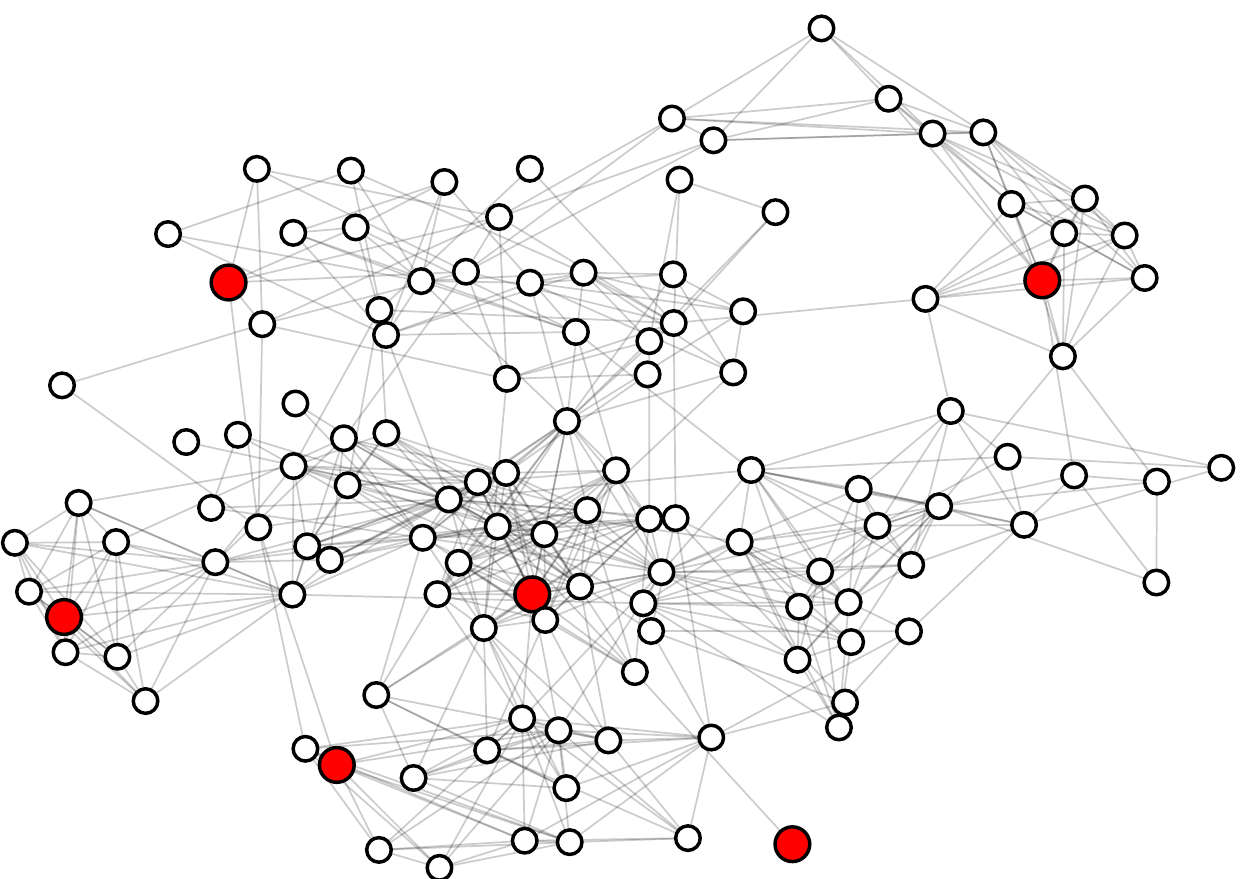}}
\subfigure[\textsc{hv-Obs},
$L/\tilde{\Delta}=0.5$]{\includegraphics[width=.45\columnwidth]{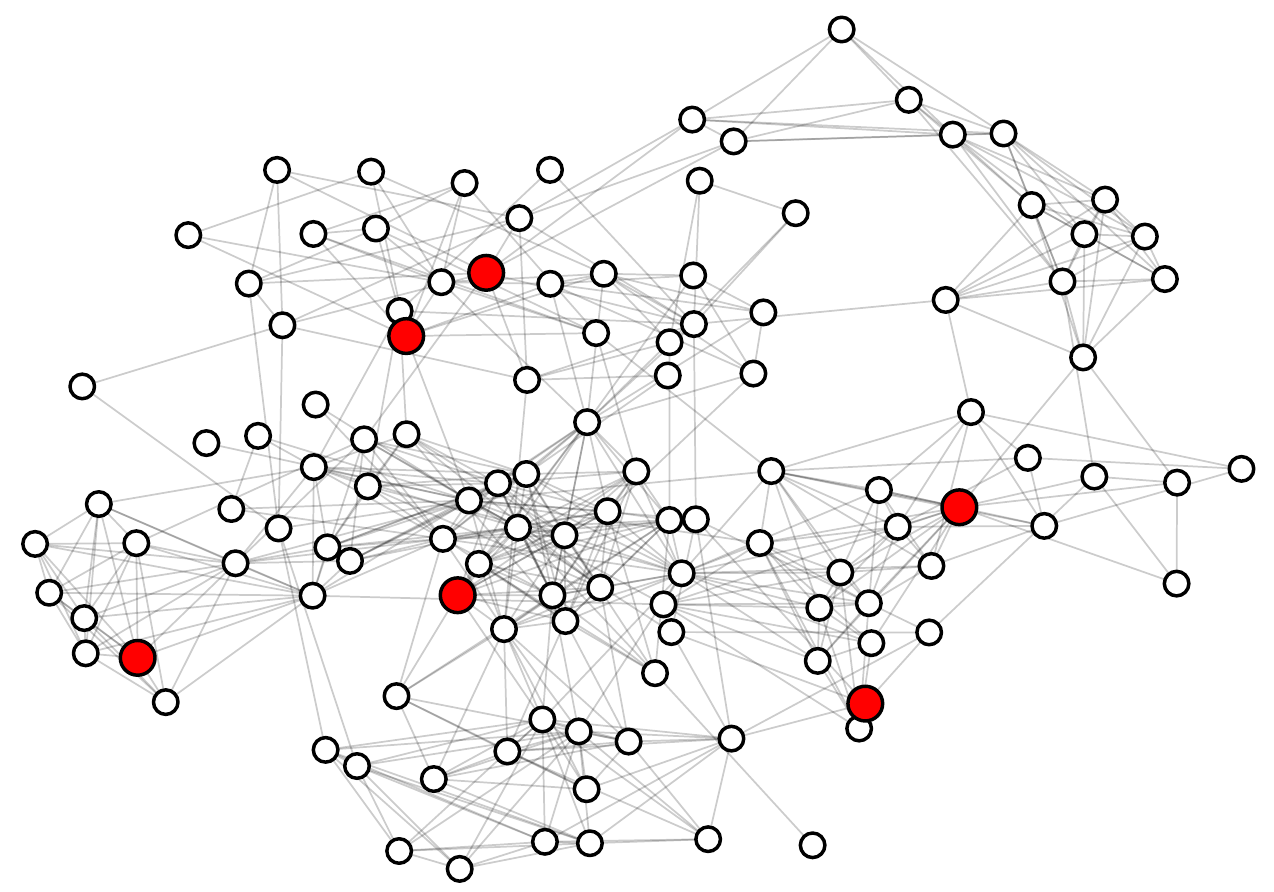}}
\caption{
%Observer placements for low- and high-variance regimes on the . 
The oberver placements of \textsc{lv-Obs} and
\textsc{hv-Obs} with $L/\tilde{\Delta}=0.5$ and $k=5\%$ on the F \& F network
are very different; \textsc{lv-Obs} contains leafs while
\textsc{hv-Obs} has shorter spacing.}\label{fig:ex_ff}
\end{center}
\end{figure}

\begin{figure}[H]
\subfigure[CR, 5\%
observers]{\includegraphics[width=0.47\columnwidth]{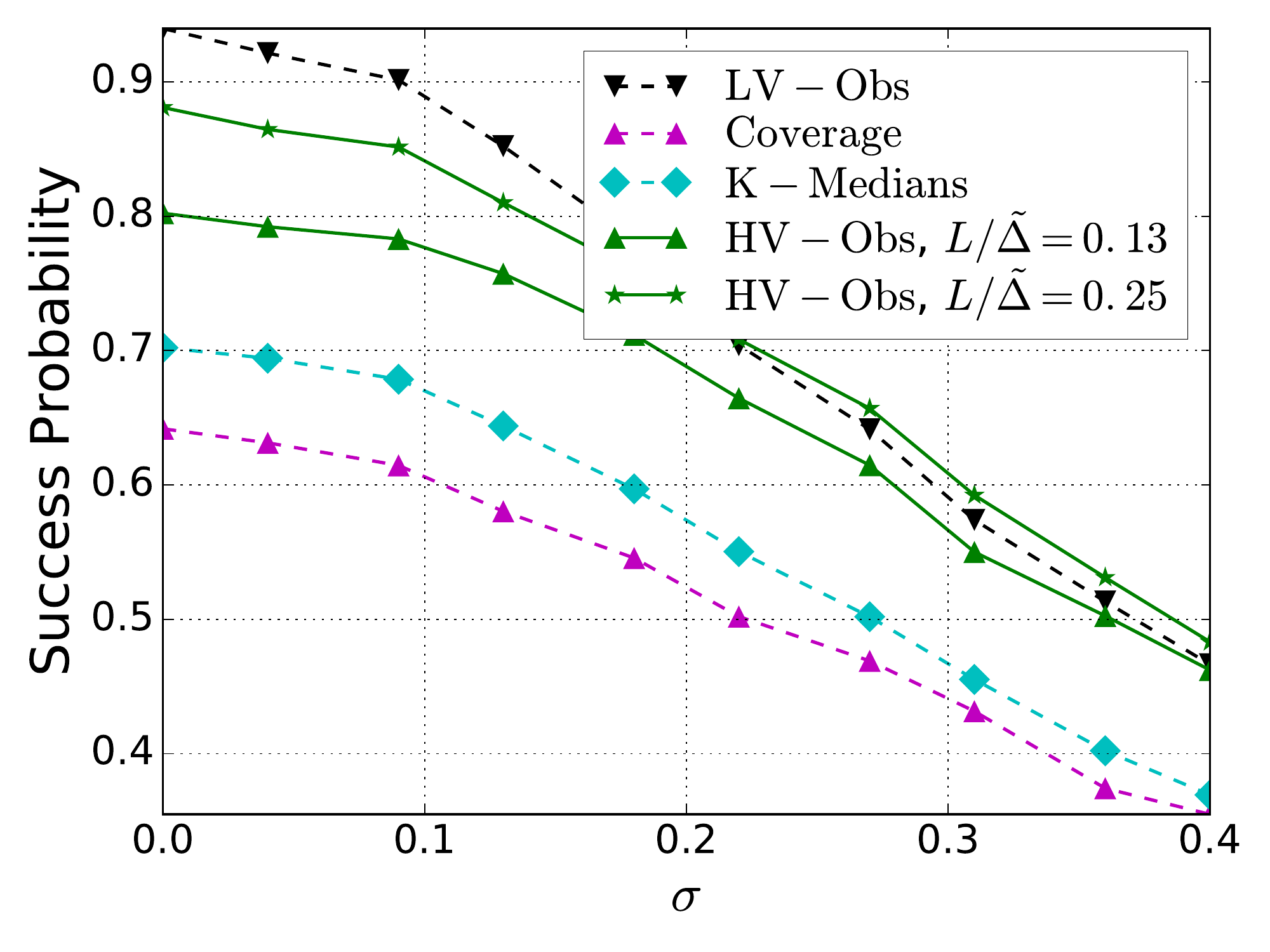}}
\subfigure[F \& F, 5\%
observers]{\includegraphics[width=0.47\columnwidth]{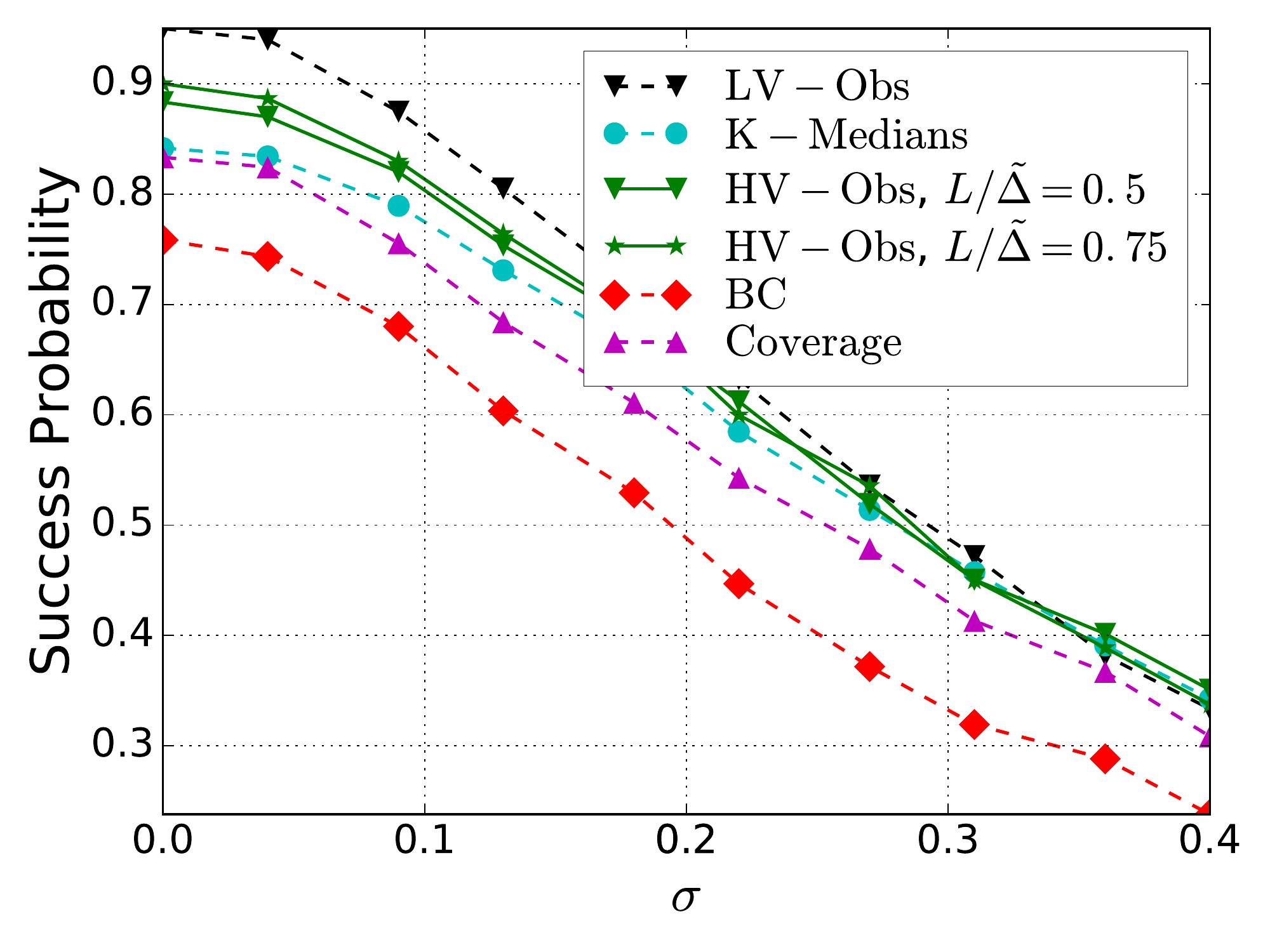}}\\
\subfigure[FB, 5\%
observers]{\includegraphics[width=0.47\columnwidth]{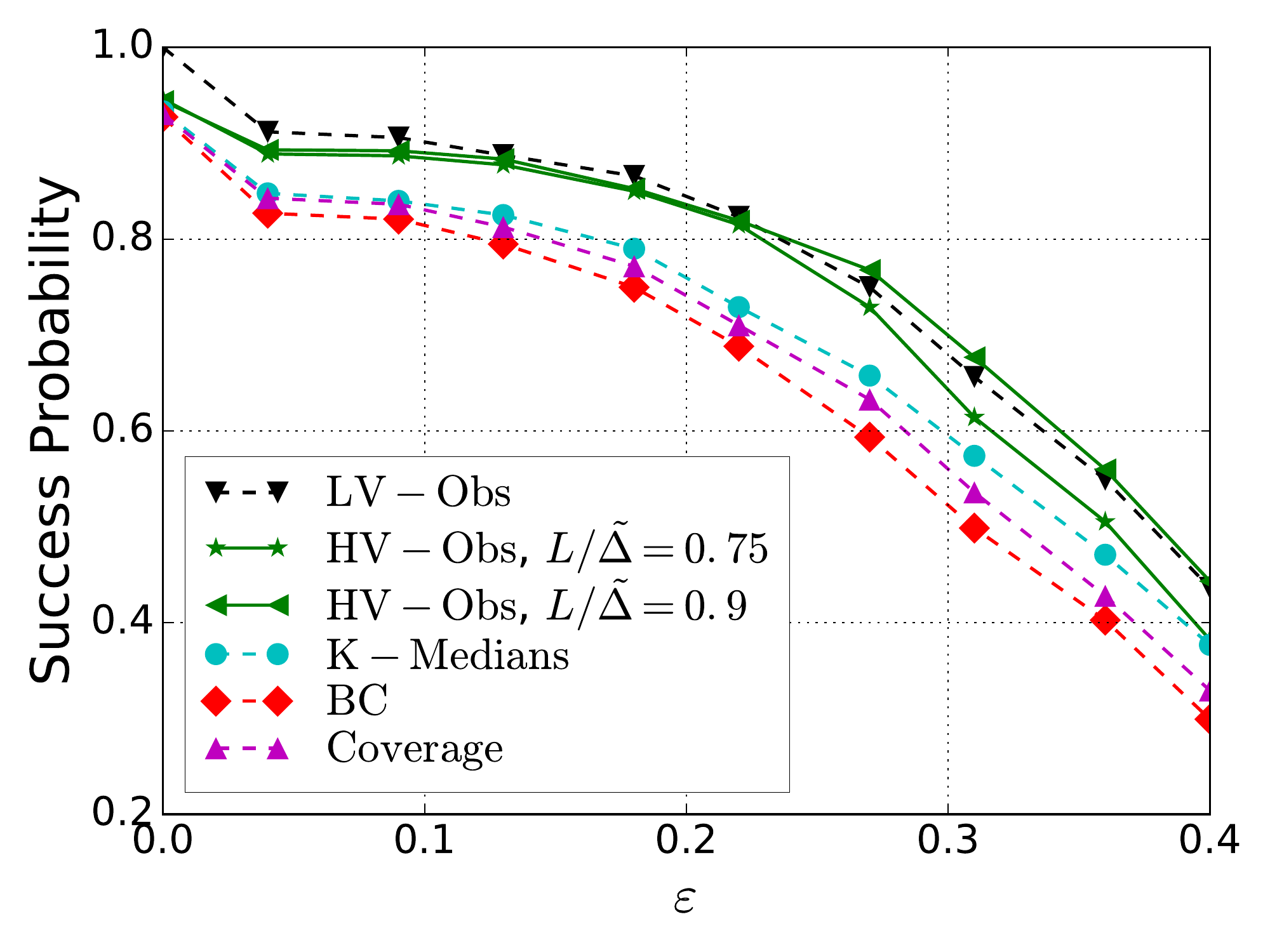}}
\caption{Success probability $\Ps$ as variance is increased on a uniform  transmission model (Section \ref{sec:exp_variance_models}).}
\label{fig:alt_model}
\end{figure}

\begin{figure}
\subfigure[California, 5\%
observers]{\includegraphics[width=0.47\columnwidth]{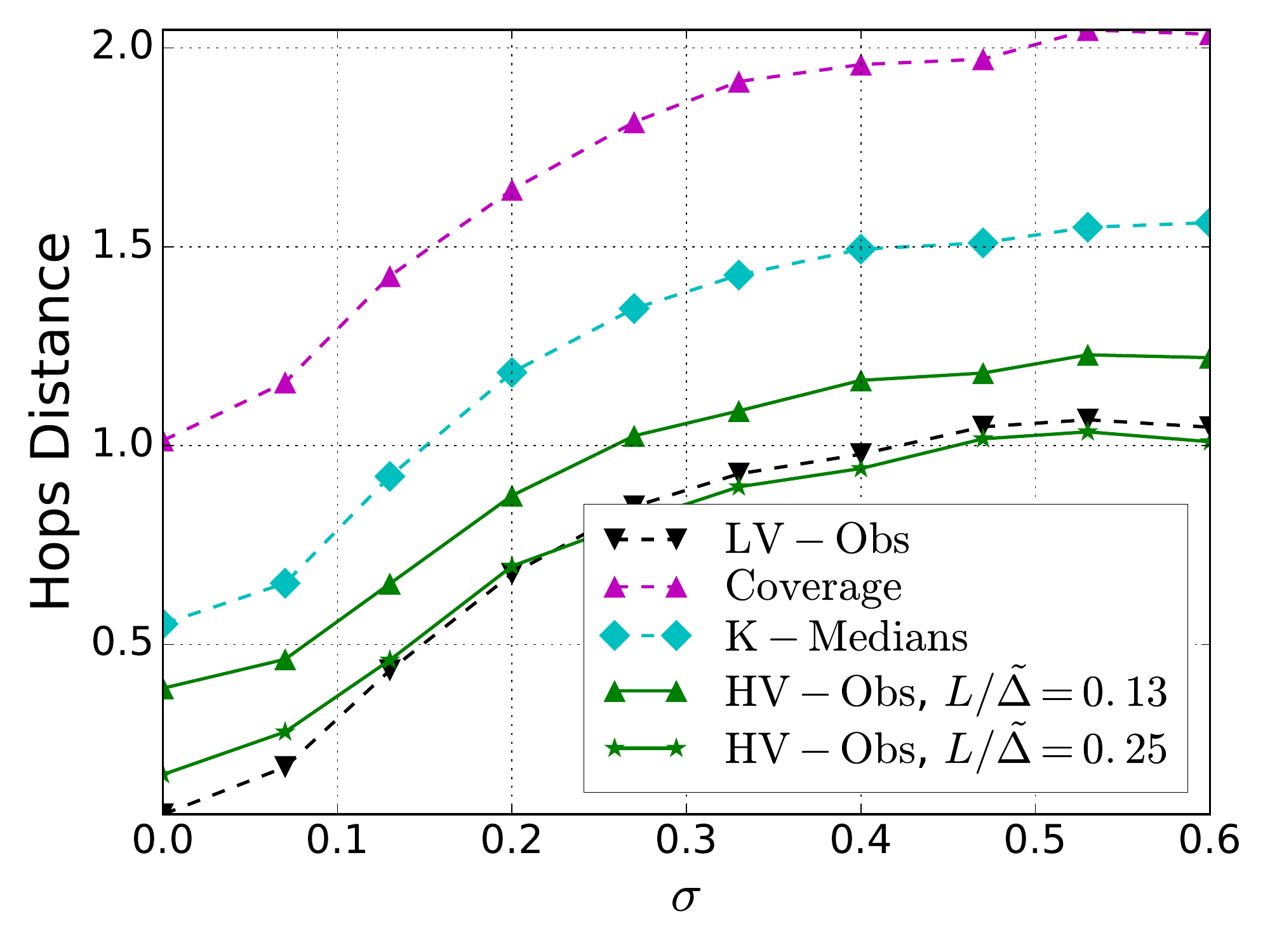}}
\subfigure[California, 5\%
observers]{\includegraphics[width=0.47\columnwidth]{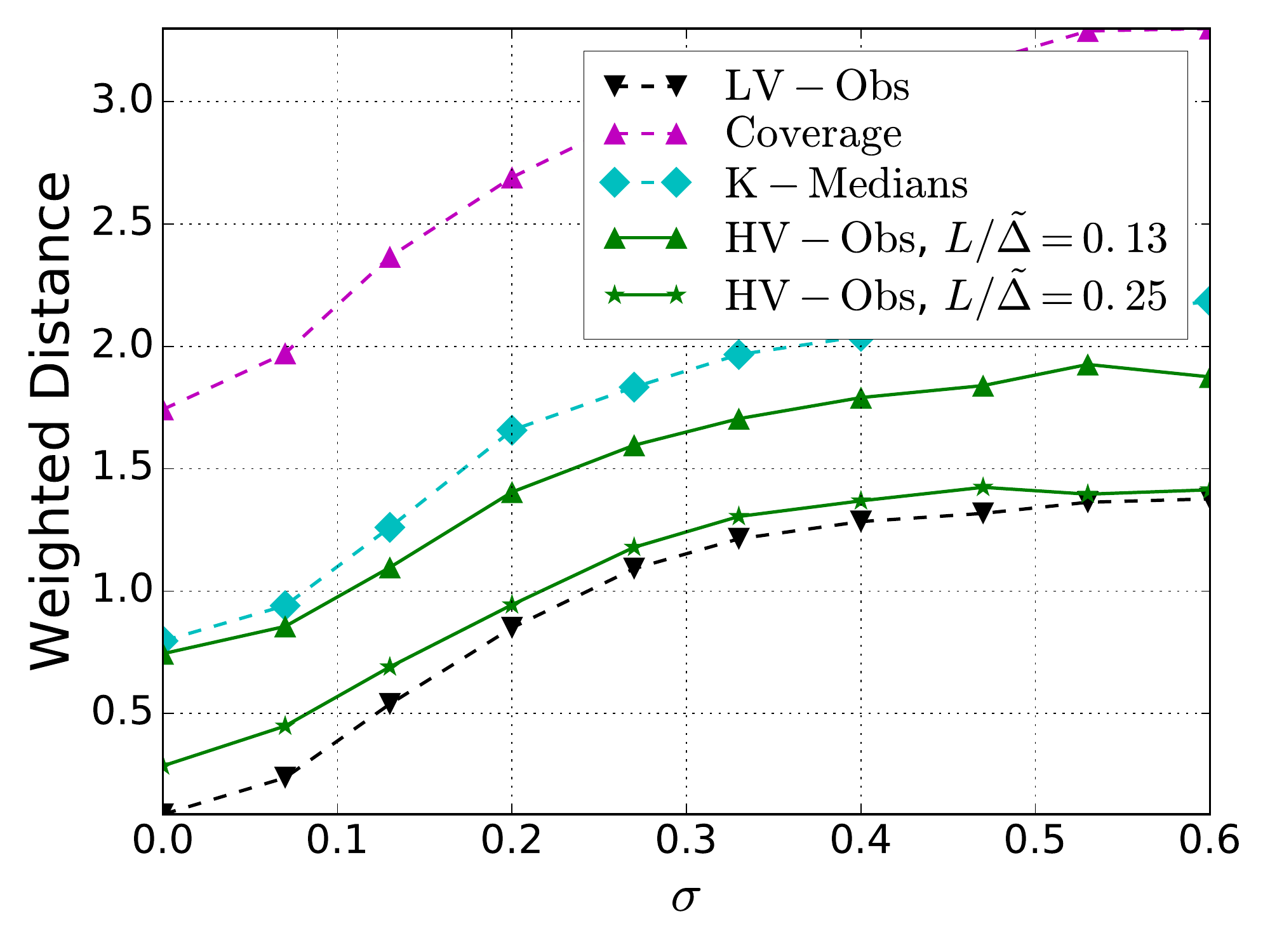}}\\
\subfigure[Friends \& Families, 5\%
observers]{\includegraphics[width=0.47\columnwidth]{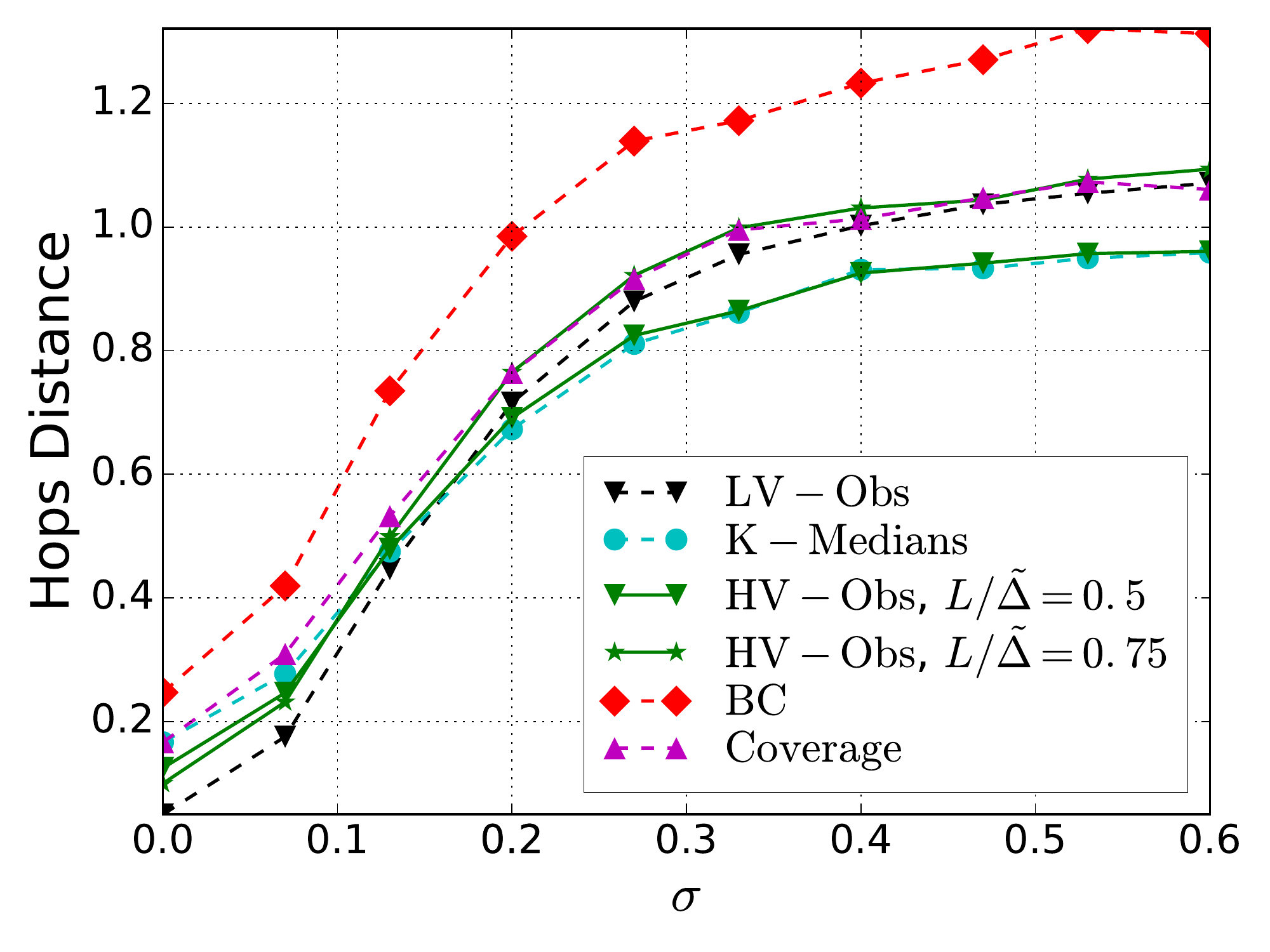}}
\subfigure[Friends \& Families, 5\%
observers]{\includegraphics[width=0.47\columnwidth]{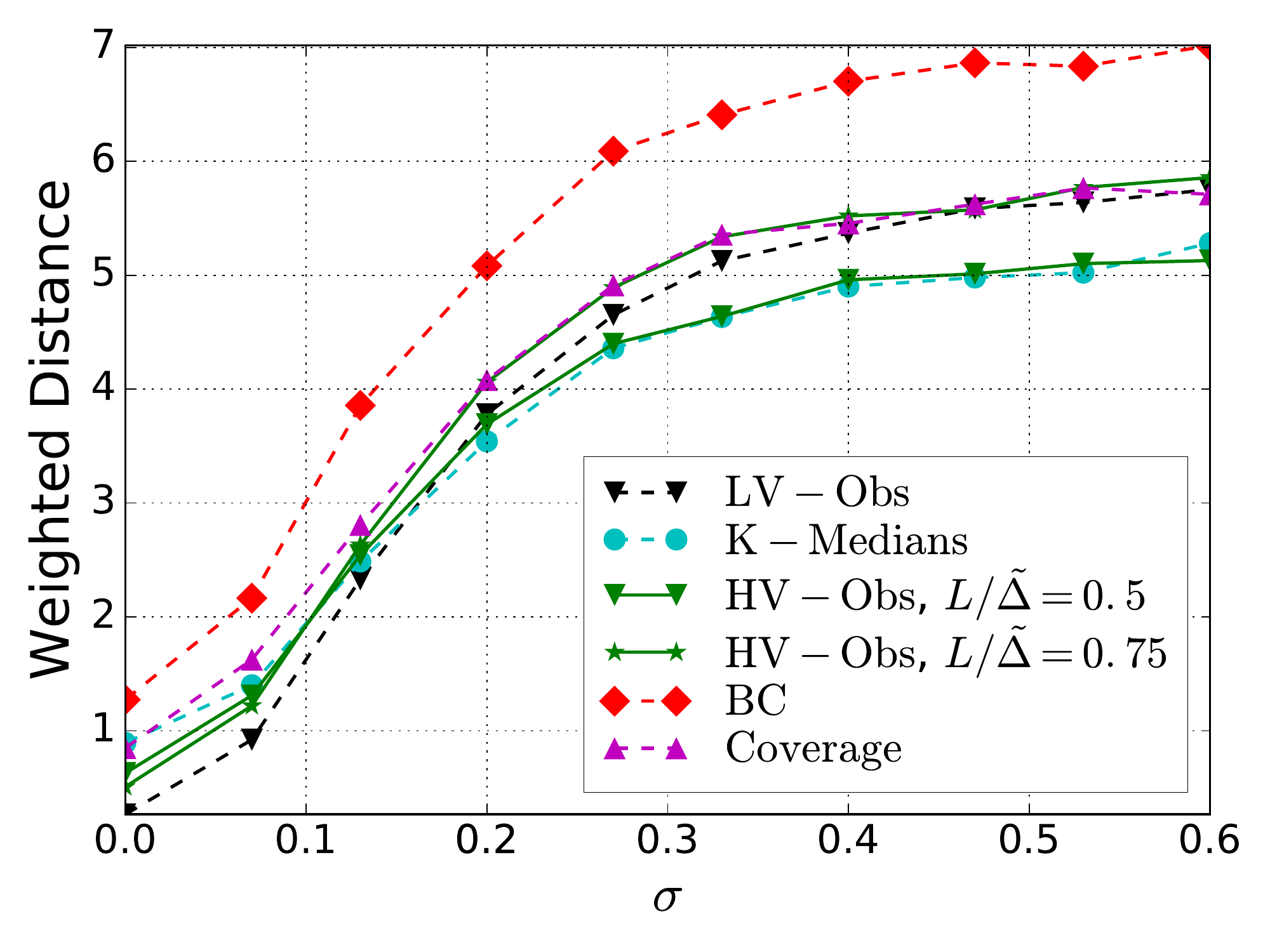}}\\
\subfigure[Facebook, 5\%
observers]{\includegraphics[width=0.47\columnwidth]{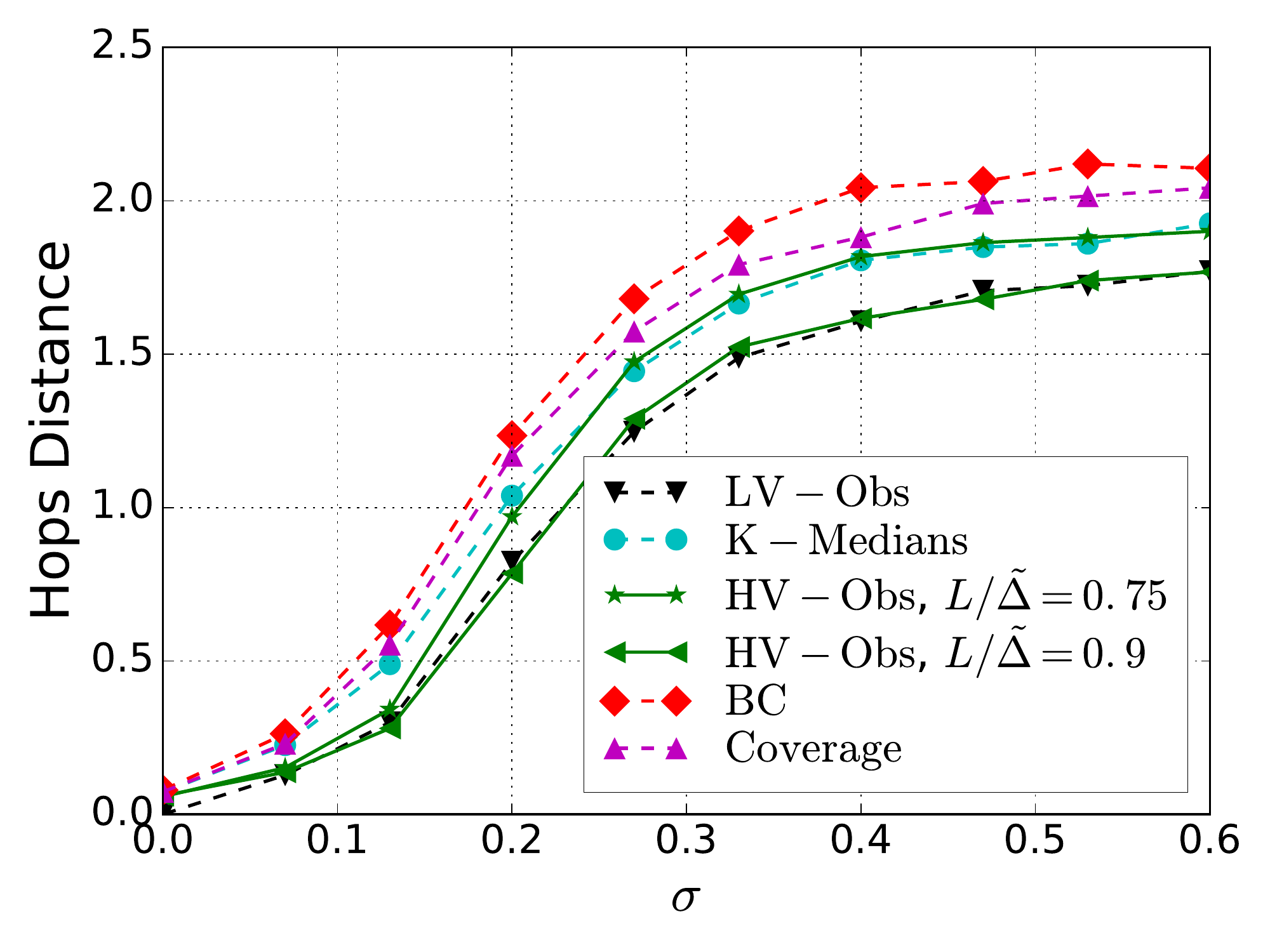}}
\subfigure[Facebook, 5\%
observers]{\includegraphics[width=0.47\columnwidth]{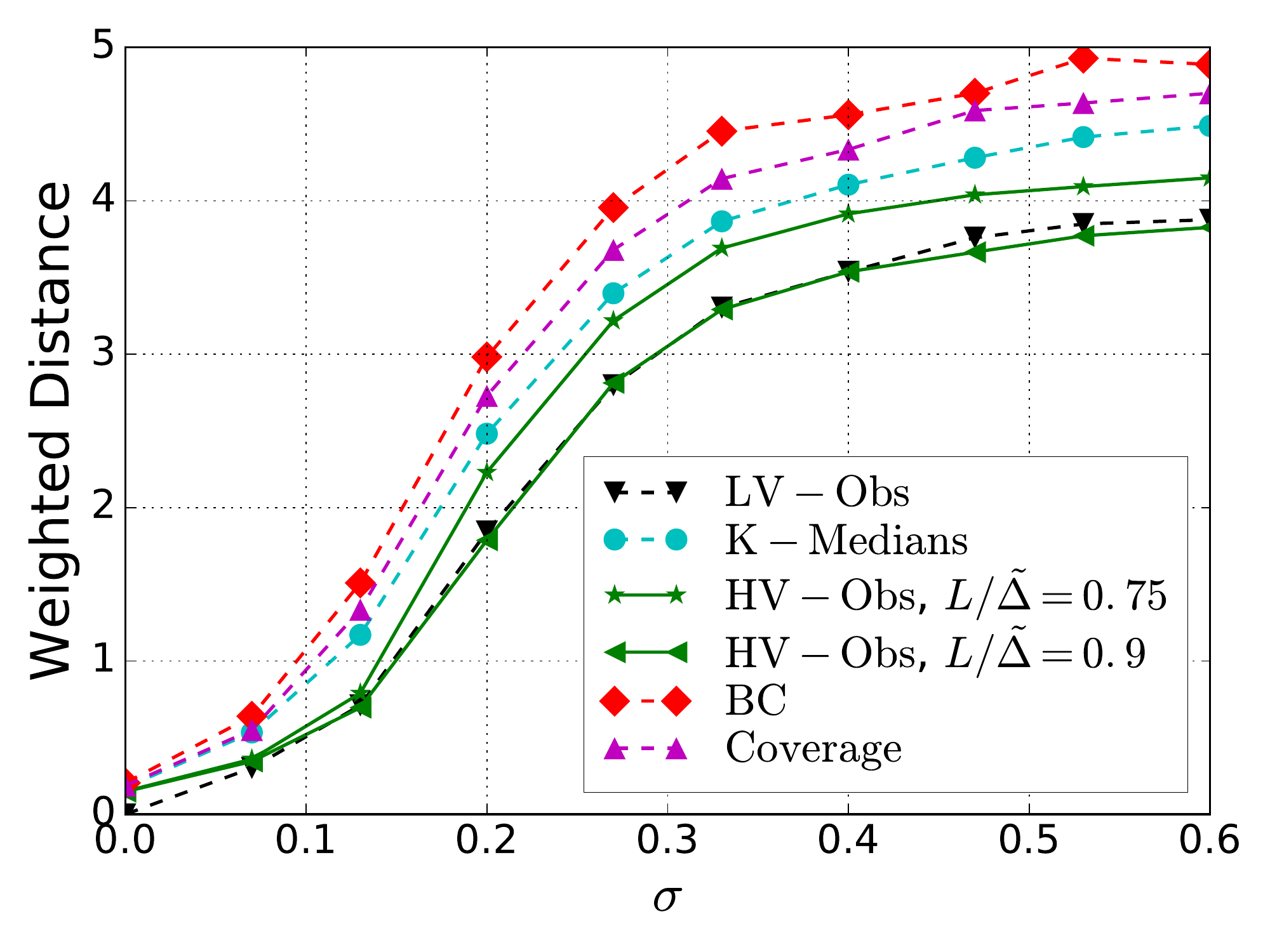}}
\caption{Expected distance in number
of edges (left column) and in weighted path length (right column) for the
datasets and setting of Section~\ref{sec:high-variance}}\label{fig:distance}
\end{figure}

\end{document}